\documentclass[11pt]{amsart}
\usepackage[letterpaper,margin=1.08in]{geometry}
\usepackage{amssymb}
\usepackage{dsfont}
\usepackage{microtype}
\usepackage{enumitem}
\usepackage[svgnames]{xcolor}
\usepackage{tikz-cd}
\usepackage[unicode]{hyperref}

\numberwithin{equation}{section}
\newtheorem{theorem}{Theorem}[section]
\newtheorem{lemma}[theorem]{Lemma}
\newtheorem{proposition}[theorem]{Proposition}
\newtheorem{corollary}[theorem]{Corollary}

\title[A linear-in-$q$ range for the MDS conjecture]{A linear-in-$q$ range of dimensions
for the MDS conjecture over $\mathbb F_q$ in odd characteristic}
\author{Xiang Fan}
\address{School of Mathematics, Sun Yat-sen University, Guangzhou 510275, China}
\email{fanx8@mail.sysu.edu.cn}
\date{}
\subjclass[2020]{Primary 51E21, 94B05; Secondary 05B20, 94B27}
\keywords{MDS conjecture, arcs, inclusion matrices, determinant relations,
positive characteristic, normal rational curves, Reed--Solomon codes}
\hypersetup{
  colorlinks=true,
  linkcolor=Navy,
  citecolor=Navy,
  urlcolor=Navy,
  filecolor=Navy,
  breaklinks=true,
  pdftitle={A linear-in-q range of dimensions for the MDS conjecture over Fq in odd characteristic},
  pdfauthor={Xiang Fan},
  pdfsubject={A linear-in-q range of dimensions for the MDS conjecture over Fq in odd characteristic}
}

\begin{document}
\begin{abstract}
Let $q$ be a power of an odd prime $p$.  We prove the MDS conjecture over
$\mathbb F_q$ in every dimension $k$ satisfying
\[
 2\leqslant k\leqslant B(p,q)
 \quad\text{or}\quad
 q+2-B(p,q)\leqslant k\leqslant q,
 \qquad
 B(p,q)=\left\lfloor\frac{(p-2)q+6p-10}{2p-3}\right\rfloor.
\]
For fixed $p$, the first interval is linear in $q$; together with duality,
the theorem covers an asymptotic proportion $1-1/(2p-3)$ of all dimensions.

The proof rests on a vanishing theorem for determinant relations over an
arbitrary field of characteristic $p>0$.  It yields full row rank for
Chowdhury's matrices over a larger range of arc sizes.  In particular, at
$|G|=2k-3+n$ it proves Chowdhury's full-row-rank conjecture without the
$q$-dependent restriction.  Specialization to $\mathbb F_q$, together with
the Ball--Lavrauw construction, gives the stated MDS range.

We also prove that, for every odd prime power $q$, every normal rational
curve in $\mathrm{PG}(N,q)$ is complete for $2\leqslant N\leqslant q-2$,
and every projective Reed--Solomon code of length $q+1$ and dimension
$2\leqslant k\leqslant q-2$ has covering radius $q-k$.
\end{abstract}
\maketitle

\section{Introduction}

The MDS conjecture is a central extremal problem in coding theory and
finite geometry; see \cite[Section~5]{BallLavrauw2019}.  A linear
$[n,k]_q$ code is MDS precisely when the columns of a generator matrix form
a $k$-dimensional arc \cite[Theorem~17]{BallLavrauw2019}; after
projectivization this is the usual finite-geometric formulation.  We use
the vector-space convention: an \emph{arc} in a $k$-dimensional vector
space over a field $\mathbb F$ is an ordered family of at least $k$
vectors such that every $k$ are linearly independent.  This is the
convention of \cite[p.~1]{Chowdhury2016}; see also
\cite[Section~2]{BallLavrauw2019} for the projective formulation.  The ordering serves only to fix inherited orders and
determinant signs and does not affect the arc property or its size.  For
$2\leqslant k\leqslant q$, the \emph{MDS conjecture} asserts that a
$k$-dimensional arc over $\mathbb F_q$ has at most $q+1$ elements, apart
from the two classical even-characteristic exceptions.  The conjecture is
known over prime fields \cite[Theorem~1.7]{Ball2012}, but remains open in
general over proper extension fields; see \cite{BallLavrauw2019} for a
survey.

Let $q=p^h$, where $p$ is an odd prime and $h\geqslant2$.  The general
unconditional ranges over proper extension fields recalled below are on
the square-root scale.  When $h$ is odd, Voloch proved that the
MDS conjecture holds throughout the range
$k<\frac14\sqrt{pq}-\frac{29}{16}p+4$
\cite[Theorem~3]{Voloch1991}; when $h$ is even, Ball and Lavrauw proved it
throughout the range $k\leqslant\sqrt q-\frac{\sqrt q}p+2$
\cite[Corollary~4]{BallLavrauw2018}.  Further results under additional
hypotheses are summarized in \cite[Section~19]{BallLavrauw2019}.  Our main
theorem replaces these general square-root-scale ranges by a range linear in
$q$.

\begin{samepage}
\begin{theorem}\label{thm:mds-range}
Let $q$ be a power of an odd prime $p$. Define $B(p,q)$ and the range of $k$ by
\begin{equation}\label{intro:main-range}
 B(p,q):=\left\lfloor\frac{(p-2)q+6p-10}{2p-3}\right\rfloor,
 \qquad 2\leqslant k\leqslant B(p,q)\quad\text{or}\quad
 q+2-B(p,q)\leqslant k\leqslant q.
\end{equation}
Then the maximum size of a $k$-dimensional arc over $\mathbb F_q$ is
$q+1$. Equivalently, the MDS conjecture over $\mathbb F_q$ holds in these
dimensions.
\end{theorem}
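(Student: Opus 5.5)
The proof splits into a reduction to a rank statement and a verification of that rank statement in the required range. The reduction is the Ball--Lavrauw construction. By duality (a $k$-dimensional arc of size $n$ gives a $(n-k)$-dimensional arc of size $n$), it suffices to treat $2\leqslant k\leqslant B(p,q)$. The second interval then follows: an arc of size $q+2$ in dimension $k\geqslant q+2-B(p,q)$ dualizes to an arc of size $q+2$ in dimension $q+2-k\leqslant B(p,q)$, and $q+2-k\geqslant 2$ because $k\leqslant q$.

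Next, it suffices to rule out arcs of size exactly $q+2$, since any larger arc contains a sub-arc of that size. The existence of an arc of size $q+1$ (the normal rational curve) shows the maximum is at least $q+1$. We may also assume $k\leqslant p$ is false wherever Ball's prime-field-style result $k\leqslant p$ already applies, so only $k>p$ needs new input, although the argument below should not need that restriction.

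For the main step, suppose $G$ is a $k$-dimensional arc over $\mathbb F_q$ with $|G|=q+2$. Ball and Lavrauw showed that such an arc forces a nontrivial linear dependence among certain determinant functions, equivalently a vector in the left kernel of Chowdhury's inclusion-type matrix attached to $G$ and a parameter $n$. Concretely, one takes a sub-arc $S\subset G$ of size $k-2+n$ and uses the interpolation (Segre-type lemma of tangents) coefficients, which must satisfy relations indexed by subsets. So if Chowdhury's matrix has full row rank for the relevant $|G|$ and $n$, no such arc exists.

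I would therefore invoke the paper's vanishing theorem for determinant relations over a field of characteristic $p$, in its consequence stated before this theorem: full row rank of Chowdhury's matrices whenever $|G|$ lies in the enlarged admissible range. In particular this holds at $|G|=2k-3+n$ with no $q$-restriction. The task is to choose $n$ so that the sub-arc needed by the Ball--Lavrauw argument has size at most $q+2$ and lies in the admissible range. The admissible range presumably has the shape $|G|\leqslant 2k-3+n$ together with a $p$-adic condition such as $n\leqslant$ something like $(k-\dots)/(p-1)$, coming from Lucas-type nonvanishing of binomials modulo $p$.

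Solving the two inequalities, namely $|G|=q+2$ fitting into the range and $n$ constrained by $p$, for the largest $k$ should produce exactly $k\leqslant((p-2)q+6p-10)/(2p-3)$. The denominator $2p-3$ arises from eliminating $n$ between a constraint of slope $1$ and one of slope $1/(p-1)$-type. Finally one checks the floor and the small edge cases $k=2,3$ directly: $k=2$ is trivial, and $k=3$ is Segre or Bose.

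I expect the main obstacle to be the bookkeeping in the last step: matching the exact admissible range from the vanishing theorem with the Ball--Lavrauw requirement (which sizes of sub-arc and which $n$ are needed) so that the resulting bound is precisely $B(p,q)$ rather than something off by one. I would first handle that by writing $k$ and $n$ in terms of $q$ modulo $2p-3$ and checking boundary values.
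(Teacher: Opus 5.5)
Your outline has the same architecture as the paper's proof. You rule out $(q+2)$-arcs for $2\leqslant k\leqslant B(p,q)$, get the second interval by duality, and play a nonzero left-kernel vector of Chowdhury's matrix from the Ball--Lavrauw construction against the full-row-rank theorem. The gap is in the one step that actually produces the number $B(p,q)$. You leave it as ``should produce exactly'', and the form you guess for the rank condition is wrong, so the proposal as written does not prove the stated range.

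Here is the missing step. In the Ball--Lavrauw construction with $t=k-3$, the arc has size $k+t+n$ and sits inside a containing arc of size $q+k-1-t=q+2$. Taking the arc to be $A$ itself forces $n=e:=q-2k+5$, so $\lvert A\rvert=q+2=2k-3+e$ is exactly Chowdhury's endpoint. Passing to smaller sub-arcs only decreases $n$, which never helps. At $m=2k-3+n$, the full-row-rank range $k+n-1\leqslant m\leqslant k+n-1+\min\{k-2,(n+1)(p-2)\}$ reduces to $k-2\leqslant(n+1)(p-2)$, i.e.\ $k\leqslant p+n(p-2)$. This is a \emph{lower} bound on $n$ with slope $p-2$, not an upper bound of the form $n\leqslant(k-c)/(p-1)$. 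Substituting $n=e$ gives
\[
 k\leqslant p+(q-2k+5)(p-2)\iff(2p-3)k\leqslant(p-2)q+6p-10,
\]
which is $k\leqslant B(p,q)$ because $k$ is an integer. The denominator is $1+2(p-2)$, coming from the coefficient $-2$ of $k$ in $e$. With the constraint you guessed, the elimination would not give $B(p,q)$; with $n=q-2k+5$ it even yields a lower bound on $k$. So checking residues modulo $2p-3$ would not repair it.

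Two smaller points. First, the Ball--Lavrauw step needs $e\geqslant0$. This follows because $k\geqslant3$, $p\geqslant3$ and $k\leqslant p+e(p-2)$ together exclude $e<0$. Second, you must check that their system $P_e\lambda=0$ is precisely the left-kernel condition for $M_A^{\uparrow e}$. Rescaling representatives multiplies row $B$ by $(\prod_{b\in B}c_b)^e$ and column $(U,S)$ by $\prod_{u\in U}c_u$. After that, each row of $P_e$ is the pairing of $\lambda$ with a column $(U,S)$, and the terms with $B\cap U\neq\varnothing$ vanish. Your separate treatment of $k\leqslant p$ and of $k=3$ via Bose--Segre is harmless but unnecessary, since the uniform argument covers every $3\leqslant k\leqslant B(p,q)$.
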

\end{samepage}

For fixed $p$,
\[
 B(p,q)=\frac{p-2}{2p-3}\,q+O_p(1),
\]
so the first interval in Theorem~\ref{thm:mds-range} is linear in $q$.
The exact number of dimensions left uncovered after adjoining the dual
interval is $\max\{0,q+1-2B(p,q)\}$, which equals
$\frac{q}{2p-3}+O_p(1)$ as $q$ tends to infinity through powers of the
fixed prime $p$.  Thus our theorem proves the conjecture in an asymptotic
proportion $1-1/(2p-3)$ of all dimensions.  For comparison, Ball and
De~Beule proved the conjecture for $k\leqslant2p-2$ over proper extensions
\cite[Theorem~5.2]{BallDeBeule2012}; more generally, for fixed odd $p$,
the general unconditional ranges recalled above cover $O_p(\sqrt q)$
dimensions even after adjoining their duals.  Thus our result changes the
scale from $q^{1/2}$ to $q$, rather than merely enlarging a previous range
by a constant.

The numerical threshold in \eqref{intro:main-range} is not itself new.
Ball had already identified it conditionally: a conjectured weight-one
column-space condition for the relevant matrix would exclude $(q+2)$-point
arcs whenever
$k\leqslant((p-2)q+6p-10)/(2p-3)$; see
\cite[Conjecture~1 and the paragraph following it]{Ball2018Extending}.
A closely related conditional formulation is due to Chowdhury, who
introduced the matrices $M_G^{\uparrow n}$
\cite[Definition~1.3]{Chowdhury2016} and conjectured
\cite[Conjecture~1.9]{Chowdhury2016} full row rank at
$\lvert G\rvert=2k-3+n$ under
\[
 k\leqslant
 \min\left\{p+n(p-2),\frac{q+4-n}2\right\}.
\]
We reach Ball's threshold unconditionally by proving a stronger rank
theorem over arbitrary fields of positive characteristic.  For a
$k$-dimensional arc $G$ of size $m$ over a field $\mathbb F$ of characteristic
$p>0$, Theorem~\ref{gl:full-rank} proves that $M_G^{\uparrow n}$ has full row rank throughout
\[
 k+n-1\leqslant m\leqslant
 k+n-1+\min\{k-2,(n+1)(p-2)\}.
\]
Our full-row-rank theorem requires no finiteness assumption on $\mathbb F$.  At the
endpoint $m=2k-3+n$, its condition becomes
$k\leqslant p+n(p-2)$, proving Chowdhury's full-row-rank conjecture without
the additional $q$-dependent restriction.  Specializing to $\mathbb F_q$
and taking $n=e=q-2k+5$, the inequality
$k\leqslant p+e(p-2)$ is equivalent to $k\leqslant B(p,q)$; with
$r=k-p$, it is $r\leqslant e(p-2)$.  From a hypothetical $(q+2)$-point arc, the construction in the proof of
\cite[Theorem~12.1]{BallLavrauw2019} produces a nonzero left-kernel vector
for a matrix that Theorem~\ref{gl:full-rank} forces to have full row rank.  This gives the
finite-field contradiction needed for the range $k\leqslant B(p,q)$.  Chowdhury had
previously shown that her conjecture implies the MDS conjecture in its
original $(q+1)$-point parameter range
\cite[Corollary~1.10]{Chowdhury2016}.

The distinction between the $(q+1)$- and $(q+2)$-point cases can matter at the
center.  Chowdhury's original parameter range yields the smaller threshold
obtained from $B(p,q)$ by replacing the numerator $(p-2)q+6p-10$ with
$(p-2)q+5p-8$; denote it by $B_{\mathrm{Ch}}(p,q)$.  Direct use of our
arbitrary-field theorem at size $q+2$ gives $B(p,q)$, with
$B(p,q)-B_{\mathrm{Ch}}(p,q)\in\{0,1\}$.  Although the two thresholds differ by at most one, this one-step gain can
close the central pair of dimensions.  For example,
$B_{\mathrm{Ch}}(7,49)=24$ whereas $B(7,49)=25$, so
Theorem~\ref{thm:mds-range} covers dimensions $25$ and $26$ and hence
proves the MDS conjecture over $\mathbb F_{49}$.

Our MDS range also has two consequences in finite geometry and coding
theory.  We state them here and defer the application-specific definitions
to Section~\ref{sec:nrc-prs-applications}.  In Corollary~\ref{cor:prs-covering-radius},
$\operatorname{PRS}_q(k)$ denotes the projective Reed--Solomon code over
$\mathbb F_q$ of length $q+1$ and dimension $k$, and $\rho$ its covering
radius.

\begin{corollary}\label{cor:nrc-complete}
Let $q$ be an odd prime power and let $2\leqslant N\leqslant q-2$.  Every
normal rational curve in $\mathrm{PG}(N,q)$ is a complete $(q+1)$-arc.
\end{corollary}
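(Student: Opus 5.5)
The plan is to reduce completeness of a normal rational curve to the non-existence of a $(q+2)$-point arc, and then invoke Theorem~\ref{thm:mds-range}. Let $\mathcal C\subset\mathrm{PG}(N,q)$ be a normal rational curve, normalized to the $q+1$ points $(1,t,\dots,t^N)$ for $t\in\mathbb F_q$ together with $(0,\dots,0,1)$. It is classical that $\mathcal C$ is a $(q+1)$-arc. Suppose $\mathcal C$ is not complete, so that some point $P$ extends it. Then lifting $\mathcal C\cup\{P\}$ to vectors gives an $(N+1)$-dimensional arc over $\mathbb F_q$ with $q+2$ elements. If $k:=N+1$ lies in the range \eqref{intro:main-range}, this contradicts Theorem~\ref{thm:mds-range} directly. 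Note that the hypothesis $2\leqslant N\leqslant q-2$ gives $3\leqslant k\leqslant q-1$, so this disposes of all $N$ with $N+1\leqslant B(p,q)$ or $N+1\geqslant q+2-B(p,q)$.

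The remaining $N$ lie in the central gap, which contains $\max\{0,q+1-2B(p,q)\}$ dimensions. For these I would exploit the special structure of $\mathcal C$ rather than arbitrary arcs. The hyperplane through the curve points with parameters in an $N$-subset $T\subset\mathbb F_q\cup\{\infty\}$ has a coordinate equation whose coefficients are, up to sign, the elementary symmetric functions $e_j(T)$, with the obvious truncation when $\infty\in T$. So $P=(a_0,\dots,a_N)$ extends $\mathcal C$ if and only if
\[
 \Phi_P(T):=\sum_{j=0}^{N}(-1)^{j}a_{N-j}\,e_j(T)\neq 0
\]
for every such $T$. Restricting to $T\subset\mathbb F_q$, $\Phi_P(T)$ is the value at the roots of $\prod_{t\in T}(X-t)$ of a fixed linear functional on polynomials of degree $\leqslant N$. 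The goal is to show that this functional must vanish on some monic polynomial that splits into distinct linear factors over $\mathbb F_q$.

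I would try two routes, in this order. The first is to check whether the dual curve helps. The dual of the code of $\mathcal C$ is again a projective Reed--Solomon code, and extension of $\mathcal C$ in $\mathrm{PG}(N,q)$ corresponds to extension of the NRC in $\mathrm{PG}(q-1-N,q)$. By itself this only reproduces the symmetry already built into \eqref{intro:main-range}. The second route, which is where the real work lies, is a counting argument. Using $\sum_{t\in\mathbb F_q}t^i=0$ for $i\not\equiv0\pmod{q-1}$, I would compute character sums of the form $\sum_{T}\Phi_P(T)^{q-1}$ over $N$-subsets of $\mathbb F_q$. If $\Phi_P$ never vanished, each term would equal $1$, so the sum would be $\binom{q}{N}$ modulo $p$. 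Expanding the power instead should produce a different value. Alternatively, one can use Segre's lemma of tangents, as in \cite[Section~12]{BallLavrauw2019}, in the form adapted to $\mathcal C\cup\{P\}$: the tangent polynomials of $\mathcal C$ are explicit, so the interpolation identity from the proof of \cite[Theorem~12.1]{BallLavrauw2019} becomes a concrete polynomial identity in $a_0,\dots,a_N$ that should be visibly contradictory.

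The main obstacle is this central range. Theorem~\ref{thm:mds-range} is unavailable there, and the explicit argument has to work uniformly in $N$ using only that $q$ is odd. Before committing to the counting argument, I would also check the classical literature on extending Reed--Solomon codes (Seroussi--Roth; Storme--Thas). It may already handle precisely the middle dimensions, or all $N$ under an extra hypothesis that the present theorem removes. In that case the proof would combine the cited partial result with Theorem~\ref{thm:mds-range} to cover every $2\leqslant N\leqslant q-2$.
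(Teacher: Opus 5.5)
Your reduction is correct: with Theorem~\ref{thm:mds-range} it settles every $N$ with $N+1\leqslant B(p,q)$ or $N+1\geqslant q+2-B(p,q)$. This already includes all prime $q$, since $2B(p,p)\geqslant p+1$. The gap is the central range $B(p,q)<N+1<q+2-B(p,q)$. It is nonempty in general: for $q=27$ it is $12\leqslant N+1\leqslant17$, and for $q=81$ it is $30\leqslant N+1\leqslant53$. There you list candidate routes but carry none of them out, and neither is close to a proof.

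\emph{Character sums.} If $\Phi_P$ never vanished, the sum would be $\binom qN$. By Lucas' theorem, $\binom qN\equiv0\pmod p$ for $0<N<q$. Since the sum always equals, modulo $p$, the number of $T$ with $\Phi_P(T)\neq0$, the method needs the number of $N$-subsets whose hyperplane passes through $P$ to be prime to $p$. You give no reason for that.

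\emph{Lemma of tangents.} This is exactly the Ball--Lavrauw construction that Theorem~\ref{thm:mds-range} already exploits through Chowdhury's matrices. Its reach is the condition $k\leqslant p+e(p-2)$ with $e=q-2k+5$, which is precisely what fails in the central range. So ``visibly contradictory'' is the unproved part.

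The missing idea is to let an existing low-dimensional completeness theorem cover everything below the top range, instead of attacking the middle directly. Storme's theorem \cite[Theorem~3.1]{Storme1992} applies for odd $q$. Call a proper subset of a nondegenerate conic in $\mathrm{PG}(2,q)$ \emph{almost complete} if it extends to a larger arc only by points of the conic. If the conic has an almost complete subset of size $s$, then every normal rational curve in $\mathrm{PG}(N,q)$ is complete for $2<N<q-s+2$.

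Let $t(q)$ be the least such $s$. For $q\geqslant243$, the bound $t(q)<1.835\sqrt{q\ln q}$ of \cite[Theorem~4.12]{BartoliDavydovMarcuginiPambianco2018} applies. For $q\in\{9,25,49,81,121,125,169\}$, their tabulated small almost complete subsets apply. Together these give $t(q)\leqslant B(p,q)+1$ for every odd nonprime $q\neq27$ (Lemma~\ref{lem:tq-splice}).

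So each $3\leqslant N\leqslant q-2$ falls into one of two cases. Either $N<q+2-t(q)$, and Storme's theorem applies. Or $N+1\geqslant q+2-B(p,q)$, and the second interval of Theorem~\ref{thm:mds-range} applies. The case $N=2$ follows from $B(p,q)\geqslant5$.

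The case $q=27$ lies outside this splice. It needs the known computer result \cite{ChaoKaneta2001} that arcs in $\mathrm{PG}(N,27)$, $2\leqslant N\leqslant25$, have at most $28$ points.

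Your instinct to search the literature pointed the right way. But without this specific result, and the quantitative check that its range $N\leqslant q-O(\sqrt{q\ln q})$ meets the linear-in-$q$ MDS range, the central dimensions remain unproved.
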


\begin{corollary}\label{cor:prs-covering-radius}
Let $q$ be an odd prime power and let $2\leqslant k\leqslant q-2$.  Then
\[
 \rho\bigl(\operatorname{PRS}_q(k)\bigr)=q-k.
\]
\end{corollary}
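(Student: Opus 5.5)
The covering radius will be computed from a lower bound and an upper bound. The lower bound $\rho\geqslant q-k$ is the only place where the new MDS range (Theorem~\ref{thm:mds-range}, through Corollary~\ref{cor:nrc-complete}) enters. The upper bound $\rho\leqslant q-k$ is the Singleton-type bound for MDS codes, which holds for every linear MDS code of redundancy $n-k$.

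For the upper bound, let $C=\operatorname{PRS}_q(k)$, of length $n=q+1$. It is MDS, so any $k$ coordinates form an information set. Given a word $y\in\mathbb F_q^{q+1}$, pick $k$ coordinates and take the unique codeword $c$ agreeing with $y$ on them. Then $d(y,c)\leqslant n-k=q+1-k$, which only gives $\rho\leqslant q+1-k$. To gain one more, I would use $k+1$ coordinates. Choose $c\in C$ agreeing with $y$ on $k-1$ fixed coordinates; these $c$ form an affine line $c_0+\mathbb F_q c_1$, where $c_1$ is a minimum-weight codeword of weight $n-k+1=q+2-k$, vanishing on those coordinates. On each of the remaining $q+2-k$ positions, exactly one value of $t\in\mathbb F_q$ makes $c_0+tc_1$ agree with $y$. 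Since $q+2-k\geqslant 2$ for $k\leqslant q$, either some $t$ achieves two agreements, or all the $t$-values are distinct. In the second case, no two positions give the same $t$, and a counting argument is needed. Rather than chase this, I would use the standard fact $\rho(C)\leqslant n-k$ for MDS codes. This is the redundancy bound: $\rho\leqslant$ number of parity checks $=n-k=q+1-k$. That is one too many, so the sharper claim $\rho=q-k$ is genuinely deep-hole information. The cleaner route is the geometric dictionary. For a code with parity-check matrix $H$ whose columns form an arc $\mathcal K$ in $\mathrm{PG}(N,q)$, $N=n-k-1=q-k$, we have $\rho\leqslant r$ iff every point (syndrome) of $\mathrm{PG}(N,q)$ lies in the span of at most $r$ points of $\mathcal K$. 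The dual of $\operatorname{PRS}_q(k)$ is $\operatorname{PRS}_q(q+1-k)$, so the columns of $H$ form a normal rational curve in $\mathrm{PG}(q-k,q)$.

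With this dictionary, the plan is as follows. Any $N+1$ points of the curve span the whole space, so every syndrome is a combination of $N+1=q+1-k$ columns, giving $\rho\leqslant q+1-k$. Covering radius $\leqslant N=q-k$ means every point lies in a hyperplane spanned by $N$ curve points. This uses the classical fact that any point off the normal rational curve lies on some $N$-secant hyperplane: $N$ points determine a hyperplane through them, and one counts via the rational-curve parametrization (choosing $N-1$ points and the given point, the hyperplane meets the curve in one further point of $\mathbb F_q\cup\{\infty\}$ in the appropriate cases). The easy range of this fact holds for $N\leqslant q-1$. The harder direction is the lower bound $\rho\geqslant N$. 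This requires a point not lying in the span of any $N-1$ curve points, i.e.\ a point on no $(N-1)$-secant subspace. Equivalently, $\rho\geqslant q-k$ says there is a syndrome needing $q-k$ columns. I would argue by contradiction. If every point lay in the span of $N-1$ curve points, one could adjoin a suitable point and extend the curve, contradicting Corollary~\ref{cor:nrc-complete}: completeness of the normal rational curve in $\mathrm{PG}(N,q)$, $2\leqslant N\leqslant q-2$, exactly covers $2\leqslant k\leqslant q-2$.

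The main obstacle is to get the precise correspondence right. Completeness naively gives that every point lies in the span of $N$ curve points, which is the upper bound. The lower bound must come from a different argument, most likely a counting argument showing that the $(N-1)$-secant subspaces do not cover $\mathrm{PG}(N,q)$. I would first try the count $\binom{q+1}{N-1}\cdot|\mathrm{PG}(N-2,q)|$ against $|\mathrm{PG}(N,q)|$. If that fails for small $N$, I would instead exhibit an explicit deep hole, such as the word given by evaluating $x^k$, and show by a degree argument that it has distance $q-k$ from the code. So the plan is: get the upper bound from Corollary~\ref{cor:nrc-complete} via the parity-check dictionary, and get the lower bound from the explicit deep-hole construction.
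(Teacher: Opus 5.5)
Your closing plan is correct, and it is the paper's argument with D\"ur's equivalence proved rather than cited. The paper invokes D\"ur ($\rho(\operatorname{PRS}_q(k))=q-k$ iff the normal rational curve in $\mathrm{PG}(q-k,q)$ is complete) and then applies Corollary~\ref{cor:nrc-complete}. You prove the two halves directly.
\begin{itemize}
\item \emph{Upper bound.} With the paper's convention at $\infty$, one has exactly $\operatorname{PRS}_q(k)^{\perp}=\operatorname{PRS}_q(q+1-k)$, so the parity-check columns form that curve. Completeness says every point of $\mathrm{PG}(q-k,q)$ lies in the span of at most $q-k$ of these columns, so $\rho\leqslant q-k$.
\item \emph{Lower bound.} Take the word $u=((a^k)_{a\in\mathbb F_q},u_\infty)$, with $u_\infty$ arbitrary. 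Since $X^k-f$ is monic of degree $k\leqslant q-1$, $u$ agrees with the codeword of $f$ in at most $k$ affine positions, hence in at most $k+1$ positions overall. This gives $\rho\geqslant q-k$.
\end{itemize}
This makes the corollary self-contained modulo Corollary~\ref{cor:nrc-complete}, at the cost of a few lines more than the citation.

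Much of the surrounding text contradicts that plan and should be deleted.
\begin{itemize}
\item Your opening assigns the roles backwards. The redundancy (Singleton-type) bound gives only $\rho\leqslant n-k=q+1-k$, and some MDS codes attain it, e.g.\ Reed--Solomon codes of length $q$. So $\rho\leqslant q-k$ is exactly where Theorem~\ref{thm:mds-range} enters, while $\rho\geqslant q-k$ is the elementary half.
\item ``Every point off the curve lies on an $N$-secant hyperplane'' is not a classical fact with an easy range $N\leqslant q-1$. It is precisely completeness. It is false in general, e.g.\ for even $q$ and $N=2$, where the nucleus of a conic lies on no secant.
\item The contradiction argument for $\rho\geqslant N$ is a non sequitur. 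If every point lay on an $(N-1)$-secant subspace, the curve would be complete, not extendable.
\item The covering count $\binom{q+1}{N-1}\frac{q^{N-1}-1}{q-1}$ versus $\frac{q^{N+1}-1}{q-1}$ fails for large $N$ (already at $N=4$, $q=7$), not for small $N$. The explicit word should therefore be your only lower-bound argument.
\end{itemize}
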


Corollary~\ref{cor:nrc-complete} proves the odd-characteristic case of the
classical \emph{normal-rational-curve completeness conjecture}, whose study goes
back to Segre \cite{Segre1955}; for the modern formulation used here, see
\cite[Conjecture~1.2]{BartoliDavydovMarcuginiPambianco2018}.
Corollary~\ref{cor:prs-covering-radius} proves the odd-characteristic
\emph{projective Reed--Solomon covering-radius conjecture} of Zhang and Wan
\cite[Conjecture~1.3]{ZhangWan2016}.  D{\"u}r identified this covering-radius
problem with completeness of the corresponding normal rational curve
\cite{Dur1994}; see also the modern restatement
\cite[Lemma~II.4]{ZhangWanKaipa2020}.

Our proof separates a positive-characteristic algebraic part from the
finite-field argument.  Section~\ref{sc:section} develops inclusion
operators on subsets and identifies the relevant kernels with images of
iterated inclusion maps in characteristic $p$.  Section~\ref{sec:local}
uses these identities to convert local alternating relations on an arc
into representations indexed by smaller subsets.  Section~\ref{gl:section}
uses this reduction to prove a global vanishing theorem, from which it
deduces the determinant relations and the full-row-rank theorem for
Chowdhury's matrices.  None of
these three sections uses finiteness of the base field.  The finite field
enters only in Section~\ref{sec:mds-deduction}, where the Ball--Lavrauw
construction provides the nonzero left-kernel vector described above;
classical duality of linear MDS codes \cite[p.~117]{Singleton1964} gives
the second interval.  The substitutions
$r=k-p$ and $e=q-2k+5$ identify the inequality
$r\leqslant e(p-2)$ with the threshold $k\leqslant B(p,q)$ in
\eqref{intro:main-range}.  Finally,
Section~\ref{sec:nrc-prs-applications} combines our MDS theorem with known
completeness results for normal rational curves to prove the two corollaries.

\section{Inclusion operators on subsets}\label{sc:section}

This section develops the properties of inclusion maps on subsets needed
later.  We state the main result, Proposition~\ref{sc:structure}, after the
definitions and then prove it.

Throughout this section, let $X$ be a finite set, let $\mathbb F$ be a field of
characteristic $p>0$, and let $H$ be an $\mathbb F$-vector space.  For $a\geqslant0$ put
\[
 \binom Xa:=\{A\subseteq X:\lvert A\rvert=a\},
 \qquad
 F_a(X;H):=\{f\mid f:\tbinom Xa\to H\text{ is a function}\}.
\]
Pointwise addition and scalar multiplication make $F_a(X;H)$ an
$\mathbb F$-vector space.  For $a<0$, set $F_a(X;H):=\{0\}$,
the zero vector space.
For $0\leqslant b\leqslant a$ define
\[
 (E_{a,b}f)(A):=\sum_{B\in\binom Ab}f(B),
 \qquad E_{a,b}:F_b(X;H)\longrightarrow F_a(X;H),
\]
and interpret $E_{a,b}$ as zero when $b<0$.  Counting intermediate
subsets gives
\begin{equation}\label{sc:composition}
 E_{a,b}E_{b,c}=\binom{a-c}{b-c}E_{a,c}
 \qquad\text{for }0\leqslant c\leqslant b\leqslant a.
\end{equation}

Define
\[
 d:\bigoplus_{a\geqslant0}F_a(X;H)\longrightarrow
   \bigoplus_{a\geqslant0}F_a(X;H),
 \qquad d|_{F_a(X;H)}:=E_{a+1,a}.
\]
Repeated use of~\eqref{sc:composition} gives
\[
 d^m|_{F_a(X;H)}=m!\,E_{a+m,a}\qquad(m\geqslant0).
\]
Hence $d^p=0$ in characteristic $p$.  This gives the containments
\[
 \operatorname{im}d^{p-1}\subseteq\ker d,
 \qquad
 \operatorname{im}d\subseteq\ker d^{p-1}.
\]
Proposition~\ref{sc:structure} proves the corresponding equalities in the
specified degrees under explicit bounds on $\lvert X\rvert$.

For $a\geqslant1$, an \emph{$a$-matching} on $X$ is a family
$P=((x_i,y_i))_{i=1}^a$ of pairwise disjoint ordered pairs.  Its
\emph{support} is the set of their $2a$ entries.  Define
\[
 \mathcal T(P):=
 \left\{J\in\binom Xa:
 \lvert J\cap\{x_i,y_i\}\rvert=1\text{ for }1\leqslant i\leqslant a\right\}.
\]
For $J\in\mathcal T(P)$, define
\[
 \epsilon_P(J):=(-1)^{\lvert \{i:y_i\in J\}\rvert}.
\]
Changing the choice in one pair reverses the sign.  For
$f\in F_a(X;H)$ put
\[
 (\Delta_af)(P):=
 \sum_{J\in\mathcal T(P)}\epsilon_P(J)f(J).
\]
For $a=0$, take $P$ to be the empty matching and let $\Delta_0$ be
evaluation at the empty set.  Whenever
$\lvert X\rvert\geqslant2a$, define
\[
 \mathcal L_a(X;H):=
 \{f\in F_a(X;H):(\Delta_af)(P)=0
   \text{ for every $a$-matching }P\}.
\]

\begin{proposition}\label{sc:structure}
Let $a\geqslant0$.  In the following maps, $F_j$ denotes $F_j(X;H)$.
\begin{enumerate}[label=(\roman*),ref=\roman*]
\item\label{item:sc-kernel-d} If $\lvert X\rvert\geqslant2a+1$, then
$\ker E_{a+1,a}=\operatorname{im}E_{a,a+1-p}$.  Equivalently,
\[
 \ker\bigl(d:F_a\to F_{a+1}\bigr)
 =\operatorname{im}\bigl(d^{p-1}:F_{a-p+1}\to F_a\bigr).
\]
\item\label{item:sc-kernel-dp} If $\lvert X\rvert\geqslant2a+p-1$, then
$\ker E_{a+p-1,a}=\operatorname{im}E_{a,a-1}$.  Equivalently,
\[
 \ker\bigl(d^{p-1}:F_a\to F_{a+p-1}\bigr)
 =\operatorname{im}\bigl(d:F_{a-1}\to F_a\bigr).
\]
\item\label{item:sc-low} If $\lvert X\rvert\geqslant2a$, then
\begin{align}
 \mathcal L_a(X;H)
 &=\sum_{0\leqslant b<a}\operatorname{im}E_{a,b},
 \label{sc:all-low}\\
 &=\operatorname{im}E_{a,a-1}+
 \sum_{m=1}^{\lfloor a/p\rfloor}
 \operatorname{im}E_{a,a-mp}.
 \label{sc:lower-images}
\end{align}
The sums are zero when $a=0$.
\end{enumerate}
In particular, $d:F_a\to F_{a+1}$ is injective for $0\leqslant a<p-1$
whenever $\lvert X\rvert\geqslant2a+1$.
\end{proposition}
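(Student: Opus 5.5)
The plan is to prove (iii) first, using the Specht-module description of $\mathcal L_a$, and then to derive (i) and (ii) by induction on $\lvert X\rvert$ through point deletion. Throughout, write $n=\lvert X\rvert$. The final sentence of the proposition is immediate from (i): if $a<p-1$ then $a-p+1<0$, so $F_{a-p+1}=\{0\}$.

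\textbf{Step 1: the equality \eqref{sc:lower-images}.} Fix $b<a$.
\begin{itemize}
\item If $p\nmid a-b$, then by \eqref{sc:composition} we have $E_{a,a-1}E_{a-1,b}=(a-b)E_{a,b}$, and $a-b$ is a unit. Hence $\operatorname{im}E_{a,b}\subseteq\operatorname{im}E_{a,a-1}$.
\item If $a-b=mp+s$ with $0<s<p$, factor through level $a-mp$. The coefficient $\binom{a-b}{mp}$ is nonzero mod $p$ by Lucas' theorem. So $\operatorname{im}E_{a,b}\subseteq\operatorname{im}E_{a,a-mp}$.
\end{itemize}
This gives \eqref{sc:lower-images} from \eqref{sc:all-low}.

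\textbf{Step 2: the inclusion $\supseteq$ in \eqref{sc:all-low}.} Take $g\in F_b$ with $b<a$ and an $a$-matching $P$. Then
\[
 (\Delta_aE_{a,b}g)(P)=\sum_{B}g(B)\sum_{J\in\mathcal T(P),\,J\supseteq B}\epsilon_P(J).
\]
Fix $B$. If $B$ contains both entries of some pair, the inner sum is empty. Otherwise, since $\lvert B\rvert<a$, some pair is disjoint from $B$. The two choices in that pair cancel, so the inner sum is zero.

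\textbf{Step 3: the reverse inclusion in \eqref{sc:all-low}.} Here I would dualize. For each $a$-matching $P$, the functional $\Delta_a(\cdot)(P)$ corresponds to the polytabloid $e_P=\sum_{J\in\mathcal T(P)}\epsilon_P(J)[J]$ in the permutation module $M=\mathbb F\binom Xa$. These polytabloids span the Specht module $S^{(n-a,a)}$. By James' kernel intersection theorem, which holds over any field for $n\geqslant 2a$,
\[
 S^{(n-a,a)}=\bigcap_{b<a}\ker\bigl(E_{a,b}^{\mathsf T}\bigr).
\]
So $\mathcal L_a$ is the annihilator of $S^{(n-a,a)}$, and this is exactly $\sum_{b<a}\operatorname{im}E_{a,b}$. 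The case where $H$ is an arbitrary vector space reduces to $H=\mathbb F$ coordinatewise. If I want to avoid citing James, I would instead prove the kernel intersection directly by induction on $n$, using the branching decomposition in Step 4.

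\textbf{Step 4: (i) and (ii) by point deletion.} Fix $x\in X$ and set $X'=X\setminus\{x\}$. Identify $F_a(X)\cong F_a(X')\oplus F_{a-1}(X')$ via $f=(g,h)$, where $g$ records values on sets avoiding $x$ and $h$ records values on sets containing $x$ (with $x$ removed). In these coordinates,
\[
 E_{a+1,a}f=\bigl(E'_{a+1,a}g,\ g+E'_{a,a-1}h\bigr),
\]
and $E_{a,b}$ has the analogous block triangular form.

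For (i), suppose $E_{a+1,a}f=0$. Then $g=-E'_{a,a-1}h$, and substituting gives $E'_{a+1,a}E'_{a,a-1}h=2E'_{a+1,a-1}h=0$. Iterating this kind of substitution, or applying induction on $n$ and on $a$, places $h$ in $\operatorname{im}E'_{a-1,a-p}$. One then lifts a preimage to $X$ and checks that the leftover term vanishes. Statement (ii) is handled the same way with $d^{p-1}$, using $d^m=m!E$ and the identity $d^p=0$.

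\textbf{Main obstacle.} The hardest step is making this induction close at the boundary $n=2a+1$ (respectively $n=2a+p-1$), because deleting a point violates the size hypothesis. There I would switch to a dimension count. By (iii) together with Step 3, the space $\operatorname{im}d+\operatorname{im}d^{p-1}$ has codimension $\dim S^{(n-a,a)}$ in $F_a$. Comparing the ranks $\operatorname{rank}E_{a+1,a}$ and $\operatorname{rank}E_{a,a+1-p}$ via Wilson's diagonal-form theorem for inclusion matrices then gives
\[
 \dim\ker E_{a+1,a}=\operatorname{rank}E_{a,a+1-p}.
\]
Combined with the trivial containment $\operatorname{im}d^{p-1}\subseteq\ker d$, this dimension equality yields (i); the same argument gives (ii).
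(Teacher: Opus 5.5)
Your treatment of (iii) is sound. Up to sign, the functionals $f\mapsto(\Delta_af)(P)$ are pairings with the polytabloids of $(n-a,a)$-tableaux. James' kernel intersection theorem holds over any field, and finite-dimensional duality turns $S^{(n-a,a)}=\bigcap_{b<a}\ker E_{a,b}^{\mathsf T}$ into $\mathcal L_a=\sum_{b<a}\operatorname{im}E_{a,b}$.

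\textbf{The gap is in Step~4.} From $E_{a+1,a}f=0$ you get $g=-E'_{a,a-1}h$ and then only $2E'_{a+1,a-1}h=0$. That is, $h\in\ker d'^2$, not $h\in\ker d'$. So the induction hypothesis for (i) does not apply to $h$, and the asserted conclusion $h\in\operatorname{im}E'_{a-1,a-p}$ is false.

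For example, take $p=3$, $a=2$, $\lvert X\rvert\geqslant5$. A nonzero constant function on $2$-sets lies in $\ker E_{3,2}$, and its component $h$ is a nonzero constant on points. Yet $\operatorname{im}E'_{1,-1}=0$.

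The correct target is $\operatorname{im}E'_{a-1,a+1-p}=\operatorname{im}d'^{\,p-2}$. Reaching it inductively forces you to carry the whole family $\ker d^j=\operatorname{im}d^{p-j}$ for $1\leqslant j\leqslant p-1$, each with its own boundary case. For $j\geqslant2$ the kernel equations no longer even express $g$ through $h$. ``Iterating this kind of substitution'' does not deliver that.

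Separately, your remark that $\operatorname{im}d+\operatorname{im}d^{p-1}$ has codimension $\dim S^{(n-a,a)}$ in $F_a$ is wrong for $a\geqslant p$. In $F_a$ one has $\operatorname{im}d^{p-1}\subseteq\operatorname{im}d$, while $\mathcal L_a$ also contains $\operatorname{im}E_{a,a-p}$. For $a=p$ this includes the constants, which are not in $\operatorname{im}E_{p,p-1}$.

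\textbf{The fix is already in your proposal.} The Wilson rank count you reserve for the boundary works throughout. Wilson's integral diagonal form for the $t$-versus-$k$ inclusion matrix needs only $t\leqslant\min\{k,n-k\}$. For $E_{a+1,a}$ and $E_{a+p-1,a}$ this is exactly $n\geqslant2a+1$ and $n\geqslant2a+p-1$; the matrices $E_{a,a+1-p}$ and $E_{a,a-1}$ need less. Write $\delta_i=\binom ni-\binom n{i-1}$ and count nonzero diagonal entries mod $p$. This gives
\[
 \dim\ker E_{a+1,a}=\sum_{i\leqslant a,\ p\mid a+1-i}\delta_i=\operatorname{rank}E_{a,a+1-p},
 \qquad
 \dim\ker E_{a+p-1,a}=\sum_{i\leqslant a,\ p\nmid a-i}\delta_i=\operatorname{rank}E_{a,a-1}.
\]
This uses Lucas: $\binom{a-i}{p-1}\not\equiv0$ iff $p\mid a+1-i$, and $\binom{a+p-1-i}{p-1}\not\equiv0$ iff $p\mid a-i$. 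Together with the containments from $d^p=0$, this proves (i) and (ii) for every admissible $n$. The point-deletion induction should simply be dropped.

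\textbf{Comparison with the paper.} With that repair your proof is correct but takes a different route. The paper uses one basis $\langle I\rangle_a=E_{a,\lvert I\rvert}e_I$, indexed by Frankl-rank sets (Lemma~\ref{sc:subset-basis}). On it every $E_{b,a}$ with $n\geqslant a+b$ acts diagonally by $\binom{b-i}{a-i}$.
\begin{itemize}
\item Parts (i) and (ii) are then the same Lucas check, but read off as explicit kernel and image bases rather than a rank count.
\item Part (iii) follows by triangularity. Suitably chosen matchings give functionals that are triangular with diagonal $(-1)^a$ on the rank-$a$ basis vectors and vanish on lower-rank ones, so no Specht-module theory is needed.
\end{itemize}
Your route is more modular and explains $\mathcal L_a$ conceptually as $(S^{(n-a,a)})^\perp$, but it rests on two external theorems (James, Wilson). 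The paper's route is self-contained modulo Plaza's basis theorem, which is essentially Wilson's diagonal form realized by a single basis in all degrees. That explicit basis is reused in Section~\ref{sec:local}, for instance to describe $\ker\Phi_t$ in Lemma~\ref{sp:supported-low}.
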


Parts~(\ref{item:sc-kernel-d})--(\ref{item:sc-kernel-dp}) identify the
kernels of $d$ and $d^{p-1}$, while part~(\ref{item:sc-low}) describes the
functions for which all the alternating sums $(\Delta_af)(P)$ vanish.
Related homological results for inclusion maps appear in
\cite{Bier1993Homological,MnukhinSiemons1996}.  We give a direct proof in
the present notation, with the cardinality hypotheses made explicit.

For the proof, put $N:=\lvert X\rvert$, order $X$, and identify it with
$[N]$.  For
$I=\{i_1<\cdots<i_j\}\subseteq[N]$ satisfying
$i_s\geqslant2s$ for every $s$, let $e_I$ denote the unit vector at $I$
and set
\[
 \langle I\rangle_a:=E_{a,j}e_I,
 \qquad \lvert I\rvert=j\leqslant a.
\]
Then $\langle I\rangle_a(A)=\mathds{1}_{I\subseteq A}$, where the
indicator is $1$ when $I\subseteq A$ and $0$ otherwise.

\begin{lemma}\label{sc:subset-basis}
For $0\leqslant a\leqslant N$, the vectors $\langle I\rangle_a$ indexed by
sets $I=\{i_1<\cdots<i_j\}\subseteq[N]$ satisfying
$i_s\geqslant2s$ for every $s$ and
$j\leqslant\min\{a,N-a\}$ form a basis of the free abelian group
$\mathbb Z^{\binom{[N]}a}$.
\end{lemma}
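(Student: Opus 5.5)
The plan is to reduce to a spanning statement by counting, and then prove spanning by induction on $N$, splitting $a$-subsets according to whether they contain $N$.

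\emph{Counting.} Call $I=\{i_1<\cdots<i_j\}$ \emph{admissible} if $i_s\geqslant2s$ for all $s$. By the ballot (reflection) argument, the number of admissible $j$-subsets of $[N]$ is $\binom Nj-\binom N{j-1}$ for $j\leqslant N/2$. Summing over $0\leqslant j\leqslant\min\{a,N-a\}$ telescopes to $\binom N{\min\{a,N-a\}}=\binom Na$. So the proposed family has exactly $\operatorname{rank}\mathbb Z^{\binom{[N]}a}$ members. A family of $n$ vectors spanning $\mathbb Z^n$ over $\mathbb Z$ is a basis: its matrix $M$ has an integer right inverse, so $\det M=\pm1$. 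Hence it suffices to prove that the $\langle I\rangle_a$ span $\mathbb Z^{\binom{[N]}a}$ over $\mathbb Z$.

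\emph{Induction on $N$.} The cases $N=0$ and $a\in\{0,N\}$ are trivial. For $0<a<N$ decompose
\[
 \mathbb Z^{\binom{[N]}a}=\mathbb Z^{\binom{[N-1]}a}\oplus\mathbb Z^{\binom{[N-1]}{a-1}},
\]
according to whether $N\notin A$ or $N\in A$, identifying $A=A'\cup\{N\}$ with $A'$ in the second summand. Then:
\begin{itemize}
\item For an admissible $I\not\ni N$, the vector $\langle I\rangle_a$ has components $\bigl(\langle I\rangle_a^{[N-1]},\langle I\rangle_{a-1}^{[N-1]}\bigr)$.
\item For an admissible $I\ni N$, the vector $\langle I\rangle_a$ has components $\bigl(0,\langle I\setminus\{N\}\rangle_{a-1}^{[N-1]}\bigr)$.
\end{itemize}
For $I\ni N$, admissibility of $I$ is equivalent to admissibility of $I'=I\setminus\{N\}$, since the last condition $N\geqslant2j$ follows from $j\leqslant\min\{a,N-a\}$. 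So the second family gives $(0,\langle I'\rangle_{a-1})$ for every admissible $I'\subseteq[N-1]$ with $\lvert I'\rvert\leqslant\min\{a,N-a\}-1$. This range contains every admissible $I'$ with $\lvert I'\rvert\leqslant\min\{a-1,N-a\}$, except those with $\lvert I'\rvert=N-a$ when $N-a\leqslant a-1$.

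\emph{Assembling the span.}
\begin{enumerate}
\item \emph{Main block.} The admissible $I\not\ni N$ with $\lvert I\rvert\leqslant\min\{a,N-1-a\}$ project onto a basis of the first summand, by the inductive hypothesis for $(N-1,a)$.
\item \emph{Surplus sets.} The remaining admissible $I\not\ni N$ have $\lvert I\rvert=N-a\leqslant a$. By induction, the first component of such $\langle I\rangle_a$ is an integer combination of first components from the main block. Subtracting that combination leaves a vector $(0,\langle I\rangle_{a-1}-w)$, where $w$ is an integer combination of $\langle I''\rangle_{a-1}$ with $\lvert I''\rvert<N-a$.
\item \emph{Second summand.} Combined with the second family, these vectors supply exactly the missing top-size generators $\langle I\rangle_{a-1}$, $\lvert I\rvert=N-a$, modulo the lower ones already present. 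Hence the inductive basis of the second summand, for $(N-1,a-1)$, lies in the span, and the whole space is spanned.
\end{enumerate}

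The main obstacle is the bookkeeping in the surplus step: matching the surplus sets precisely with the missing top-size admissible sets, and checking that the edge cases $2a=N$, $2a=N\pm1$ and $a=N-1$ behave correctly. If this becomes messy, the fallback is the counting reduction plus a direct unitriangularity argument. One would order admissible sets $I$ and suitable $a$-sets $A(I)$ so that $\langle I\rangle_a(A(J))=\mathds 1_{I\subseteq A(J)}$ is unitriangular, taking $A(I)$ to be $I$ together with a greedy choice of the $a-\lvert I\rvert$ largest elements avoiding the "partners" of $I$ in the ballot pairing.
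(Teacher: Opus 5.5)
Your proof is correct, and it takes a genuinely different route from the paper's.

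\textbf{The paper's route.} The paper does not argue the basis property directly. It identifies $i_s\geqslant 2s$ with Frankl's rank-$j$ condition and cites Plaza's basis theorem (Corollary~71 there). It then upgrades to a $\mathbb Z$-basis by observing that the integer transition matrix is nonsingular modulo every prime, so its determinant is $\pm1$.

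\textbf{Your route.} Your argument is self-contained. The ballot count gives the family exactly $\binom Na$ members. The splitting $\binom{[N]}a=\binom{[N-1]}a\sqcup\binom{[N-1]}{a-1}$ then proves spanning over $\mathbb Z$ directly, which gives unimodularity without passing through a field-by-field statement.

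\textbf{The bookkeeping you flagged closes cleanly.}
\begin{itemize}
\item An admissible $j$-subset of $[N-1]$ exists only when $2j\leqslant N-1$. So the surplus sets and the missing generators of the $(N-1,a-1)$ basis are indexed by the same collection: the admissible $(N-a)$-subsets of $[N-1]$.
\item Both are nonempty exactly when $N-a\leqslant a-1$. In particular, there is no surplus when $N=2a$.
\item In that regime the combination $w$ involves only sets with $\lvert K\rvert\leqslant N-a-1=\min\{a-1,N-a-1\}$. So every $(0,\langle K\rangle_{a-1})$ that $w$ needs already lies in your second family.
\end{itemize}
Hence the fallback unitriangularity argument is unnecessary. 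You could also drop the reflection count. If the induction hypothesis is the full basis statement, the family for $(N,a)$ has $\binom{N-1}a+\binom{N-1}{a-1}$ members: the surplus sets exactly replace the missing generators.

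\textbf{Comparison.} The paper's route buys brevity and a link to the literature on Frankl rank. Yours buys independence from that external theorem. It also gives a direct proof of the $\mathbb Z$-basis property, which then holds over every field at once.
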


\begin{proof}
Plaza recalls Frankl's rank of a subset in
\cite[Definition~66]{Plaza2016} and proves the corresponding basis theorem
in \cite[Corollary~71]{Plaza2016}.  For
$I=\{i_1<\cdots<i_j\}$, rank $j$ is equivalent to $i_s\geqslant2s$ for
all $s$, so the cited result gives exactly the basis above.  Its transition
matrix has integer entries and is nonsingular modulo every prime; hence its
determinant is $\pm1$.
\end{proof}

If $0\leqslant a\leqslant b$ and $N\geqslant a+b$, then for every
indexing set $I$ from the lemma with $\lvert I\rvert\leqslant a$,
\begin{equation}\label{sc:diagonal}
 E_{b,a}\langle I\rangle_a
 =\binom{b-i}{a-i}\langle I\rangle_b,
 \qquad i=\lvert I\rvert.
\end{equation}
\begin{proof}[Proof of Proposition~\ref{sc:structure}]
We first work with scalar coefficients.  The two image
containments in parts~(\ref{item:sc-kernel-d})--(\ref{item:sc-kernel-dp})
follow from $d^p=0$.

For~(\ref{item:sc-kernel-d}), equation~\eqref{sc:diagonal} gives
$d\langle I\rangle_a=(a+1-i)\langle I\rangle_{a+1}$, where
$i:=\lvert I\rvert$.  Hence the kernel is spanned by the basis vectors
with $i=a+1-mp\geqslant0$, $m\geqslant1$.  Each such $I$ also indexes
a basis vector in $F_{a+1-p}$.  The map $d^{p-1}:F_{a+1-p}\to F_a$
equals $(p-1)!E_{a,a+1-p}$, and
\[
 \binom{mp-1}{p-1}\equiv1\pmod p.
\]
Thus $d^{p-1}$ maps onto $\ker d$, proving
part~(\ref{item:sc-kernel-d}) and the stated injectivity consequence.

For~(\ref{item:sc-kernel-dp}), equation~\eqref{sc:diagonal} gives
\[
 \binom{a-i+p-1}{p-1}\equiv
 \begin{cases}
 1,&p\mid(a-i),\\
 0,&p\nmid(a-i)
 \end{cases}
 \pmod p.
\]
Hence the kernel of $d^{p-1}$ is spanned by the basis vectors
$\langle I\rangle_a$ with $p\nmid(a-i)$, where $i=\lvert I\rvert$.
These are exactly the basis vectors in the image of $d:F_{a-1}\to F_a$,
since $d\langle I\rangle_{a-1}=(a-i)\langle I\rangle_a$ for $i<a$.
This proves part~(\ref{item:sc-kernel-dp}).

For~(\ref{item:sc-low}), every vector $E_{a,b}e_I$ with $b<a$ satisfies
$(\Delta_aE_{a,b}e_I)(P)=0$ for every $a$-matching $P$: either a pair is
missed, in which case its two choices cancel, or the value is zero for every
$J\in\mathcal T(P)$.  Hence the right side of~\eqref{sc:all-low} is contained in
$\mathcal L_a$.

For the reverse inclusion, order lexicographically the $a$-sets
$I=\{i_1<\cdots<i_a\}$ satisfying $i_s\geqslant2s$.
Choose distinct $j_s<i_s$ outside $I$: at least
$i_s-s\geqslant s$ elements below $i_s$ lie outside $I$, and at most
$s-1$ were chosen earlier.  Consider the matching
$P=((j_1,i_1),\ldots,(j_a,i_a))$.  Every
$J\in\mathcal T(P)$ other than $I$ is lexicographically smaller.  The functionals
$f\mapsto(\Delta_af)(P)$ obtained from these matchings form, on the basis
vectors $\langle I\rangle_a$ with $\lvert I\rvert=a$, a triangular
matrix with diagonal $(-1)^a$; they vanish on all basis vectors indexed by
sets of smaller cardinality.  Hence no basis vector indexed by an $a$-set
occurs in an element of $\mathcal L_a$, proving~\eqref{sc:all-low}.

Finally,
$E_{a,a-1}\langle I\rangle_{a-1}=(a-i)\langle I\rangle_a$ for
$i=\lvert I\rvert<a$.  If $p\nmid a-i$, then
$\langle I\rangle_a\in\operatorname{im}E_{a,a-1}$; if
$p\mid a-i$, then $i=a-mp$ and
$\langle I\rangle_a=E_{a,a-mp}e_I$.  This gives
\eqref{sc:lower-images}.  The argument applies coordinatewise to
$H$-valued functions.
\end{proof}

The sum in~\eqref{sc:lower-images} need not be direct, and the
representation need not be unique.  The additional indices $i=a-mp$
are precisely those with $i<a$ for which $a-i=0$ in $\mathbb F$.

In addition to the main result of this section,
Proposition~\ref{sc:structure}, Section~\ref{sec:local} also uses
Equation~\eqref{sc:composition} and Lemma~\ref{sc:subset-basis}.

\section{Matching relations on arcs}\label{sec:local}

Throughout this section, let $\mathbb F$ have characteristic $p>0$, let
$X$ be a finite ordered set, let $V$ be a $D$-dimensional
$\mathbb F$-vector space, and let $(w_x)_{x\in X}$ be an arc in $V$.
Put $N:=\lvert X\rvert$ and fix $1\leqslant e\leqslant D$.  Then
$N\geqslant D$, and the arc spans $V$.  The order on $X$ serves only to
provide inherited orders for indexed subsets and products; it plays no role
in linear independence.  For $A\subseteq X$, write
$W_A:=\operatorname{span}\{w_a:a\in A\}$.  If $\mathbf x$ is an ordered
tuple of distinct indices, write $\underline{\mathbf x}$ for its underlying
set; if $z\notin\underline{\mathbf x}$, write $\mathbf xz$ for the tuple
obtained by appending $z$ to $\mathbf x$.

Only Theorem~\ref{loc:slr} is used in later sections.  It converts vector-valued
matching relations into coefficient functions generated from smaller
subsets.  We state it first and then introduce the coefficient families
needed for its proof.

\begin{theorem}\label{loc:slr}
Let $t\geqslant1$ and $1\leqslant e\leqslant D$, and suppose
\[
 N\geqslant2t+e-1,\qquad
 D\geqslant t+e-1+\max\{0,t-(e-1)(p-2)\}.
\]
For each $U\in\binom Xe$, let
$h_U:\binom{X\setminus U}{t-1}\longrightarrow\mathbb F$ be a function.  Suppose that for every ordered $(e-1)$-tuple $\mathbf a$ of distinct
elements of $X$ and every
$t$-matching $\mathcal P$ disjoint from
$\underline{\mathbf a}$,
\begin{equation}\label{loc:geometric-equation}
 \sum_{J\in\mathcal T(\mathcal P)}\epsilon_{\mathcal P}(J)
 \sum_{z\in J}
 h_{\underline{\mathbf a}\cup\{z\}}(J\setminus\{z\})w_z
 \in W_{\underline{\mathbf a}}.
\end{equation}
Then for every $U\in\binom Xe$,
\begin{equation}\label{loc:slr-conclusion}
 h_U\in\mathcal L_{t-1}(X\setminus U;\mathbb F)
 =\sum_{0\leqslant i<t-1}\operatorname{im}E_{t-1,i}.
\end{equation}
In particular, if $t=1$, then $h_U(\varnothing)=0$ for every
$U\in\binom Xe$.
\end{theorem}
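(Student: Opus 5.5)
We argue by induction on $t$, using Proposition~\ref{sc:structure} to convert the vector-valued relations into scalar ones. The device is to pass to a quotient space: fix $\mathbf a$ and a matching $\mathcal P$, and work in $V/W_{\underline{\mathbf a}}$. There the images of $w_z$ for $z\notin\underline{\mathbf a}$ still form an arc of dimension $D-(e-1)$, since any $D$ vectors of the original family are independent. The relation \eqref{loc:geometric-equation} then says that
\[
\sum_{z\in\operatorname{supp}\mathcal P} c_z\,\bar w_z=0,\qquad
c_z=\sum_{J\ni z}\epsilon_{\mathcal P}(J)\,h_{\underline{\mathbf a}\cup\{z\}}(J\setminus\{z\}).
\]
The support has $2t$ elements. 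The hypothesis $D\ge t+e-1+\cdots\ge 2t$ looks designed to give $2t\le D-(e-1)$ in the relevant regime. When that inequality holds, the vectors $\bar w_z$ are independent, so every coefficient $c_z$ vanishes. Writing $z=x_i$ (say), $c_z$ is, up to sign, $(\Delta_{t-1}h_{\underline{\mathbf a}\cup\{z\}})(\mathcal P\setminus\{(x_i,y_i)\})$. Since $\mathcal P$ and $\mathbf a$ are arbitrary, this gives $h_U\in\mathcal L_{t-1}(X\setminus U)$ for every $U$. Here $|X\setminus U|\ge 2t-1\ge 2(t-1)$ follows from $N\ge 2t+e-1$, so Proposition~\ref{sc:structure}(\ref{item:sc-low}) applies and yields the equality with $\sum_{i<t-1}\operatorname{im}E_{t-1,i}$.

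The real content is the regime $t>(e-1)(p-2)$. There $D$ is only guaranteed to satisfy $D\ge 2t+e-1-(e-1)(p-2)$, so $2t$ may exceed $D-e+1$ and the quotient vectors can be dependent. The plan in this case is to exploit the freedom in the tuple $\mathbf a$. Each $h_U$ appears in relations for the $e$ possible choices of $\underline{\mathbf a}\subset U$, with one element $z\in U$ playing the special role. We also expand $\mathcal P$ by swapping a pair with an element of $\underline{\mathbf a}$. Comparing these relations lets us peel off pairs of the matching. Concretely, take $2t>D-e+1$. We choose a subfamily of $s=D-e+1-t$ pairs, or so, whose contributions are forced to vanish by independence. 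That reduces to a relation of the same shape with $t$ replaced by a smaller $t'$, for which the inductive hypothesis applies.

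We then lift back using Proposition~\ref{sc:structure}(\ref{item:sc-kernel-d}) and (\ref{item:sc-kernel-dp}). The identity $\ker d=\operatorname{im}d^{p-1}$ and the injectivity of $d$ in degrees below $p-1$ control how much information is lost per step. Each application costs at most $p-2$ degrees, which is where the term $(e-1)(p-2)$ comes from. Over the $e-1$ elements of $\mathbf a$, this gains $(p-2)$ per element. We would also use Lemma~\ref{sc:subset-basis}, in the form of Frankl's basis, to decompose $h_U$, and the composition identity \eqref{sc:composition} to keep track of the binomial coefficients.

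The main obstacle is organizing this bootstrapping in the deficient regime. Two things must be done there. First, the cross-relations between different $U$ sharing $e-1$ elements must actually reduce the number of effective unknowns by $(p-2)$ per element of $\mathbf a$. Second, the inductive representation must be compatible across the different $U$. As a first attempt, we would induct on $e$ simultaneously with $t$. The base case $e=1$, where the condition is simply $D\ge 2t$, is exactly the independence argument above. For the inductive step in $e$, we would restrict to those $U$ containing a fixed point $u$ and apply the $(e-1)$-case in the quotient by $w_u$. The discrepancy between the resulting relations is then shown, via Proposition~\ref{sc:structure}(\ref{item:sc-kernel-dp}), to lie in $\operatorname{im}E$ of lower degree, which absorbs $p-2$ units of excess $t$.
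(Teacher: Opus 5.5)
\textbf{There is a genuine gap.} Your independence argument in $V/W_{\underline{\mathbf a}}$ is correct, including the extraction of $\Delta_{t-1}h_{\underline{\mathbf a}\cup\{z\}}$ from the coefficient $c_z$. However, it needs $2t+e-1\leqslant D$, and the hypothesis of the theorem essentially never gives this.

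\emph{The easy regime is misidentified.} For $e\geqslant2$ and $p\geqslant3$ the hypothesis supplies only $D\geqslant t+e-1$ when $t\leqslant(e-1)(p-2)$, and only $D\geqslant2t+e-1-(e-1)(p-2)$ otherwise. Both are strictly weaker than $2t+e-1$, so your split into regimes is backwards. The range $t\leqslant(e-1)(p-2)$ is not easy: for example, $p=5$, $e=3$, $t=4$ allows $D=6$, while independence of the $8$ projected matching vectors needs $D\geqslant10$. Your argument therefore proves the statement only for $e=1$ (and for $p=2$). Everything else falls into what you call the deficient regime.

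\emph{The deficient-regime plan has no working mechanism.} ``Peeling off pairs forced to vanish by independence'' is unavailable once the $2t$ projected vectors are dependent, because a dependence does not let you isolate a subfamily of coefficients. The induction on $e$ through $V/\langle w_u\rangle$ goes the wrong way. The $(e-1)$-instance there requires $D\geqslant t+e-1+\max\{0,t-(e-2)(p-2)\}$, which is stronger than what you have. It also discards every relation in which $u$ plays the role of $z$ rather than lying in $\underline{\mathbf a}$. Your assertion that a ``discrepancy'' lies in lower-degree images and ``absorbs $p-2$ units of excess $t$'' is the whole content of the theorem, and nothing in the proposal produces it.

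\emph{The missing idea is to handle dependence by exactness rather than independence.} The paper's proof packages the data in exterior powers as follows.
\begin{enumerate}
\item Set $G_{\mathbf u}(I)=h_{\underline{\mathbf u}}(I)\,w_{\mathbf u}\in\Lambda^eV$. Since $v\mapsto w_{\mathbf a}\wedge v$ has kernel $W_{\underline{\mathbf a}}$, hypothesis~\eqref{loc:geometric-equation} is exactly $\Delta_t(\partial G)_{\mathbf a}=0$. Here $\partial$ moves one index from the alternating tuple to the subset.
\item The Koszul-type complex $\partial$ is exact on index sets whose vectors are independent (Lemma~\ref{ex:exact}). This needs only sets of size at most $t+\mu-1\leqslant D$, not $2t+e-1$.
\item Lemma~\ref{ex:gapone} descends in tuple degree by alternating $\ker d=\operatorname{im}d^{p-1}$ and $\ker d^{p-1}=\operatorname{im}d$, from Proposition~\ref{sc:structure}(\ref{item:sc-kernel-d})--(\ref{item:sc-kernel-dp}). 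The subset degree drops by $p-2$ every two tuple steps.
\item A dimension cost $D\geqslant2a+e+1$ or $D\geqslant2a+p+e-1$ is paid only if tuple degree $0$ is reached. It is paid there after the quotient-condition repair of Lemmas~\ref{sp:repair} and~\ref{sp:supported-low}.
\item The resonance terms $E_{t,t-mp}B_m$ produced by Proposition~\ref{sc:structure}(\ref{item:sc-low}) are removed by the detection claim and the subset-degree descent of Lemma~\ref{loc:auxiliary-claim}.
\end{enumerate}
These steps are exactly what produce the bound $D\geqslant t+e-1+\max\{0,t-(e-1)(p-2)\}$. None of them, or any substitute for them, appears in your plan.
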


The proof uses two gradings.  The \emph{tuple degree} $\mu$ is the
length of the ordered tuple, while the \emph{subset degree} $b$ is the size
of the coefficient index on which the inclusion operators of
Section~\ref{sc:section} act.  A map $\partial$ moves one index from the
tuple to the subset, preserving $\mu+b$; the operator $d$ raises only the
subset degree.  The two maps commute.

The proof has three ingredients.  Linear independence gives
$\partial$-exactness in positive tuple degree.  At tuple degree zero,
kernel terms for $d$ or $d^{p-1}$ impose the required quotient condition
without changing the prescribed inclusion image.  Proposition~\ref{sc:structure}~(\ref{item:sc-kernel-d})--(\ref{item:sc-kernel-dp}),
together with the exactness of the $\partial$-maps, then allows the
relations to be lifted through decreasing tuple degree; a final descent
removes the remaining terms in decreasing subset degree.

For $m\geqslant1$, let $\Lambda^mV$ denote the \emph{exterior power} of $V$: it is generated by symbols
$v_1\wedge\cdots\wedge v_m$, linear in each factor and zero when two
factors coincide.  Put $\Lambda^0V:=\mathbb F$, and let $\wedge$ denote
the product induced by concatenating pure symbols.

\subsection{Gradings and the boundary operator}

The ordered tuple records the exterior factors already present in a
coefficient, while the disjoint subset is the index on which the inclusion
operators act.  The boundary operator moves one index from the tuple to
the subset without changing their union.

For $0\leqslant\mu\leqslant e$ and an ordered $\mu$-tuple
$\mathbf x=(x_1,\ldots,x_\mu)$ of distinct indices, put
$w_{\mathbf x}:=w_{x_1}\wedge\cdots\wedge w_{x_\mu}$ and
\[
 A_{\mathbf x}^{(e)}:=w_{\mathbf x}\wedge\Lambda^{e-\mu}V
 \subseteq\Lambda^eV.
\]
For the empty tuple, $A_\varnothing^{(e)}=\Lambda^eV$.  Reordering
$\mathbf x$ by a permutation $\sigma$ multiplies $w_{\mathbf x}$ by
$\operatorname{sgn}(\sigma)$, viewed as a scalar in $\mathbb F$; in
characteristic $2$ this scalar is always $1$.  In every characteristic,
$A_{\mathbf x}^{(e)}$ depends only on $\underline{\mathbf x}$.

Fix a subset $S\subseteq X$, regarded as the union of the tuple and subset
indices.  The subset index is then the complement in $S$ of the tuple
entries, so only the alternating tuple index remains.  The following spaces
and maps require no independence assumption.  For
$0\leqslant\mu\leqslant e$, let
$\mathcal C_e^\mu(S)$ be the space of
alternating families
\[
 \omega=(\omega_{\mathbf x}),\qquad
 \omega_{\mathbf x}\in A_{\mathbf x}^{(e)},
\]
indexed by ordered $\mu$-tuples of distinct elements of $S$.
Alternation means
\[
 \omega_{(x_{\sigma(1)},\ldots,x_{\sigma(\mu)})}
 =\operatorname{sgn}(\sigma)\omega_{(x_1,\ldots,x_\mu)}
 \qquad\text{for all }\sigma\in\mathfrak S_\mu.
\]
In characteristic $2$, this condition becomes permutation-invariance
because $\operatorname{sgn}(\sigma)=1$ in $\mathbb F$.  This is consistent
with the preceding alternating convention: repeated exterior factors still
vanish.
For $\mu=0$ this space is $\Lambda^eV$; for $\mu<0$ or $\mu>e$ it is
zero.  For $1\leqslant\mu\leqslant e$, define
$\partial_S:\mathcal C_e^\mu(S)\to\mathcal C_e^{\mu-1}(S)$ by
\[
 (\partial_S\omega)_{\mathbf x}
 :=\sum_{z\in S\setminus\underline{\mathbf x}}
 \omega_{\mathbf x z}.
\]
Here $\mathbf x$ has length $\mu-1$.  At tuple degree zero, set
$\partial_S:=0$;
this is distinct from the
quotient map $\pi_S$ below.  Alternation gives $\partial_S^2=0$.  Let
$\pi_S:\Lambda^eV\to\Lambda^e(V/W_S)$ be the natural quotient map, and put
\[
 \mathcal I_S:=W_S\wedge\Lambda^{e-1}V=\ker\pi_S.
\]
The kernel identity follows by extending a basis of $W_S$ to a basis of
$V$: exactly the exterior coordinates with at least one factor in $W_S$
are killed by the quotient.  In particular, $\mathcal I_\varnothing=0$.

\begin{lemma}\label{ex:exact}
For every nonempty $S\subseteq X$ such that $(w_s)_{s\in S}$ is linearly independent, put $m:=\min\{e,\lvert S\rvert\}$.
Then the sequence
\[
 0\longrightarrow \mathcal C_e^{m}(S)
 \xrightarrow{\partial_S}\cdots
 \xrightarrow{\partial_S}\mathcal C_e^1(S)
 \xrightarrow{\partial_S}\Lambda^eV
 \xrightarrow{\pi_S}\Lambda^e(V/W_S)\longrightarrow0
\]
is exact.  For $S=\varnothing$, the assertion is the exact sequence
$0\to\Lambda^eV\xrightarrow{\mathrm{id}}\Lambda^eV\to0$.  Here
``exact'' means that the image of each map is the kernel of the next.
\end{lemma}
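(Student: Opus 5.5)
Since $(w_s)_{s\in S}$ is linearly independent, I will extend it to a basis and then split the complex into a tensor product of elementary pieces.

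\emph{Coordinates.} Put $s:=\lvert S\rvert$ and extend $(w_s)_{s\in S}$ to a basis $w_s$ ($s\in S$), $u_1,\dots,u_{D-s}$ of $V$. Write $V=W_S\oplus U$ with $U=\operatorname{span}\{u_j\}$. Then
\[
 \Lambda^eV=\bigoplus_{T\subseteq S,\ \lvert T\rvert\leqslant e} w_T\wedge\Lambda^{e-\lvert T\rvert}U,
\]
where $w_T$ is the wedge in the inherited order, and $\pi_S$ is the projection onto the summand with $T=\varnothing$. So exactness at $\Lambda^eV$ and surjectivity of $\pi_S$ reduce to showing that $\operatorname{im}\partial_S=\mathcal I_S=\bigoplus_{T\neq\varnothing}w_T\wedge\Lambda^{e-\lvert T\rvert}U$.

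\emph{Decomposing the complex.} An alternating family $\omega\in\mathcal C_e^\mu(S)$ is determined by its values $\omega_{\mathbf x}$ on increasing tuples $\mathbf x$, that is, by $\mu$-subsets $X'\subseteq S$. Each value lies in
\[
 w_{X'}\wedge\Lambda^{e-\mu}V=\bigoplus_{T\supseteq X'} w_T\wedge\Lambda^{e-\lvert T\rvert}U,
\]
because the arc vectors indexed by $S$ are independent. Hence
\[
 \mathcal C_e^\mu(S)\cong\bigoplus_{T}\ \bigoplus_{X'\in\binom T\mu}\Lambda^{e-\lvert T\rvert}U .
\]
The map $\partial_S$ sends the component $(X'z,T)$ to $(X',T)$. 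This is nonzero only when $z\in T\setminus X'$, since otherwise $w_{X'}\wedge w_z$ has no component in $w_T\wedge\Lambda U$ for any $T$ not containing $z$. The map carries a sign $\pm1$ coming from reordering. Therefore the complex splits as a direct sum over $T$ with $\lvert T\rvert\leqslant e$, each summand tensored with $\Lambda^{e-\lvert T\rvert}U$. The summand for fixed $T$ is the complex
\[
 \Lambda^{\lvert T\rvert}\mathbb F^T\to\Lambda^{\lvert T\rvert-1}\mathbb F^T\to\cdots\to\Lambda^0\mathbb F^T,
\]
with contraction by $\sum_{z\in T}e_z^*$, truncated at top degree $\mu\leqslant\min\{e,\lvert T\rvert\}$. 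This truncation is automatic, because $\lvert X'\rvert\leqslant\lvert T\rvert\leqslant e$.

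\emph{Exactness of each summand.}
\begin{itemize}
 \item For $T\neq\varnothing$, contraction by the nonzero linear form $\phi=\sum_z e_z^*$ on the exterior algebra $\Lambda^\bullet\mathbb F^T$ is exact in every characteristic. Indeed, pick $v$ with $\phi(v)=1$; then $\iota_\phi(v\wedge-)+v\wedge\iota_\phi(-)=\mathrm{id}$ is a contracting homotopy. This gives exactness at every $\mathcal C^\mu$ with $\mu\geqslant1$ and at $\Lambda^eV$ on the summands with $T\neq\varnothing$, so those summands are exactly $\operatorname{im}\partial_S$.
 \item For $T=\varnothing$, the summand is $\Lambda^eU$ in degree $0$, which maps isomorphically under $\pi_S$ onto $\Lambda^e(V/W_S)$.
\end{itemize}
Injectivity at $\mathcal C_e^m$ is included in the $T\neq\varnothing$ case.

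\emph{Main obstacle.} The sign bookkeeping is the step that needs care: I must check that $\partial_S$ on each $T$-block really is $\iota_\phi$ up to consistent signs. It is also important that the identification uses alternation (including the characteristic $2$ convention), so that $\mathcal C^\mu$ is indexed by subsets. I would verify this by writing $\omega_{\mathbf x}=w_{\mathbf x}\wedge\eta_{\mathbf x}$ and expanding $\eta_{\mathbf x}$ in the basis.
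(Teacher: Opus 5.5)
Your proposal is correct and is essentially the paper's proof. The paper also extends $(w_s)_{s\in S}$ to a basis and splits coordinates according to $T=K\cap S$ for exterior basis vectors $v_K$. It kills each block with $T\neq\varnothing$ using the homotopy that appends a fixed $t_0\in T$, which is your $v\wedge{-}$ with $v=e_{t_0}$, and the block $T=\varnothing$ survives exactly in $\Lambda^e(V/W_S)$.

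The sign bookkeeping you flag is harmless: on each block the map is contraction by a form $\sum_{z\in T}\pm e_z^*$, which is still nonzero, so your homotopy argument goes through. The paper sidesteps the signs entirely by defining the homotopy directly on alternating arrays.
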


\begin{proof}
Extend $\{w_s:s\in S\}$ to a basis $(v_b)_{b\in\mathcal B}$ of $V$.
Fix an exterior basis vector $v_K$, $\lvert K\rvert=e$, and put $T=K\cap S$.
The $v_K$-coordinate can occur in the component indexed by
$\mathbf x$ exactly when $\underline{\mathbf x}\subseteq T$.  On this
coordinate, $\partial_S$ is the alternating deletion map on ordered
subsets of $T$.  If $T\neq\varnothing$, fix $t_0\in T$.  On scalar
coordinate arrays, define
$(hf)_{\mathbf x t_0}=f_{\mathbf x}$ when $t_0$ does not occur in
$\mathbf x$, extend by alternation, and set $hf$ to zero on tuples
not containing $t_0$.  Separating the summand indexed by $t_0$ gives
\[
 \partial h+h\partial=\operatorname{id}.
\]
If $T=\varnothing$, only degree $0$ remains and $v_K$ survives precisely
in $\Lambda^e(V/W_S)$.  Taking the direct sum over all $K$ proves exactness.
\end{proof}

If $(w_s)_{s\in S}$ is linearly independent, the lemma makes every
$\partial$-closed array of positive tuple degree $\partial$-exact.  At
tuple degree zero, the boundary image is
$\mathcal I_S=\ker\pi_S$; this identity holds for every $S$, because the
image is the sum of the spaces $w_s\wedge\Lambda^{e-1}V$.

We now take the direct sum of these complexes over all subsets $S$;
positive-degree exactness is used only on the corresponding summands.
For $0\leqslant\mu\leqslant e$ and $b\geqslant0$, set
\begin{equation}\label{ex:global-C}
 C_e(\mu,b):=
 \bigoplus_{S\in\binom X{\mu+b}}\mathcal C_e^\mu(S).
\end{equation}
Equivalently, an element $F\in C_e(\mu,b)$ is a family
$F_{\mathbf x}(I)\in A_{\mathbf x}^{(e)}$, where $\mathbf x$ is an ordered
$\mu$-tuple, $I$ is a $b$-subset disjoint from
$\underline{\mathbf x}$, and the family is alternating in
$\mathbf x$.  The operator $\partial$ preserves the union
$S=\underline{\mathbf x}\sqcup I$.
Put $C_e(\mu,b):=0$ if $b<0$, $\mu<0$, or $\mu>e$.

For $\mu\geqslant1$, the direct sum of the $\partial_S$ gives
\[
 \partial:C_e(\mu,b)\longrightarrow C_e(\mu-1,b+1),
 \qquad
 (\partial F)_{\mathbf x}(J)
 :=\sum_{z\in J}F_{\mathbf x z}(J\setminus\{z\}).
\]
For $\mu=0$, set $\partial:=0$.  For $0\leqslant b\leqslant a$, the
inclusion operator acts on the
subset index only:
\[
 (E_{a,b}F)_{\mathbf x}(J)
 =\sum_{I\in\binom Jb}F_{\mathbf x}(I).
\]
Consequently
\begin{equation}\label{ex:commutation}
 \partial^2=0,
 \qquad
 \partial E_{a,b}=E_{a+1,b+1}\partial.
\end{equation}

Let $d$ denote the inclusion operator on the subset index,
$d|_{C_e(\mu,b)}:=E_{b+1,b}$.  Then
\[
 d^p=0,
 \qquad
 \partial d=d\partial.
\]
The two moves are
\[
 \partial:(\mu,b)\longmapsto(\mu-1,b+1),
 \qquad d:(\mu,b)\longmapsto(\mu,b+1).
\]
Thus the following diagram commutes:
\[
\begin{tikzcd}[column sep=2.6em,row sep=2.2em]
 C_e(\mu+1,b-2) \arrow[r,"\partial"] \arrow[d,"d"'] &
 C_e(\mu,b-1) \arrow[r,"\partial"] \arrow[d,"d"'] &
 C_e(\mu-1,b) \arrow[d,"d"] \\
 C_e(\mu+1,b-1) \arrow[r,"\partial"] \arrow[d,"d"'] &
 C_e(\mu,b) \arrow[r,"\partial"] \arrow[d,"d"'] &
 C_e(\mu-1,b+1) \arrow[d,"d"] \\
 C_e(\mu+1,b) \arrow[r,"\partial"'] &
 C_e(\mu,b+1) \arrow[r,"\partial"'] &
 C_e(\mu-1,b+2)
\end{tikzcd}
\]
Here $\partial$ preserves $\mu+b$, whereas $d$ increases it.  When
$\mu+b\leqslant D$, the arc property makes the vectors on every fixed set
$S$ linearly independent, so Lemma~\ref{ex:exact} applies to the
$\partial$-maps on that set.  With a tuple of length
$\mu$ fixed, Proposition~\ref{sc:structure}~(\ref{item:sc-kernel-d})--(\ref{item:sc-kernel-dp})
identifies the kernels of $d$ and $d^{p-1}$ on the $N-\mu$ remaining
indices, under their respective cardinality hypotheses.  Neither equality
follows from $d^p=0$ alone.

\subsection{The quotient condition at tuple degree zero}

Exactness controls positive tuple degree.  At tuple degree zero, however,
being a boundary means satisfying an additional quotient condition: the
value at a subset $I$ must lie in $\mathcal I_I$.  Accordingly, for $a\geqslant0$, set
\[
 \mathcal S_a(X):=\{c\in F_a(X;\Lambda^eV):
          c(I)\in\mathcal I_I\text{ for every }I\in\tbinom Xa\}.
\]
Equivalently, $c\in\mathcal S_a(X)$ exactly when every $c(I)$ maps to
zero in $\Lambda^e(V/W_I)$.  Since
$\mathcal I_I\subseteq\mathcal I_J$ when $I\subseteq J$, the inclusion
operators preserve this condition.  Lemma~\ref{ex:exact} and
Proposition~\ref{sc:structure}~(\ref{item:sc-kernel-d})--(\ref{item:sc-kernel-dp})
alone do not impose this quotient condition.  The next lemma enforces it by adding a kernel term
without changing the prescribed inclusion image; the following lemma treats
a sum of images generated from smaller subsets.

\begin{lemma}\label{sp:repair}
Let $a\geqslant0$ and $c\in F_a(X;\Lambda^eV)$.
\begin{enumerate}[label=(\arabic*),ref=\arabic*]
\item\label{item:sp-repair-d} If $D\geqslant2a+e+1$ and
$(E_{a+1,a}c)(J)\in\mathcal I_J$ for every
$J\in\binom X{a+1}$, then there exists
$q\in F_{a+1-p}(X;\Lambda^eV)$ such that
$c-E_{a,a+1-p}q\in\mathcal S_a(X)$.
\item\label{item:sp-repair-dp} If $D\geqslant2a+p+e-1$ and
$(E_{a+p-1,a}c)(J)\in\mathcal I_J$ for every
$J\in\binom X{a+p-1}$, then there exists
$q\in F_{a-1}(X;\Lambda^eV)$ such that
$c-E_{a,a-1}q\in\mathcal S_a(X)$.
\end{enumerate}
In both cases the correction is in the kernel of the operator in the
hypothesis, so the image under that operator is unchanged.
\end{lemma}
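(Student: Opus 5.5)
The plan is to recast both statements as a zigzag argument in the double complex $\bigl(C_e(\mu,b),\partial,d\bigr)$ built above. The key reformulation is this. For $c\in F_a(X;\Lambda^eV)=C_e(0,a)$, the condition $c\in\mathcal S_a(X)$ says exactly that $c(I)\in\mathcal I_I$ for all $I$. Since $\mathcal I_I$ is the degree-zero boundary image on the summand indexed by $I$, this is equivalent to $c\in\partial\,C_e(1,a-1)$. Likewise the hypothesis of (\ref{item:sp-repair-d}) reads $dc=\partial G$ for some $G\in C_e(1,a)$. The hypothesis of (\ref{item:sp-repair-dp}) reads $d^{p-1}c=\partial G$ for some $G\in C_e(1,a+p-2)$. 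So we must show
\[
 c\in\operatorname{im}d^{p-1}+\operatorname{im}\partial
 \quad\text{in case (\ref{item:sp-repair-d})},
 \qquad
 c\in\operatorname{im}d+\operatorname{im}\partial
 \quad\text{in case (\ref{item:sp-repair-dp})}.
\]
The closing remark of the lemma is automatic, because the correction term lies in $\operatorname{im}d^{p-1}\subseteq\ker d$, respectively $\operatorname{im}d\subseteq\ker d^{p-1}$.

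I will treat case (\ref{item:sp-repair-d}); case (\ref{item:sp-repair-dp}) is the same with the roles of $d$ and $d^{p-1}$ exchanged. Start from $dc=\partial G_1$ with $G_1\in C_e(1,a)$. The first step is to find $G'_1\in C_e(1,a-1)$ and a correction $q_1$ with $G_1-dG'_1$ adjusted so that the zigzag moves up. Concretely, I would push $G_1$ up in tuple degree. One has $\partial(d^{p-1}G_1)=d^{p}c=0$, so $d^{p-1}G_1$ is a $\partial$-cycle of positive tuple degree. Lemma~\ref{ex:exact} applies on every support set of size at most $D$, so $d^{p-1}G_1=\partial G_2$ with $G_2$ one tuple degree higher. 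Iterating, alternating the operators $d$ and $d^{p-1}$ so that each composite is $d^p=0$, produces cycles of increasing tuple degree. Above degree $e$ everything vanishes, since $C_e(\mu,b)=0$ for $\mu>e$; this is where the extra $e$ in the bound on $D$ is consumed.

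Then I descend. At each tuple degree $\mu$, with the tuple fixed, the current element lies in $\ker d$ or $\ker d^{p-1}$ on the remaining $N-\mu$ indices, modulo $\operatorname{im}\partial$. Proposition~\ref{sc:structure}~(\ref{item:sc-kernel-d}) or (\ref{item:sc-kernel-dp}) then writes it as an element of $\operatorname{im}d^{p-1}$ or $\operatorname{im}d$ plus a boundary. The commutation $\partial d=d\partial$ from~\eqref{ex:commutation} transports this decomposition one tuple degree down. Arriving at tuple degree zero gives $c=d^{p-1}q+\partial H$ with $q\in F_{a+1-p}(X;\Lambda^eV)$, which is the claim. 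Concretely, $c-E_{a,a+1-p}q\in\mathcal S_a(X)$ after absorbing the scalar $(p-1)!$, which is nonzero mod $p$.

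The main obstacle is the cardinality bookkeeping. At every step, the support sets $\underline{\mathbf x}\sqcup I$ on which Lemma~\ref{ex:exact} is invoked must have size at most $D$, so that the arc property gives independence. The remaining index sets must also satisfy the hypotheses $\lvert X\setminus\underline{\mathbf x}\rvert\geqslant2b+1$ or $2b+p-1$ of Proposition~\ref{sc:structure}. If the naive alternating zigzag overshoots $D$, the first thing I would try is a different order of moves. I would use only $d$-steps upward in case (\ref{item:sp-repair-d}), invoking Proposition~\ref{sc:structure}~(\ref{item:sc-kernel-d}) at each tuple degree, so that subset degree grows by one per level. The largest support then has size about $2a+e+1$, which is exactly the bound assumed, and analogously $2a+p+e-1$ in case (\ref{item:sp-repair-dp}). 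I expect the stated bounds on $D$ to be tailored to that scheme.
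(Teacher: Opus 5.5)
Your reformulation is correct. The tuple-degree-zero boundary image on the summand indexed by $I$ is $\mathcal I_I$, so $\mathcal S_a(X)=\partial C_e(1,a-1)$. The hypotheses read $dc\in\operatorname{im}\partial$ and $d^{p-1}c\in\operatorname{im}\partial$, and the closing remark is automatic.

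The gap is in the chase itself, and reordering the moves cannot fix it. Each upward step lifts a $\partial$-cycle $hY_\mu$ with $h\in\{d,d^{p-1}\}$, and the support $\underline{\mathbf x}\sqcup I$ grows by $1$ or by $p-1$ according to $h$. The alternation is forced. From $dc=\partial Y_1$ you get $\partial(dY_1)=d^2c$, which is nonzero in general for $p>2$. So your fallback of using only $d$-steps produces elements that are not cycles, and Lemma~\ref{ex:exact} cannot lift them.

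With the forced alternation, reaching $C_e(e+1,\cdot)=0$ in case~(1) requires Lemma~\ref{ex:exact} on sets of size $a+1+\lceil e/2\rceil(p-1)+\lfloor e/2\rfloor$. This exceeds $2a+e+1$ exactly when $a<\lceil e/2\rceil(p-2)$; in case~(2) the analogous condition is $a<\lfloor e/2\rfloor(p-2)$. For example, take $p=5$, $e=1$, $a=1$. The lemma allows $D=4$, but the chase must show that the cycle $d^{4}Y_1\in C_e(1,5)$ vanishes. On a $6$-set $S$ of arc vectors in dimension $4$, the map $\partial_S$ on $\mathcal C_1^1(S)$ has the linear dependencies among the $w_s$ as its kernel, so exactness is unavailable.

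The descent has a second problem. It applies Proposition~\ref{sc:structure}(i)--(ii) in $C_e(\mu,b_\mu)$, which needs $N-\mu\geqslant2b_\mu+1$ or $N-\mu\geqslant2b_\mu+p-1$ with $b_\mu$ growing. The lemma assumes nothing about $N$ beyond $N\geqslant D$. This is what the paper means when it says that Lemma~\ref{ex:exact} and Proposition~\ref{sc:structure} alone do not impose the quotient condition. Lemma~\ref{sp:repair} is the separate input that lets the descent in Lemma~\ref{ex:gapone} stop at tuple degree zero.

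The missing idea is the paper's coordinate argument, which never leaves tuple degree zero.
\begin{enumerate}[label=(\alph*)]
\item Choose $B\subseteq X$ with $(w_b)_{b\in B}$ a basis of $V$. For $I\subseteq B$, the condition $c(I)\in\mathcal I_I$ says that the $w_K$-coordinate of $c(I)$ vanishes for every $K\in\binom Be$ with $K\cap I=\varnothing$.
\item Fix $K$. The hypothesis puts the scalar function $I\mapsto[c(I)]_K$ on $\binom{B\setminus K}a$ in $\ker d$ (resp.\ $\ker d^{p-1}$). Proposition~\ref{sc:structure}(i)--(ii) then applies on the $D-e$ points of $B\setminus K$. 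This, not a climb through $e$ tuple degrees, is where $e$ enters the bound.
\item Extend the preimages by zero, multiply by $w_K$, and sum over $K$. This repairs every $I\subseteq B$.
\item Adjoin the points of $X\setminus B$ one at a time. For a new point $x$, pass to $V/\langle w_x\rangle$ and apply the lemma inductively, in subset degree $a-1$ and dimension $D-1$, to $I\mapsto\overline{c(I\cup\{x\})}$.
\item Lift the correction by a function supported on sets containing $x$. Earlier indices are then untouched, and the lifting ambiguity lies in $w_x\wedge\Lambda^{e-1}V\subseteq\mathcal I_{I\cup\{x\}}$.
\end{enumerate}
The case $a=0$ is checked directly.
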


\begin{proof}
By~\eqref{sc:composition}, both corrections lie in the required
kernels: each relevant composite is $p$ times an inclusion operator.  We
prove the two assertions simultaneously by induction on $a$.

For $a=0$, in case~(\ref{item:sp-repair-d}) the image of $c(\varnothing)$ vanishes in the
quotient by every $W_{\{x\}}$, so
$c(\varnothing)\in\bigcap_x\mathcal I_{\{x\}}=0$; it suffices to use
$e+1$ independent arc vectors.  In case~(\ref{item:sp-repair-dp}), choose $B\subseteq X$ so that $(w_b)_{b\in B}$ is a basis of $V$.
For each $K\in\binom Be$, the bound $D-e\geqslant p-1$ supplies a
$(p-1)$-set $J\subseteq B\setminus K$.  The $K$-coordinate of
$c(\varnothing)$ is zero because its image in
$\Lambda^e(V/W_J)$ is zero.  Thus again $c(\varnothing)=0$.

Let $a\geqslant1$ and choose $B\subseteq X$ so that $(w_b)_{b\in B}$ is a basis of $V$.  For
$K\in\binom Be$, let $w_K$ denote the wedge of the vectors indexed by
$K$ in inherited order.  The image of $w_K$ in $\Lambda^e(V/W_I)$, for $I\subseteq B$,
is nonzero exactly when $I\cap K=\varnothing$.  For fixed $K$, the
$w_K$-coordinates of $c(I)$ on these sets form a function on the
$a$-subsets of $B\setminus K$.  In case~(\ref{item:sp-repair-d}) the
hypothesis puts this function in $\ker d$; in
case~(\ref{item:sp-repair-dp}) it lies in $\ker d^{p-1}$.  Since $\lvert B\setminus K\rvert=D-e$, the corresponding cardinality
hypothesis of Proposition~\ref{sc:structure} holds.  Its parts
(\ref{item:sc-kernel-d}) and~(\ref{item:sc-kernel-dp}) supply a preimage
under $d^{p-1}$ in the first case and under $d$ in the second.  Absorb the nonzero scalar $(p-1)!$ to write the
first correction with $E_{a,a+1-p}$.  Extend the coordinate preimages by
zero to $X$, multiply by $w_K$, and sum over $K$.  After subtracting this
kernel correction for the relevant inclusion map,
$c(I)\in\mathcal I_I$ for every $a$-subset $I\subseteq B$.

Assume $c(I)\in\mathcal I_I$ for every $a$-subset $I\subseteq Y$, where
$Y\supseteq B$, and choose $x\notin Y$.  Set
$\overline V=V/\langle w_x\rangle$ and
\[
 \overline c_x(I)=\overline{c(I\cup\{x\})},
 \qquad I\in\binom Y{a-1}.
\]
The projected vectors indexed by $Y$ form a spanning arc in
$\overline V$: adjoining $w_x$ proves independence, while the image of the
basis indexed by $B\subseteq Y$ spans.  Evaluating the original condition
on a target set containing $x$ and projecting shows that $\overline c_x$
satisfies the same quotient condition in dimension $D-1$ and subset degree
$a-1$.  The relevant inclusion map is $d$ in
case~(\ref{item:sp-repair-d}) and $d^{p-1}$ in
case~(\ref{item:sp-repair-dp}); the original dimension bound is stronger
than the inductive one by one.

To lift the inductive correction, put $r=a+1-p$ in
case~(\ref{item:sp-repair-d}) and $r=a-1$ in
case~(\ref{item:sp-repair-dp}).  Any nonzero inductive correction has
$r\geqslant1$ and is $E_{a-1,r-1}\overline q$ for some
$\overline q\in F_{r-1}(Y;\Lambda^e\overline V)$.  Choose a linear section
$\sigma_x:\Lambda^e\overline V\to\Lambda^eV$ and set
\[
 q_x(L)=
 \begin{cases}
 \sigma_x\bigl(\overline q(L\setminus\{x\})\bigr),
 &x\in L,\ L\setminus\{x\}\subseteq Y,\\
 0,&\text{otherwise}.
 \end{cases}
\]
Then, for $I\in\binom Y{a-1}$ and $J\in\binom Ya$,
\[
 \overline{(E_{a,r}q_x)(I\cup\{x\})}
 =(E_{a-1,r-1}\overline q)(I),
 \qquad (E_{a,r}q_x)(J)=0.
\]
Hence $E_{a,r}q_x$ realizes the quotient correction on the new indices
and vanishes on the old ones.  It lies in the kernel of the relevant
inclusion operator because~\eqref{sc:composition} gives
\[
 E_{a+1,a}E_{a,a+1-p}=pE_{a+1,a+1-p}=0,
 \qquad
 E_{a+p-1,a}E_{a,a-1}=pE_{a+p-1,a-1}=0.
\]
The quotient of $\overline V$ by the projected span indexed by $I$ is
naturally isomorphic to $V/W_{I\cup\{x\}}$.  Any ambiguity in the
lift lies in
$w_x\wedge\Lambda^{e-1}V\subseteq\mathcal I_{I\cup\{x\}}$.  The corrected new value therefore lies in $\mathcal I_{I\cup\{x\}}$.  If the inductive
correction is zero, take no lift.  Adding the elements of $X\setminus B$
one at a time proves both claims.
\end{proof}

A representation by coefficients on smaller subsets can also be chosen
with each coefficient array in the corresponding $\mathcal S_a(X)$.  The
next lemma records the precise statement.

\begin{lemma}\label{sp:supported-low}
Let $t\geqslant1$ and suppose $D\geqslant2t+e-1$.  Then
\begin{equation}\label{sp:strictness}
 \mathcal S_t(X)\cap\mathcal L_t(X;\Lambda^eV)
 =\sum_{0\leqslant a<t}E_{t,a}\mathcal S_a(X).
\end{equation}
Equivalently, if
$y\in\mathcal S_t(X)\cap\mathcal L_t(X;\Lambda^eV)$, then there exist
$c_a\in\mathcal S_a(X)$ such that
\[
 y=\sum_{0\leqslant a<t}E_{t,a}c_a.
\]
In particular $c_0=0$, since $\mathcal S_0(X)=0$.
\end{lemma}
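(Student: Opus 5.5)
The inclusion ``$\supseteq$'' is immediate: $\mathcal I_I\subseteq\mathcal I_J$ for $I\subseteq J$, so each $E_{t,a}c_a$ with $c_a\in\mathcal S_a(X)$ lies in $\mathcal S_t(X)$. Its $\Delta_t$ values vanish by the cancellation argument in the proof of Proposition~\ref{sc:structure}(\ref{item:sc-low}), applied coordinatewise. For ``$\subseteq$'' I will use induction on $t$, mirroring the proof of Lemma~\ref{sp:repair}.

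Let $y\in\mathcal S_t\cap\mathcal L_t$. I will use the sharper form \eqref{sc:lower-images} of Proposition~\ref{sc:structure}(\ref{item:sc-low}), which needs $N\geqslant 2t$; this holds because $N\geqslant D\geqslant 2t+e-1$. It gives
\[
 y=E_{t,t-1}c+\sum_{m\geqslant1}E_{t,t-mp}g_m
\]
with unconstrained $\Lambda^eV$-valued coefficients $c,g_m$. The task is to modify these coefficients so that each lies in the corresponding $\mathcal S$. I will do this one basis coordinate at a time.

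\textbf{Step 1 (basis-supported sets).} Fix $B\subseteq X$ indexing a basis of $V$, and fix $K\in\binom Be$. The $w_K$-coordinate of $y$, restricted to $t$-subsets of $B\setminus K$, is a scalar function $y_K$ lying in $\mathcal L_t(B\setminus K;\mathbb F)$. The quotient condition is automatic on sets meeting $K$. Since $\lvert B\setminus K\rvert=D-e\geqslant 2t-1$, I expect to need the stronger bound $\geqslant 2t$ to apply Proposition~\ref{sc:structure}(\ref{item:sc-low}) on $B\setminus K$. That bound is exactly where the hypothesis $D\geqslant 2t+e-1$ is tight, so I will check whether the missing case $\lvert B\setminus K\rvert=2t-1$ can be handled directly. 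Possibly the basis description of Lemma~\ref{sc:subset-basis} already gives $\mathcal L_t=\sum_{b<t}\operatorname{im}E_{t,b}$ there, because no $t$-set $I$ with $i_s\geqslant 2s$ and $t\leqslant N-t$ exists when $N<2t$. Granting this, write $y_K=\sum_{a<t}E_{t,a}(c_a)_K$ with scalar $(c_a)_K$ on subsets of $B\setminus K$, extended by zero. Then $\sum_K (c_a)_K w_K$ takes values in $\mathcal I_I$ automatically: on a set $I\subseteq B$, the coefficient is nonzero only for $K$ disjoint from $I$, and that is the wrong direction. I therefore expect to instead take values $(c_a)_K(I)\,w_K$ and subtract a correction accounting for the fact that $w_K$ with $K\cap I\ne\varnothing$ already lies in $\mathcal I_I$. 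The cleanest formulation is to work in $\Lambda^e(V/W_I)$ and treat only coordinates with $K\cap I=\varnothing$ as obstructions. This reduces the problem on $B$ to showing that the obstruction functions vanish, which follows because $y\in\mathcal S_t$ forces $y_K(I)=0$ whenever $I\cap K=\varnothing$. That in turn implies $y_K=0$ on $B\setminus K$, and taking all $(c_a)_K=0$ there works.

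\textbf{Step 2 (adding one element at a time).} Following the second half of the proof of Lemma~\ref{sp:repair}, suppose representatives with coefficients in $\mathcal S_a(Y)$ exist on $Y\supseteq B$, and adjoin $x\notin Y$. Project to $\overline V=V/\langle w_x\rangle$ and consider $\overline y_x(I)=\overline{y(I\cup\{x\})}$ on $\binom Y{t-1}$. This lies in $\mathcal S_{t-1}$ for the projected arc in dimension $D-1$, and it lies in $\mathcal L_{t-1}$ because a $(t-1)$-matching on $Y$ extended by a pair $(x,\cdot)$ yields the relation. I expect this extension step to need care, since the extra pair involves a second point; this is the main obstacle. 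If it holds, the dimension bound $D-1\geqslant 2(t-1)+e-1$ survives and induction gives $\overline y_x=\sum_a E_{t-1,a-1}\overline c_{a}$ with $\overline c_a\in\mathcal S_{a-1}$. Lift these via a section $\sigma_x$ exactly as in the construction of $q_x$ in the proof of Lemma~\ref{sp:repair}. The lifting ambiguity lies in $w_x\wedge\Lambda^{e-1}V\subseteq\mathcal I_{I\cup\{x\}}$, so the lifted coefficients stay in $\mathcal S$ and vanish on old sets. Subtracting their contribution makes $y$ vanish modulo $\mathcal I$ on the new sets, and the base step then absorbs the remainder. The identity $c_0=0$ follows since $\mathcal S_0(X)=\{c:c(\varnothing)\in\mathcal I_\varnothing=0\}$.
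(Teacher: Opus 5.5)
Your proposal has a genuine gap: it never controls the freedom in the coefficients. That freedom is exactly $\ker\Phi_t$ for the total inclusion map $\Phi_t\colon(c_a)_{a<t}\mapsto\sum_{a<t}E_{t,a}c_a$.

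\textbf{Step~1.} You correctly note that $y\in\mathcal S_t(X)$ forces $y_K=0$ on $\binom{B\setminus K}{t}$. But ``taking all $(c_a)_K=0$ there'' changes $\Phi_t(c)$ on $t$-sets that meet $B\setminus K$ without being contained in it. You give no argument that the $w_K$-part of $y$ on those sets can still be represented by coefficients vanishing on subsets of $B\setminus K$. What you actually know is that the restricted tuple $((c_a)_K|_{B\setminus K})_a$ lies in $\ker\Phi_t$ on the ground set $B\setminus K$. You must subtract an element of $\ker\Phi_t$ on all of $X$ that restricts to it, and extending an arbitrary local kernel element by zero does not stay in the kernel. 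The paper supplies exactly this through elementary relations. For $0\leqslant a<b<t$, the change $c_b\mapsto c_b+E_{b,a}Q$, $c_a\mapsto c_a-\binom{t-a}{b-a}Q$ lies in $\ker\Phi_t$ by~\eqref{sc:composition}, and it stays there when $Q$ is extended by zero. On any ground set of size at least $2t-1$ these relations generate $\ker\Phi_t$ (descend in $b$ using the basis of Lemma~\ref{sc:subset-basis}). That is where $D-e\geqslant 2t-1$ is used. Proposition~\ref{sc:structure}(\ref{item:sc-low}) on $B\setminus K$ is not needed at all, so your worry about needing size $2t$ there is misplaced.

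\textbf{Step~2.} This step inducts on the wrong object, because $\overline y_x$ need not lie in $\mathcal L_{t-1}$. Take $y=E_{t,t-1}c$ with $c\in\mathcal S_{t-1}(X)$ supported on a single $(t-1)$-set $I_0\subseteq Y$, with value $\omega\in\mathcal I_{I_0}$ whose image in $\Lambda^e\overline V$ is nonzero. Then $\overline y_x=\overline c|_Y$, and $\Delta_{t-1}\overline y_x=\pm\overline\omega\neq0$ on any $(t-1)$-matching $\mathcal P$ in $Y$ with $I_0\in\mathcal T(\mathcal P)$.

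More basically, the new coefficients on sets containing $x$ must represent $y(J\cup\{x\})$ minus the contribution $\sum_a(E_{t-1,a}c_a)(J)$ of the coefficients already fixed on subsets of $Y$. A lift through $\sigma_x$ matches this only modulo $w_x\wedge\Lambda^{e-1}V$. Your closing sentence (``makes $y$ vanish modulo $\mathcal I$'') borrows the logic of Lemma~\ref{sp:repair}. Here, however, $y$ already lies in $\mathcal S_t(X)$, and it is the coefficients that must be repaired.

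\textbf{The paper's remedy.} Keep one global representation throughout, and prove by induction on $t$ the stronger claim: every representation of an element of $\mathcal S_t$ can be moved by elementary relations to one with $c_a\in\mathcal S_a$. In the extension step, project the coefficients rather than $y$: $\overline c^{\,x}_{a-1}(I)=\overline{c_a(I\cup\{x\})}$. Evaluating $y$ at $J\cup\{x\}$, with the old coefficients already repaired, shows that their total image satisfies the quotient condition in degree $t-1$. Membership in $\mathcal L_{t-1}$ is automatic, since it is a sum of lower images. Each inductive elementary relation in degrees $(\alpha,\beta)$ lifts via $\sigma_x$ to one in degrees $(\alpha+1,\beta+1)$ with the same binomial coefficient $\binom{t-1-\alpha}{\beta-\alpha}$. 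Hence $\Phi_t(c)=y$ is preserved exactly and old indices are untouched.
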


\begin{proof}
The right side of~\eqref{sp:strictness} lies in $\mathcal S_t(X)$ and in
$\mathcal L_t$, so only the reverse inclusion requires proof.
Proposition~\ref{sc:structure}~(\ref{item:sc-low}) gives an initial decomposition
$y=\sum_{a<t}E_{t,a}c_a$.  We alter its coefficient tuple inside the kernel of the total inclusion map
\[
 \Phi_t:\bigoplus_{a<t}F_a(X;\Lambda^eV)\longrightarrow
 F_t(X;\Lambda^eV),
 \qquad
 (c_a)_a\longmapsto\sum_{a<t}E_{t,a}c_a.
\]

The following elementary relations lie in $\ker\Phi_t$.  If
$0\leqslant a<b<t$ and $Q\in F_a(X;\Lambda^eV)$, then changing
\begin{equation}\label{sp:elementary-relation}
 c_b\mapsto c_b+E_{b,a}Q,
 \qquad
 c_a\mapsto c_a-\binom{t-a}{b-a}Q
\end{equation}
leaves $\Phi_t(c)$ unchanged by~\eqref{sc:composition}.  Conversely, if the ground set has at least $2t-1$ elements, these
elementary relations generate $\ker\Phi_t$.  Using
Lemma~\ref{sc:subset-basis}, descend in $b$.
Relation~\eqref{sp:elementary-relation} moves every basis component
$\langle I\rangle_b$ with $\lvert I\rvert=i<b$ from $c_b$ to the
coefficient $c_i$.  What remains in $c_b$ is a combination of
$\langle I\rangle_b$ with $\lvert I\rvert=b$, and $E_{t,b}$ maps these
to distinct basis vectors $\langle I\rangle_t$, with coefficients in
$\Lambda^eV$.  Hence all such
remainders vanish for a tuple in $\ker\Phi_t$.  No division is used, and an
elementary relation extends to a larger ground set by extending $Q$ by zero.

We prove by induction on $t$ the stronger statement that every
representation of an element of $\mathcal S_t(X)$ can be adjusted by a
finite sum of the elementary relations
\eqref{sp:elementary-relation} so that each coefficient lies in the
corresponding $\mathcal S_a(X)$.  For $t=1$, the only coefficient is
$c_0(\varnothing)$, and $y=E_{1,0}c_0\in\mathcal S_1(X)$ places this
vector in every $\mathcal I_{\{x\}}$; since $D\geqslant e+1$, it is zero.

Let $t\geqslant2$ and choose $B\subseteq X$ so that $(w_b)_{b\in B}$ is a basis of $V$.  Fix
$K\in\binom Be$ and restrict the $K$-coordinates of the coefficient tuple
to $B\setminus K$.  Because $y\in\mathcal S_t(X)$, the total degree-$t$ image of these
coordinates is zero on $B\setminus K$.  Since
$\lvert B\setminus K\rvert=D-e\geqslant2t-1$, the preceding description of
$\ker\Phi_t$ gives elementary relations that eliminate the $w_K$-coordinates
with $I\cap K=\varnothing$.  Extend those relations by zero to $X$ and sum
over $K$.  After these corrections, $c_a(I)\in\mathcal I_I$ whenever $I\subseteq B$.

Suppose $c_a(I)\in\mathcal I_I$ for every coefficient index $I\subseteq Y$,
where $Y\supseteq B$, and choose $x\notin Y$.  In
$\overline V=V/\langle w_x\rangle$, define for $1\leqslant a<t$
\[
 \overline c^{\,x}_{a-1}(I)
 =\overline{c_a(I\cup\{x\})}.
\]
As above, the projected vectors indexed by $Y$ form a spanning arc in
$\overline V$.  Evaluate the total image on $J\cup\{x\}$ with
$J\in\binom Y{t-1}$ and project to $\overline V$.  The resulting quotient
tuple satisfies the same condition in degree $t-1$, so the induction
hypothesis applies.

Consider one quotient elementary relation, with
$0\leqslant\alpha<\beta<t-1$ and
$\overline Q\in F_\alpha(Y;\Lambda^e\overline V)$:
\[
 \overline c^{\,x}_\beta\mapsto
 \overline c^{\,x}_\beta+E_{\beta,\alpha}\overline Q,
 \qquad
 \overline c^{\,x}_\alpha\mapsto
 \overline c^{\,x}_\alpha-
 \binom{t-1-\alpha}{\beta-\alpha}\overline Q.
\]
Choose a linear section $\sigma_x:\Lambda^e\overline V\to\Lambda^eV$ and
let $Q_x\in F_{\alpha+1}(X;\Lambda^eV)$ be supported on sets containing
$x$, with
$Q_x(L)=\sigma_x(\overline Q(L\setminus\{x\}))$ there.  Its lift is
\[
 c_{\beta+1}\mapsto c_{\beta+1}+E_{\beta+1,\alpha+1}Q_x,
 \qquad
 c_{\alpha+1}\mapsto c_{\alpha+1}-
 \binom{t-1-\alpha}{\beta-\alpha}Q_x.
\]
Because
$\binom{t-(\alpha+1)}{(\beta+1)-(\alpha+1)}
 =\binom{t-1-\alpha}{\beta-\alpha}$,
this is exactly~\eqref{sp:elementary-relation} with both degrees increased
by one; in particular, it preserves $\Phi_t(c)$.  Its support makes it
vanish on old indices, while projection recovers the quotient relation.
The quotient by the projected span indexed by $I$ is naturally isomorphic
to $V/W_{I\cup\{x\}}$.
Lifting all quotient relations in this way and repeating over
$X\setminus B$ yields $c_a\in\mathcal S_a(X)$ for every $a<t$, proving
\eqref{sp:strictness}.
\end{proof}

\subsection{Lifting inclusion-map relations}

The next lemma starts from $\partial G=dR$ and descends in tuple degree by
using parts~(\ref{item:sc-kernel-d}) and
(\ref{item:sc-kernel-dp}) alternately.
If the descent reaches tuple degree zero, the preceding subsection imposes
the quotient condition; Lemma~\ref{ex:exact} then lifts the corrected
relation back.  We verify the required bounds during the descent and then lift back to $G$.

\begin{lemma}\label{ex:gapone}
Let $1\leqslant\mu\leqslant e$ and $t\geqslant1$.  Suppose the following
conditions hold:
\begin{align}
 N&\geqslant2t+\mu-1,\qquad D\geqslant t+\mu-1;
 \label{ex:bounds}\\
 D&\geqslant2t-(\mu-1)(p-2)+e-1,
 \qquad\text{whenever }t\geqslant
 1+\left\lceil\frac{\mu-1}{2}\right\rceil(p-2).
 \label{ex:bottom-bound}
\end{align}
If $G\in C_e(\mu,t-1)$ and
$\partial G=dR$ for some $R\in C_e(\mu-1,t-1)$, then there exist
$C\in C_e(\mu,t-2)$ and $Z\in C_e(\mu+1,t-2)$ such that
\[
 G=dC+\partial Z.
\]
In particular, $G=0$ when $t=1$, and $Z=0$ when $\mu=e$.
\end{lemma}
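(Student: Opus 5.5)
We argue by induction on $t$, starting from the relation $\partial G=dR$ in $C_e(\mu-1,t)$ and descending in tuple degree by a zig-zag through the double complex $(C_e,\partial,d)$. The first step uses $\partial d=d\partial$ and $\partial^2=0$: applying $\partial$ to $\partial G=dR$ gives $d(\partial R)=0$ with $\partial R\in C_e(\mu-2,t)$. Proposition~\ref{sc:structure}~(\ref{item:sc-kernel-d}) then writes $\partial R=d^{p-1}R'$, applied coordinatewise with the tuple fixed on the $N-\mu+2$ remaining indices; the bound $N\geqslant 2t+\mu-1$ gives the needed cardinality. Applying $\partial$ once more yields $d^{p-1}\partial R'=0$, and part~(\ref{item:sc-kernel-dp}) gives $\partial R'=dR''$. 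So at each descent of one tuple degree the subset degree of the preimage drops alternately by $1$ and by $p-1$, for a net drop of $p$ per two steps in which $\mu+b$ falls by $p-2$. This is where the threshold $1+\lceil(\mu-1)/2\rceil(p-2)$ in \eqref{ex:bottom-bound} should come from. If the subset degree becomes negative before tuple degree $0$ is reached, the relevant term vanishes and the descent stops early. Otherwise we reach tuple degree $0$ with a relation of the form $d(\cdot)\in\operatorname{im}\partial$ or $d^{p-1}(\cdot)\in\operatorname{im}\partial$. There the quotient condition of Lemma~\ref{sp:repair} holds, because the image of $\partial$ in tuple degree $0$ is $\mathcal I_S$. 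The lemma lets us modify the bottom preimage by a kernel term so that it lies in $\mathcal S_a(X)$, and its dimension hypothesis is exactly \eqref{ex:bottom-bound} after substituting the final subset degree $a\approx t-\lceil(\mu-1)/2\rceil p+\dots$.

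Once the bottom element lies in $\mathcal S_a(X)=\operatorname{im}\partial$ (tuple degree $0$, using $\ker\pi_S=\mathcal I_S$), we climb back up. At each level, suppose we have $\partial X_j=d^{\epsilon}Y_j$ with $X_j$ corrected by $\partial$-exact and $d$-image terms. Then $X_j-d^{\epsilon'}Y_{j-1}$ is $\partial$-closed in positive tuple degree with $\mu+b\leqslant t+\mu-1\leqslant D$. By Lemma~\ref{ex:exact} it is therefore $\partial$-exact, since $D\geqslant t+\mu-1$ makes every relevant $S$ independent. Pushing this back up one level gives a new expression whose discrepancy is again $\partial$-closed. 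At the top level this produces $R=dR_0+\partial(\cdot)$, hence $\partial(G-dC_1)=0$ for suitable $C_1$. Lemma~\ref{ex:exact} then gives $G-dC_1=\partial Z$ with $Z\in C_e(\mu+1,t-2)$, and $Z=0$ when $\mu=e$ since $C_e(e+1,\cdot)=0$. For $t=1$ every $d$-preimage has negative subset degree, so $\partial G=0$ forces $G=\partial Z$ with $Z\in C_e(\mu+1,-1)=0$, giving $G=0$.

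The main obstacle will be the climb back: each lift must keep the correction inside $\operatorname{im}d$ (or $\operatorname{im}d^{p-1}$) while absorbing the $\partial$-exact error, and the commutation $\partial E_{a,b}=E_{a+1,b+1}\partial$ must be used carefully so that the $\partial$-primitives are themselves in the correct image. The precise bookkeeping of which of $d$ and $d^{p-1}$ appears at each parity of the step, and checking that the cardinality and dimension bounds hold at every intermediate $(\mu',b')$, is what I expect to require the most care. I would first verify the case $\mu=1$, where tuple degree $0$ is reached immediately and only Lemma~\ref{sp:repair}~(\ref{item:sp-repair-d}) is needed, and then organize the general case as an induction on $\mu$.
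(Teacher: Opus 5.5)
Your architecture is the paper's: a zig-zag descent alternating parts~(\ref{item:sc-kernel-d}) and~(\ref{item:sc-kernel-dp}) of Proposition~\ref{sc:structure}, the repair of Lemma~\ref{sp:repair} at tuple degree zero, and a climb back through Lemma~\ref{ex:exact}. The gap is the climb back. You flag it as the main obstacle but do not resolve it, and the one formula you write down there fails.

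Record the descent as relations $\partial X_{j-1}=h_jX_j$, with $X_{-1}=G$, $X_0=R$, and $h_0=d$, $h_1=d^{p-1}$, $h_2=d,\dots$ alternating. Suppose the bottom term has been replaced by $\widehat X_j=\partial Y_j$ with $h_j\widehat X_j=h_jX_j$, and put $\widehat X_{j-1}=X_{j-1}-h_jY_j$.
\begin{itemize}
\item Using only $\partial d=d\partial$, you get $\partial\widehat X_{j-1}=h_jX_j-h_j\widehat X_j=0$, so Lemma~\ref{ex:exact} gives $\widehat X_{j-1}=\partial Y_{j-1}$.
\item The relation one level higher survives: $h_{j-1}\widehat X_{j-1}=h_{j-1}X_{j-1}-h_{j-1}h_jY_j=h_{j-1}X_{j-1}$, because consecutive operators compose to $d^p=0$.
\end{itemize}
That cancellation is the missing idea; the $\partial$-primitives $Y_j$ need not lie in any particular image. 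Your top-level formula $R=dR_0+\partial(\cdot)$ is wrong. It gives only $\partial(G-dC_1)=d^2R_0$, which need not vanish when $p\geqslant3$. The correct form is $R=d^{p-1}Y_1+\partial C$, whence $dR=\partial dC$ and $\partial(G-dC)=0$.

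Two further corrections.
\begin{itemize}
\item Over two descent steps the subset degree falls by $p-2$ and $\mu+b$ falls by $p$, not the reverse. So the bottom degree is $a=t-1-\lceil(\mu-1)/2\rceil(p-2)$, with $h=d$ for odd $\mu$ and $h=d^{p-1}$ for even $\mu$. Only with this value do the requirements $D\geqslant2a+e+1$ and $D\geqslant2a+p+e-1$ of Lemma~\ref{sp:repair} become exactly \eqref{ex:bottom-bound}.
\item Your $t=1$ argument is invalid. $R$ has subset degree $0$, not negative, so there is no a~priori reason for $\partial G=dR$ to vanish. Instead, $G=0$ follows from the general conclusion, since $C\in C_e(\mu,-1)$ and $Z\in C_e(\mu+1,-1)$ are zero. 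Inside the argument, a bottom term with $a=0$ is killed by the quotient condition at $\varnothing$, and a closed array of positive tuple degree and subset degree zero vanishes by exactness.
\end{itemize}
No induction on $t$ or on $\mu$ is needed; the descent is direct.
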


\begin{proof}
\smallskip\noindent\emph{Descending in tuple degree.}
Set $X_{-1}=G$ and $X_0=R$, so that
$\partial X_{-1}=dX_0$.  If
$\partial X_{j-1}=dX_j$, then $d\partial X_j=0$.
Part~(\ref{item:sc-kernel-d}), applied with the remaining tuple fixed,
gives either $\partial X_j=0$ in the injective range or
$\partial X_j=d^{p-1}X_{j+1}$.  If instead
$\partial X_{j-1}=d^{p-1}X_j$, then
$d^{p-1}\partial X_j=0$, and
part~(\ref{item:sc-kernel-dp}) gives
$\partial X_j=dX_{j+1}$.
At each application of
part~(\ref{item:sc-kernel-d}) or~(\ref{item:sc-kernel-dp}), choose a
preimage for one orientation of each underlying remaining tuple and extend to the other
orientations by the alternating sign rule.  Since $d$ and $d^{p-1}$ act
only on the subset index, the resulting $X_{j+1}$ lies in the required
space $C_e(\cdot,\cdot)$.  Thus the maps on the subset index
alternate between $d$ and $d^{p-1}$.  Write
$X_j\in C_e(\mu-1-j,b_j)$ for $j\geqslant0$, with $b_0=t-1$.
The subsequent subset degrees are
\[
 b_{2s-1}=b_{2s}=t-s(p-2)-1
 \qquad\text{for }s\geqslant1,
\]
exactly as long as the corresponding term exists.  For the first kernel calculation, fixing the remaining
$(\mu-2)$-tuple leaves at least $N-(\mu-2)\geqslant2t+1$ subset indices.
At later stages the tuple is shorter, while the sum of the domain and
codomain subset degrees of the tested $d$ or $d^{p-1}$ is at most $2t+1$.
Thus every such application satisfies its cardinality hypothesis.  If the next subset degree would be negative, the preceding
map $d$ is already injective and the descent stops.  For $j\geqslant0$, the sets indexing the summand containing $X_j$ have
size $\mu-1-j+b_j\leqslant t+\mu-2$.  The arc vectors indexed by each such
set are linearly independent by the first bound on $D$ in
\eqref{ex:bounds}, so Lemma~\ref{ex:exact} is available at every
$\partial$-lifting step.

\smallskip\noindent\emph{Tuple degree zero.}
If the descent stops with positive tuple degree, the last $X_j$ is
$\partial$-closed and hence is $\partial$-exact by
Lemma~\ref{ex:exact}.  Otherwise the descent reaches tuple degree $0$.  Let $a$ be the subset
degree at that stage, and let $h\in\{d,d^{p-1}\}$ be the operator for
which $hX$ occurs in the preceding relation.  Because $hX$ is a
$\partial$-boundary, its value on each target set $J$ lies in
$\mathcal I_J$.  Lemma~\ref{sp:repair} changes $X$ by an element of
$\ker h$ so that the corrected array lies in the corresponding
$\mathcal S_a$-space.  Tuple degree zero is reached exactly when the threshold in
\eqref{ex:bottom-bound} is met.  Then Lemma~\ref{sp:repair} requires $D\geqslant2a+e+1$ for $h=d$ and
$D\geqslant2a+p+e-1$ for $h=d^{p-1}$.  At tuple degree zero,
$\mu=2s+1$ gives $a=t-s(p-2)-1$ and $h=d$, hence
$2a+e+1=2t-(\mu-1)(p-2)+e-1$; while $\mu=2s+2$ gives
$a=t-(s+1)(p-2)-1$ and $h=d^{p-1}$, hence
$2a+p+e-1=2t-(\mu-1)(p-2)+e-1$.  Thus in both parity cases the
dimension requirement is exactly~\eqref{ex:bottom-bound}.  If the threshold in~\eqref{ex:bottom-bound} is not met, the descent
stops earlier at positive tuple degree and no tuple-degree-zero adjustment
is needed.  By
Lemma~\ref{ex:exact}, the corrected tuple-degree-zero array is a
$\partial$-boundary.  If $a=0$, the quotient condition at the empty set forces it to be zero,
and we take the zero preimage in negative subset degree.
Similarly, a closed array of positive tuple degree and subset degree
zero is zero by exactness, since its preceding term is zero.

\smallskip\noindent\emph{Returning to $G$.}
Suppose $j\geqslant1$, $\partial X_{j-1}=h_jX_j$ with
$h_j\in\{d,d^{p-1}\}$, and the corrected term is
$\widehat X_j=\partial Y_j$ with $h_j\widehat X_j=h_jX_j$.  Replace
\[
 X_{j-1}\quad\text{by}\quad
 \widehat X_{j-1}=X_{j-1}-h_jY_j,
\]
where the same operator $h_j$ is taken in the shifted subset degree.
Then $\partial\widehat X_{j-1}=0$.  The image in the preceding relation is
unchanged because consecutive maps on the subset index are $d$ and
$d^{p-1}$ in some order and hence their composite is $d^p=0$.
Lemma~\ref{ex:exact} lifts the correction one step upward.  Repeating this
procedure gives $\widehat X_0=\partial C$ while preserving
$d\widehat X_0=dR$.  Hence
\[
 \partial(G-dC)=0.
\]
It is supported on sets of size $t+\mu-1\leqslant D$, so
Lemma~\ref{ex:exact} yields $G-dC=\partial Z$.  If $\mu=e$, the target
space for $Z$ is zero.
\end{proof}

\subsection{Descent in subset degree}

Fix an ordered $\mu$-tuple $\mathbf x$ and regard
$I\mapsto F_{\mathbf x}(I)$ as an $A_{\mathbf x}^{(e)}$-valued function
on subsets; membership in $\mathcal L_b$ is then imposed coefficientwise.
Proposition~\ref{sc:structure}~(\ref{item:sc-low}) writes such a
function as an element of $\operatorname{im}E_{b,b-1}$ plus terms from
degrees $b-mp$.  Lemma~\ref{ex:gapone} removes the
$\operatorname{im}E_{b,b-1}$ term, and the next lemma lowers the largest
remaining subset degree until none remains.

\begin{lemma}\label{loc:auxiliary-claim}
Let $1\leqslant\mu\leqslant e$ and $t\geqslant1$.  Assume the conditions~\eqref{ex:bounds} and~\eqref{ex:bottom-bound}.  If $G\in C_e(\mu,t-1)$ satisfies, for every ordered
$(\mu-1)$-tuple $\mathbf x$,
\[
 (\partial G)_{\mathbf x}\in
 \mathcal L_t(X\setminus\underline{\mathbf x};A_{\mathbf x}^{(e)}),
\]
then there exist $Z\in C_e(\mu+1,t-2)$ and
$C_i\in C_e(\mu,i)$ for $0\leqslant i<t-1$ such that
\[
 G=\partial Z+
 \sum_{0\leqslant i<t-1}E_{t-1,i}C_i.
\]
When $\mu=e$, the term $Z$ is zero; when $t=1$, one has $G=0$.
\end{lemma}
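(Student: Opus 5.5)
The plan is an induction on the largest subset degree still present, following the roadmap stated just before the lemma.

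\emph{Initial decomposition.} For each ordered $(\mu-1)$-tuple $\mathbf x$, the hypothesis says $(\partial G)_{\mathbf x}\in\mathcal L_t(X\setminus\underline{\mathbf x};A_{\mathbf x}^{(e)})$. The ground set has $N-\mu+1\geqslant 2t$ elements by \eqref{ex:bounds}, so Proposition~\ref{sc:structure}~(\ref{item:sc-low}), in the form \eqref{sc:lower-images}, applies coefficientwise. It gives
\[
 (\partial G)_{\mathbf x}=E_{t,t-1}R_{\mathbf x}+\sum_{m\geqslant1}E_{t,t-mp}Q^{(m)}_{\mathbf x}.
\]
We choose these representations for one orientation of each underlying tuple and extend by the alternating sign rule. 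Then $R\in C_e(\mu-1,t-1)$ and $Q^{(m)}\in C_e(\mu-1,t-mp)$.

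\emph{Case of no lower terms.} If all $Q^{(m)}$ vanish, then $\partial G=dR$, and Lemma~\ref{ex:gapone} gives $G=dC+\partial Z$ with $C\in C_e(\mu,t-2)$. Since $d=E_{t-1,t-2}$ on this degree, this is the desired form with only $C_{t-2}$ nonzero. The degenerate claims also follow from Lemma~\ref{ex:gapone}: $G=0$ when $t=1$, and $Z=0$ when $\mu=e$.

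\emph{Removing the lower terms.} In general we remove the $Q^{(m)}$ terms by lifting them through $\partial$, starting from the largest subset degree.

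First, $Q^{(m)}$ lies in tuple degree $\mu-1$, on sets of size $\mu-1+t-mp$. It can be replaced by a $\partial$-image of some $P^{(m)}\in C_e(\mu,t-mp-1)$, up to an error that $E_{t,t-mp}$ kills or absorbs into lower terms. Then $E_{t-1,t-mp-1}P^{(m)}$ lies in $\sum_{i<t-1}\operatorname{im}E_{t-1,i}$, and by \eqref{ex:commutation}
\[
 \partial E_{t-1,t-mp-1}P^{(m)}=E_{t,t-mp}\partial P^{(m)}.
\]
So we replace $G$ by $G'=G-\sum_m E_{t-1,t-mp-1}P^{(m)}$. This gives $\partial G'=dR$, and the previous case applies to $G'$.

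The obstacle is justifying that $Q^{(m)}$ can be taken $\partial$-exact. Applying $\partial$ to $\partial G$ gives $0$, so
\[
 E_{t+1,t}\,\partial R+\sum_m E_{t+1,t-mp+1}\,\partial Q^{(m)}=0.
\]
I would first run a descending induction on subset degree. The statement at the top degree is essentially the lemma itself applied in tuple degree $\mu-1$ with smaller $t$. The aim is to show that, after modifying the representation by elementary relations (as in the proof of Lemma~\ref{sp:supported-low}), each $Q^{(m)}$ becomes $\partial$-closed.

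Exactness from Lemma~\ref{ex:exact} then makes $Q^{(m)}$ $\partial$-exact in positive tuple degree. At tuple degree zero ($\mu=1$) we instead need the quotient condition. For this we use Lemma~\ref{sp:supported-low}, which applies because \eqref{ex:bottom-bound} with $\mu=1$ gives $D\geqslant 2t+e-1$. It lets us choose each coefficient in $\mathcal S_a(X)$, so that Lemma~\ref{ex:exact} realizes it as a boundary.

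\emph{Checking the bounds.} Every set involved has size at most $t+\mu-1$. The dimension bound for these sets holds because lower subset degrees only shrink the sets. For the reduced parameters, we recheck \eqref{ex:bounds} and \eqref{ex:bottom-bound}; both become weaker as the subset degree drops. Verifying this monotonicity carefully is the remaining step.
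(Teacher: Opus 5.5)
Your frame matches the paper: the coefficientwise decomposition $\partial G=dR+\sum_m E_{t,t-mp}Q^{(m)}$, Lemma~\ref{ex:gapone} when no lower terms occur, the lift via $\partial E_{t-1,s-2}=E_{t,s-1}\partial$, and the case $\mu=1$ through Lemma~\ref{sp:supported-low} and Lemma~\ref{ex:exact}. The gap is the step you flag as the obstacle, which is the main content of the lemma, and your plan does not supply it.

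To handle the top remaining term $Q^{(m)}$ (the one with $s=t-mp+1$ largest) by the lemma in tuple degree $\mu-1$, you must first verify that lemma's hypothesis: $(\partial Q^{(m)})_{\mathbf f}\in\mathcal L_s(X\setminus\underline{\mathbf f};A^{(e)}_{\mathbf f})$ for every ordered $(\mu-2)$-tuple $\mathbf f$. The relation $E_{t+1,t}\,\partial R+\sum_m E_{t+1,s_m}\partial Q^{(m)}=0$ does not give this until the contribution of the unknown array $\partial R$ is eliminated.

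The paper eliminates it as follows. It restricts to a $(2t+1)$-set, complements rows, and multiplies by the transpose of the matrix of $E_{t,t-p+1}$. This transpose annihilates $d$ because $dE_{t,t-p+1}=pE_{t+1,t-p+1}=0$. Its product with $E_{t+1,s_m}$ is the disjointness matrix scaled by $\binom{(m+1)p-1}{p-1}\equiv1\pmod p$. An alternating sum over a matching then isolates the top term.

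A shorter route uses the basis of Lemma~\ref{sc:subset-basis}. There $d$ multiplies each $\langle I\rangle_t$ with $\lvert I\rvert=s$ by $t+1-s=mp\equiv0$. Hence the coefficient of $\langle I\rangle_{t+1}$, $\lvert I\rvert=s$, in the relation is just the $\langle I\rangle_s$-coefficient of $(\partial Q^{(m)})_{\mathbf f}$, which must therefore vanish.

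Your suggestion of elementary relations ``as in Lemma~\ref{sp:supported-low}'' cannot fill this role. That mechanism runs on the quotient condition and projection, which are available only in tuple degree zero.

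Your target is also misstated. The lemma in tuple degree $\mu-1$ does not make $Q^{(m)}$ $\partial$-closed; it gives $Q^{(m)}=\partial Z_m+\sum_{i<s-1}E_{s-1,i}C_i$. Subtracting $E_{t-1,s-2}Z_m$ from $G$ leaves remnants $\binom{t-i}{s-1-i}E_{t,i}C_i$. These are absorbed into $dR$ via $dE_{t-1,i}=(t-i)E_{t,i}$ when $p\nmid t-i$, and otherwise persist as new lower terms. So the proof must be an induction on $\mu$, for all $t$ simultaneously, with an inner descent on the largest surviving subset degree.

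Finally, your bound check is not right as stated. Passing to $(\mu-1,s)$ also lowers the tuple degree, which raises the right side of \eqref{ex:bottom-bound} by $p-2$. The net requirement still drops, by $(2m-1)p$. You must also check that meeting the threshold at $(\mu-1,s)$ forces the threshold at $(\mu,t)$, so that the original bound is actually available. It does, since $mp-1\geqslant p-2$.
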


\begin{proof}
For $t=1$, Proposition~\ref{sc:structure}~(\ref{item:sc-low}) gives
$\mathcal L_1=\operatorname{im}E_{1,0}=\operatorname{im}d$.  For one
orientation of each underlying $(\mu-1)$-tuple, choose the constant
coefficient whose $d$-image is $(\partial G)_{\mathbf x}$, and extend
the choices to the other orientations by alternation.  This gives
$R\in C_e(\mu-1,0)$ with $\partial G=dR$.  Lemma~\ref{ex:gapone},
which is valid for $t=1$ under the same bounds, now gives $G=0$.

Assume $t\geqslant2$.  We prove the assertion by induction on $\mu$,
simultaneously for all $t$ at each tuple degree.

\smallskip\noindent\emph{Base of the tuple induction ($\mu=1$).}
Put $y=\partial G$.  Then $y\in\mathcal L_t(X;\Lambda^eV)$ and
$y(J)\in\mathcal I_J$.  For $\mu=1$, the threshold in~\eqref{ex:bottom-bound} is automatic,
and its bound gives $D\geqslant2t+e-1$; hence
Lemma~\ref{sp:supported-low} gives
$y=\sum_{a<t}E_{t,a}c_a$ with $c_a\in\mathcal S_a(X)$ and $c_0=0$.
For $1\leqslant a<t$, Lemma~\ref{ex:exact} gives
$c_a=\partial Y_a$ with $Y_a\in C_e(1,a-1)$.  The arc vectors indexed by
these sets are linearly independent because $a<t\leqslant D$.
Hence
\[
 \partial\Bigl(G-
 \sum_{1\leqslant a<t}E_{t-1,a-1}Y_a\Bigr)=0,
\]
and Lemma~\ref{ex:exact}, now on sets of size $t$, supplies
$Z\in C_e(2,t-2)$.

\smallskip\noindent\emph{Inductive step ($\mu\geqslant2$).}
For one orientation of each underlying $(\mu-1)$-tuple, apply
Proposition~\ref{sc:structure}~(\ref{item:sc-low}) in the coefficient space
$A_{\mathbf x}^{(e)}$,
and extend the chosen coefficient arrays to all orientations by the
alternating sign rule.  Because the inclusion operators act only on the
subset index, this gives global alternating arrays and a decomposition.
The ground set for each tuple has at least $2t$ indices by~\eqref{ex:bounds};
we therefore choose $R\in C_e(\mu-1,t-1)$ and
$B_m\in C_e(\mu-1,t-mp)$, $1\leqslant m\leqslant\lfloor t/p\rfloor$,
so that
\begin{equation}\label{loc:resonance}
 \partial G=dR+
 \sum_{m=1}^{\lfloor t/p\rfloor}E_{t,t-mp}B_m.
\end{equation}
If all additional summands vanish, Lemma~\ref{ex:gapone} gives the
claim.  Otherwise, discard any $B_m$ whose displayed contribution already
vanishes, and set
\[
 s_0=t-p+1,
 \qquad s_m=t-mp+1.
\]
\smallskip\noindent\textbf{Claim (detecting the largest remaining subset degree).}
If $s=s_m$ is largest among the indices for which the
corresponding $B_m$ is nonzero, then for every ordered
$(\mu-2)$-tuple $\mathbf f$,
\begin{equation}\label{loc:highest-detection}
 (\partial B_m)_{\mathbf f}\in
 \mathcal L_s(X\setminus\underline{\mathbf f};A_{\mathbf f}^{(e)}).
\end{equation}

\smallskip\noindent\emph{Proof of the claim.}
Choose
$Y\subseteq X\setminus\underline{\mathbf f}$ of size $2t+1$.  Let
$Q_{a,b}$ be the matrix with rows indexed by $a$-subsets of $Y$, columns
by $b$-subsets, and entry $1$ for disjoint pairs and $0$ otherwise.  Complementing rows identifies the matrix of
$d:F_t(Y;\mathbb F)\to F_{t+1}(Y;\mathbb F)$ with the symmetric matrix $Q_{t,t}$ and the
matrix of $E_{t+1,s_m}$ with $Q_{t,s_m}$.  Let $P$ be the matrix of
$E_{t,s_0}$.  Since
$dE_{t,s_0}=pE_{t+1,s_0}=0$, we have
$Q_{t,t}P=0$ and hence $P^{\mathsf T}Q_{t,t}=0$.  Apply $\partial$ to
\eqref{loc:resonance}, restrict to $Y$, complement the rows, and multiply
by $P^{\mathsf T}$.  If $A$ is an $s_0$-set and $B$ an $s_m$-set, the
$(A,B)$-entry of $P^{\mathsf T}Q_{t,s_m}$ is zero when
$A\cap B\neq\varnothing$, and otherwise equals
\[
 \binom{2t+1-s_0-s_m}{t-s_0}
 =\binom{(m+1)p-1}{p-1}\equiv1\pmod p.
\]
The congruence uses only the lowest base-$p$ digit, which is $p-1$;
the higher digits of $m$ are unrestricted.  It follows that
\[
 \sum_m Q_{s_0,s_m}(\partial B_m)_{\mathbf f}|_Y=0.
\]
Because $s=s_m$ is the largest subset size among the remaining nonzero
terms, every other nonzero summand has column size strictly smaller than
$s$.  Take an $s$-matching $\mathcal P$ and enlarge its support to such a set $Y$.
Since $s_0+s\leqslant2t+1$, there is an $(s_0-s)$-set $R_0$ disjoint
from the matching.
Take the alternating sum over the rows indexed by $R_0\cup J$, where
$J\in\mathcal T(\mathcal P)$.  A column missing a pair cancels between the two
choices in that pair.  A column meeting $R_0$, or containing both members of a pair, is disjoint
from no such row.  The alternating row sum therefore annihilates
every column of size $<s$;
among columns of size $s$, only the sets in $\mathcal T(\mathcal P)$ survive,
with coefficient $(-1)^s$ times their matching sign.  Consequently,
$\Delta_s(\partial B_m)_{\mathbf f}=0$.  Every matching disjoint from
$\mathbf f$ can be included in such a $Y$, so this proves
\eqref{loc:highest-detection} on $X\setminus\underline{\mathbf f}$.
\smallskip\noindent\emph{Lowering the largest subset degree.}
The induction hypothesis now applies to $B_m$ with parameters
$(\mu-1,s,e,D,N)$.  Since $s=t-mp+1$, its first dimension and cardinality
bounds are weaker than~\eqref{ex:bounds} by $mp$ and $2mp-1$.
If the additional condition corresponding to~\eqref{ex:bottom-bound} is
required, then the original parameters satisfy its threshold because
$t-[1+\lceil(\mu-1)/2\rceil(p-2)]\geqslant mp-1-(p-2)\geqslant0$.
The required lower bound on $D$ is smaller than that in
\eqref{ex:bottom-bound} by $(2m-1)p$.  The induction hypothesis therefore
applies.
If $s=1$ it gives
$B_m=0$.  If $s\geqslant2$, choose $Z_m\in C_e(\mu,s-2)$ and
$C_i\in C_e(\mu-1,i)$ for $0\leqslant i<s-1$ so that
\[
 B_m=\partial Z_m+\sum_{0\leqslant i<s-1}E_{s-1,i}C_i.
\]
Subtract $E_{t-1,s-2}Z_m$ from $G$.  The boundary term in that
summand disappears by~\eqref{ex:commutation}.  Every remaining new term
has the form
\[
 \lambda_iE_{t,i}C_i,
 \qquad
 \lambda_i=\binom{t-i}{s-1-i},\quad i<s-1.
\]
If $p\nmid t-i$, absorb it into $dR$ using
$dE_{t-1,i}=(t-i)E_{t,i}$; if $p\mid t-i$, retain it as a new
term of strictly smaller subset degree.  Combine terms of the same degree and discard zero terms.  The largest
remaining nonzero coefficient degree then strictly decreases; no uniqueness
of the representation is needed.  Since this degree is a nonnegative
integer, finitely many steps reduce
\eqref{loc:resonance} to
$\partial G'=dR'$.  Lemma~\ref{ex:gapone} applies to $G'$, and restoring
the terms coming from smaller subset degrees proves the lemma.
\end{proof}

\begin{proof}[Proof of Theorem~\ref{loc:slr}]
Define $G\in C_e(e,t-1)$ from the functions $h_U$.  We verify the
hypothesis of Lemma~\ref{loc:auxiliary-claim}.  For an ordered $e$-tuple
$\mathbf u=(u_1,\ldots,u_e)$ and
$I\in\binom{X\setminus\underline{\mathbf u}}{t-1}$, set
\[
 G_{\mathbf u}(I)
 =h_{\underline{\mathbf u}}(I)w_{\mathbf u}.
\]
The function depends only on the underlying $e$-set while
$w_{\mathbf u}$ is alternating, so $G\in C_e(e,t-1)$.  For a fixed
ordered $(e-1)$-tuple $\mathbf a$, the linear map
\[
 V\longrightarrow\Lambda^eV,
 \qquad v\longmapsto w_{\mathbf a}\wedge v,
\]
has kernel $W_{\underline{\mathbf a}}$ because the $e-1$ vectors indexed
by $\mathbf a$ are independent.  Hence~\eqref{loc:geometric-equation}
is exactly $\Delta_t(\partial G)_{\mathbf a}=0$.  Since
$N-(e-1)\geqslant2t$, Proposition~\ref{sc:structure}~(\ref{item:sc-low}) gives
\[
 (\partial G)_{\mathbf a}\in
 \mathcal L_t(X\setminus\underline{\mathbf a};A_{\mathbf a}^{(e)}).
\]
Apply Lemma~\ref{loc:auxiliary-claim} with $\mu=e$.  Its cardinality
bound is the first assumption of the theorem.  Together, its two dimension requirements are exactly
$D\geqslant t+e-1+\max\{0,t-(e-1)(p-2)\}$.
Since $C_e(e+1,\cdot)=0$, the boundary term
vanishes and $G$ is a sum of images from smaller subset degrees.  For a fixed
$e$-set $U$, choose its inherited order $\mathbf u$.  The one-dimensional
space $A_{\mathbf u}^{(e)}=\mathbb Fw_{\mathbf u}$ identifies the resulting
statement with~\eqref{loc:slr-conclusion}.
\end{proof}

Section~\ref{gl:section} uses only Theorem~\ref{loc:slr}.  The proof just
completed combines Lemma~\ref{ex:exact} with
Proposition~\ref{sc:structure}~(\ref{item:sc-kernel-d})--(\ref{item:sc-kernel-dp})
to rewrite the matching equations using coefficients indexed by smaller
subsets of each complement $X\setminus U$.

\section{Global vanishing and determinant relations}\label{gl:section}

Throughout this section, let $\mathbb F$ be a field of characteristic $p>0$.  We first
use Theorem~\ref{loc:slr} to prove a vanishing statement for coefficient
arrays.  A linear system associated with an arc then converts that vanishing
statement into determinant relations and, ultimately, our full-row-rank
theorem for Chowdhury's matrices.

\subsection{Vanishing of coefficient arrays}

The next lemma combines vanishing on tuples meeting $J$, a vector-valued
linear relation, and membership in $\mathcal L_T$.  Its proof successively
lowers the degree of a representation by inclusion maps until all
coordinates with distinct indices vanish.  The first two conditions then
imply vanishing for repeated indices as well.

\begin{lemma}\label{gl:array}
Let $T\geqslant1$ and $1\leqslant e\leqslant D$, and let $(w_x)_{x\in X}$ be an arc in a
$D$-dimensional $\mathbb F$-vector space $W$.  Put $N:=\lvert X\rvert$, and suppose
\[
 N\geqslant2T+e,\qquad
 D\geqslant T+e-1+\max\{0,T-(e-1)(p-2)\}.
\]
For each $J\in\binom XT$, let $E^J:X^e\to\mathbb F$ be a symmetric function of its $e$ arguments; the arguments need not be distinct.  Write
$E^J_{i_1,\ldots,i_e}:=E^J(i_1,\ldots,i_e)$, and, for $U\in\binom Xe$, let $E^J_U$ denote the common value of $E^J$ on any ordering of the elements of $U$.  Assume:
\begin{enumerate}[label=\textup{(\roman*)},ref=\roman*]
\item\label{item:gl-array-support}
$E^J_{i_1,\ldots,i_e}=0$ whenever one of $i_1,\ldots,i_e$ belongs to
$J$;
\item\label{item:gl-array-balance}
For every $J\in\binom XT$ and every $A\in X^{e-1}$,
$\sum_{z\in X}E^J_{A,z}w_z=0$.  Here $X^0$ consists of the empty tuple; thus, when $e=1$, condition~\textup{(\ref{item:gl-array-balance})} reads $\sum_{z\in X}E^J_z w_z=0$.
\item\label{item:gl-array-low}
For every $e$-subset $U\subseteq X$, the function
$J\mapsto E^J_U$ on $\binom{X\setminus U}T$ belongs to
$\mathcal L_T(X\setminus U;\mathbb F)$.
\end{enumerate}
Then $E^J$ vanishes identically for every $J\in\binom XT$.
\end{lemma}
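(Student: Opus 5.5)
The plan is to derive the lemma from Theorem~\ref{loc:slr}, applied repeatedly with $t\leqslant T$, after a descent that lowers the degree of a representation of the coefficients. First I handle $e$-tuples of \emph{distinct} indices. For $U\in\binom Xe$, put $f_U(J):=E^J_U$ on $\binom{X\setminus U}T$. By~(\ref{item:gl-array-low}) and Proposition~\ref{sc:structure}~(\ref{item:sc-low}), valid since $\lvert X\setminus U\rvert\geqslant 2T$, we can write
\[
 f_U=\sum_{0\leqslant b<T}E_{T,b}\,g_{U,b},\qquad g_{U,b}\in F_b(X\setminus U;\mathbb F).
\]
I will run a descent on the largest $b$ with some $g_{\cdot,b}\neq0$. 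At each stage I apply Theorem~\ref{loc:slr} with $t:=b+1\leqslant T$ and $h_U:=g_{U,b}$. The hypotheses hold because $N\geqslant 2T+e\geqslant 2t+e-1$, and the right side of the $D$-bound is nondecreasing in $t$. The theorem yields $g_{U,b}\in\sum_{i<b}\operatorname{im}E_{b,i}$. Substituting this and using~\eqref{sc:composition}, $E_{T,b}E_{b,i}=\binom{T-i}{b-i}E_{T,i}$, rewrites $f_U$ with top degree at most $b-1$. At $b=0$ the theorem gives $h_U(\varnothing)=0$. Hence $f_U=0$, that is, $E^J_U=0$ whenever $U\cap J=\varnothing$; together with~(\ref{item:gl-array-support}), $E^J$ vanishes on all tuples of distinct entries.

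The main obstacle is verifying~\eqref{loc:geometric-equation} for $h_U=g_{U,b}$ when $b$ is the current top degree. Fix an ordered $(e-1)$-tuple $\mathbf a$ of distinct elements and a $(b+1)$-matching $\mathcal P$ disjoint from $\underline{\mathbf a}$. Choose a set $R$ of size $T-b-1$ disjoint from $\underline{\mathbf a}$ and from the support of $\mathcal P$; this is possible since $N\geqslant 2T+e$. For $J'\in\mathcal T(\mathcal P)$, apply~(\ref{item:gl-array-balance}) with $A=\mathbf a$ and $J=R\cup J'$. By~(\ref{item:gl-array-support}) the terms with $z\in J$ drop out, and the terms with $z\in\underline{\mathbf a}$ lie in $W_{\underline{\mathbf a}}$. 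This gives
\[
 \sum_{z\notin R\cup J'\cup\underline{\mathbf a}}E^{R\cup J'}_{\underline{\mathbf a}\cup\{z\}}\,w_z\in W_{\underline{\mathbf a}}.
\]
Expand $E^{R\cup J'}_{\underline{\mathbf a}\cup\{z\}}=\sum_{b'}\sum_{B}g_{\underline{\mathbf a}\cup\{z\},b'}(B)$ over $B\subseteq R\cup J'$ with $\lvert B\rvert=b'$, and take the $\epsilon_{\mathcal P}$-alternating sum over $J'$. A term $(z,B)$ survives only if $B\cup\{z\}$ meets every pair of $\mathcal P$, because otherwise the two choices in a missed pair cancel. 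This forces $b'=b$, $B\cap R=\varnothing$, and $B\cup\{z\}\in\mathcal T(\mathcal P)$, with $z$ the partner of the unique pair not met by $B$. Reindexing by $J'':=B\cup\{z\}$ turns the sum into $-\sum_{J''}\epsilon_{\mathcal P}(J'')\sum_{z\in J''}g_{\underline{\mathbf a}\cup\{z\},b}(J''\setminus\{z\})w_z$, since swapping $z$ with its partner flips the sign. So~\eqref{loc:geometric-equation} holds. I expect the sign bookkeeping and the check that the lower-degree terms cancel to be the delicate points.

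Finally I treat tuples with repeated entries by downward induction on the number $d$ of distinct entries, the case $d=e$ having been done above. Take a tuple with $d<e$ distinct entries, and let $A\in X^{e-1}$ be obtained by deleting one copy of a repeated entry, so $A$ still has $d$ distinct entries. In~(\ref{item:gl-array-balance}), $\sum_z E^J_{A,z}w_z=0$. The terms with $z\notin\underline A$ have $d+1$ distinct entries and vanish by induction; the terms with $z\in J$ also vanish by~(\ref{item:gl-array-support}). What remains is a relation among the vectors $w_z$ with $z\in\underline A$. These are at most $e-1<D$ arc vectors and hence independent, so each $E^J_{A,z}$ with $z\in\underline A$ vanishes. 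This includes the original tuple, and completes the induction.
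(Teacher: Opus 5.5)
Your proposal is correct and follows the paper's proof. It uses the same descent on the top degree of an inclusion-map representation of $f_U$, driven by the alternating sum of condition~(ii) over $J=R\cup Y$ and by Theorem~\ref{loc:slr} with $t=b+1$, and then the same downward induction on support size for repeated indices.

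The one difference is that the paper first passes to the projected arc indexed by an auxiliary set $S$ in $W/W_R$, of dimension $D-T+b+1$, and afterwards extends the conclusion from $S\setminus U$ to $X\setminus U$. You instead apply Theorem~\ref{loc:slr} directly to the full arc in $W$. This is legitimate for three reasons. No $w_z$ with $z\in R$ ever appears in your expansion. Subsets $B$ meeting $R$ cancel under alternation. And the hypotheses of Theorem~\ref{loc:slr} for $t=b+1\leqslant T$ follow from those of the lemma. So your version is a slight simplification of the paper's argument.
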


\begin{proof}
For $U\in\binom Xe$, put $f_U(J)=E^J_U$.  For $-1\leqslant a<T$,
define
\[
 \mathcal F_a(U)=\sum_{0\leqslant i\leqslant a}\operatorname{im}E_{T,i}
 \subseteq F_T(X\setminus U;\mathbb F),
\]
with $\mathcal F_{-1}(U)=0$.  Since $\lvert X\setminus U\rvert\geqslant2T$,
Proposition~\ref{sc:structure}~(\ref{item:sc-low}) gives
$\mathcal F_{T-1}(U)=\mathcal L_T(X\setminus U;\mathbb F)$, so
condition~\textup{(\ref{item:gl-array-low})} starts the induction at
$a=T-1$.  It remains to prove the implication
\[
 \bigl[f_U\in\mathcal F_a(U)\text{ for every }U\in\tbinom Xe\bigr]
 \quad\Longrightarrow\quad
 \bigl[f_U\in\mathcal F_{a-1}(U)\text{ for every }U\in\tbinom Xe\bigr]
 \qquad\text{for }0\leqslant a<T.
\]
Iteration from $a=T-1$ to $a=-1$ then makes all distinct-index
coordinates vanish.  The repeated-index coordinates are recovered from
conditions~\textup{(\ref{item:gl-array-support})} and
\textup{(\ref{item:gl-array-balance})}.  We first verify this endpoint and then prove the displayed implication.

\smallskip\noindent\emph{From distinct-index vanishing to total vanishing.}
Suppose $f_U=0$ for every $U$, and fix $J\in\binom XT$.
The support of a tuple is the set of its distinct entries.
Every coordinate with $e$ distinct indices is zero: condition
\textup{(\ref{item:gl-array-support})} applies if its support meets $J$,
and otherwise $f_U(J)=0$.  Now descend on the support size.  Suppose all coordinates of support size greater
than $s$ vanish, and take a tuple with support $S$, $\lvert S\rvert=s<e$.
Coordinates with $S\cap J\ne\varnothing$ vanish by condition~\textup{(\ref{item:gl-array-support})}.
Otherwise delete one occurrence of a repeated entry to obtain an
$(e-1)$-tuple $A$ with the same support.  Condition~\textup{(\ref{item:gl-array-balance})} gives
\[
 \sum_{z\in S}E^J_{A,z}w_z=0.
\]
Terms with $z\in J$ vanish by condition~\textup{(\ref{item:gl-array-support})}, and those with
$z\notin S\cup J$ have larger support.  Since $\lvert S\rvert\leqslant e-1\leqslant D$,
the vectors indexed by $S$ are independent, so every displayed
coefficient vanishes.  This proves $E^J=0$.  For $e=1$, the initial
distinct-index argument already proves the assertion.

\smallskip\noindent\emph{The step $\mathcal F_a\to\mathcal F_{a-1}$.}
Let $0\leqslant a<T$ and suppose $f_U\in\mathcal F_a(U)$ for every $U$.
For each $U$, choose once and for all functions
$b_U^i:\binom{X\setminus U}{i}\to\mathbb F$, $0\leqslant i\leqslant a$,
such that
\[
 f_U=\sum_{i=0}^{a}E_{T,i}b_U^i.
\]
These choices remain fixed when we restrict to smaller sets.
Put $W_B=\operatorname{span}\{w_x:x\in B\}$ for $B\subseteq X$, and set
\[
 D_a=D-T+a+1,\qquad
 \lvert R\rvert=T-a-1,\qquad \lvert S\rvert=\max\{2a+e+1,D_a\}.
\]
Disjoint sets $R,S\subseteq X$ of these sizes exist because
\[
 \lvert R\rvert+\lvert S\rvert=\max\{T+a+e,D\}\leqslant N.
\]
Here $T+a+e\leqslant2T+e-1<N$, and $D\leqslant N$ by the arc
convention.  In $W/W_R$, the vectors indexed by $S$ form an arc of dimension $D_a$:
any $D_a$ of them, together with the vectors indexed by $R$, give $D$
linearly independent vectors of the original arc.

Fix an ordered $(e-1)$-tuple $\mathbf a$ of distinct elements of $S$
and an $(a+1)$-matching $\mathcal P$ in
$S\setminus\underline{\mathbf a}$.  For
$Y\in\mathcal T(\mathcal P)$ put $J=R\cup Y$.
Take the alternating sum of
condition~\textup{(\ref{item:gl-array-balance})} over
$Y\in\mathcal T(\mathcal P)$, modulo
$W_{R\cup\underline{\mathbf a}}$.  For coefficient degree $i$,
the contribution before taking this sum is
\[
 \sum_{z\notin J\cup\underline{\mathbf a}}\sum_{I\in\binom Ji}
 b^i_{\underline{\mathbf a}\cup\{z\}}(I)w_z
 =\sum_{I\in\binom Ji}\sum_{z\notin I\cup\underline{\mathbf a}}
 b^i_{\underline{\mathbf a}\cup\{z\}}(I)w_z
 -\sum_{I\in\binom Ji}\sum_{z\in J\setminus I}
 b^i_{\underline{\mathbf a}\cup\{z\}}(I)w_z.
\]
Each summand in the first sum depends on at most $i$ matching choices,
and each summand in the second on at most $i+1$.  Thus every degree $i<a$ cancels, as does
the first sum at $i=a$.  In the remaining sum at $i=a$, a subset index
meeting $R$ still misses a matching choice, and terms with $z\in R$
vanish in the quotient.  We obtain
\[
 \sum_{Y\in\mathcal T(\mathcal P)}\epsilon_{\mathcal P}(Y)
 \sum_{z\in Y}b^a_{\underline{\mathbf a}\cup\{z\}}
       (Y\setminus\{z\})w_z
 \in W_{R\cup\underline{\mathbf a}}.
\]
These are precisely the equations of Theorem~\ref{loc:slr} for the
restricted functions $b_U^a$ on the projected arc indexed by $S$, with
$t'=a+1$.  Its cardinality hypothesis holds because
$\lvert S\rvert\geqslant2t'+e-1$, and
\[
 D_a\geqslant a+e,
 \qquad
 D_a\geqslant2a+e+1-(e-1)(p-2).
\]
The first inequality follows from $D\geqslant T+e-1$; the second follows
from $D\geqslant2T+e-1-(e-1)(p-2)$ and $a\leqslant T-1$.  Thus the dimension hypothesis of Theorem~\ref{loc:slr} also holds, and
the theorem gives
\[
 b_U^a|_{\binom{S\setminus U}{a}}
 \in\mathcal L_a(S\setminus U;\mathbb F)
 \qquad\text{for }U\in\binom Se.
\]

For $a>0$, any $a$-matching in $X\setminus U$, together with $U$, can
be included in such a set $S$; the same size calculation then provides a
disjoint $R$.  Thus
$b_U^a\in\mathcal L_a(X\setminus U;\mathbb F)$ on the full complement
$X\setminus U$.
Proposition~\ref{sc:structure}~(\ref{item:sc-low}) writes
$b_U^a=\sum_{i<a}E_{a,i}c_U^i$, and
\[
 E_{T,a}E_{a,i}=\binom{T-i}{a-i}E_{T,i}
\]
gives $f_U\in\mathcal F_{a-1}(U)$ without division.
For $a=0$, choose $S$ containing $U$; the $t'=1$ conclusion of
Theorem~\ref{loc:slr} gives $b_U^0(\varnothing)=0$, so
$f_U\in\mathcal F_{-1}(U)$.
Starting from condition~\textup{(\ref{item:gl-array-low})}, iteration gives
$f_U\in\mathcal F_{-1}(U)=0$ for every $U$.  The argument for repeated
indices then gives $E^J=0$ for every $J$, proving the lemma.
\end{proof}

\subsection{A linear system associated with an arc}

We next introduce a scalar system associated with an arc and prove that
it has only the zero solution under a cardinality hypothesis.  In the
next subsection, projection turns a hypothetical nonzero determinant
relation into a nonzero solution of this system.

Let $r,e\geqslant1$ and let $H=(v_x)_{x\in X}$ be an arc in an
$(r+1)$-dimensional $\mathbb F$-vector space $V$.
For $C\in\binom Xr$, choose $0\ne\alpha_C\in V^*$ with
$\ker\alpha_C=\operatorname{span}\{v_c:c\in C\}$.
Let $D_r^{(e)}(H)$ be the homogeneous system in the variables $x_C$,
$C\in\binom Xr$.  For every $r$-matching $\mathcal M$ and every $e$-subset $U$ disjoint
from its support, impose the equation
\begin{equation}\label{gl:normal-equations}
 \sum_{C\in\mathcal T(\mathcal M)}
 \epsilon_{\mathcal M}(C)x_C\prod_{u\in U}\alpha_C(v_u)=0.
\end{equation}
Write $\ker D_r^{(e)}(H)$ for the solution space.  Replacing $\alpha_C$ by $s_C\alpha_C$, with $s_C\ne0$, corresponds to
replacing $x_C$ by $s_C^{-e}x_C$.  Hence the triviality of the kernel is
independent of these choices.

\begin{theorem}\label{gl:zero-kernel}
Let $\mathbb F$ be a field of characteristic $p>0$, let $e,r\geqslant1$,
and let $H=(v_x)_{x\in X}$ be an arc in an $(r+1)$-dimensional
$\mathbb F$-vector space.  If
\[
 \lvert X\rvert\geqslant2r+e+\max\{0,r-(e-1)(p-2)\},
\]
then $\ker D_r^{(e)}(H)=\{0\}$.
\end{theorem}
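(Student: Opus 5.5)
We plan to deduce the theorem from Lemma~\ref{gl:array}, applied with $T=r$ and the same $e$, to an auxiliary arc that is ``dual'' to $H$. Put $N:=\lvert X\rvert$ and take a solution $(x_C)_{C\in\binom Xr}$ of $D_r^{(e)}(H)$. For $J\in\binom Xr$ and $(i_1,\ldots,i_e)\in X^e$ we set
\[
 E^J(i_1,\ldots,i_e):=x_J\prod_{k=1}^{e}\alpha_J(v_{i_k}).
\]
This is symmetric in its arguments, and arguments may repeat. The aim is to verify conditions (\ref{item:gl-array-support})--(\ref{item:gl-array-low}) of Lemma~\ref{gl:array}, conclude that every $E^J$ vanishes, and then extract $x_J=0$.

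Condition~(\ref{item:gl-array-support}) holds because $\alpha_J$ vanishes on $v_j$ for $j\in J$. For condition~(\ref{item:gl-array-low}), fix $U\in\binom Xe$ and an $r$-matching $\mathcal M$ in $X\setminus U$. Then
\[
 \bigl(\Delta_r(J\mapsto E^J_U)\bigr)(\mathcal M)
 =\sum_{C\in\mathcal T(\mathcal M)}\epsilon_{\mathcal M}(C)\,x_C\prod_{u\in U}\alpha_C(v_u),
\]
and this is exactly the equation~\eqref{gl:normal-equations}. Hence the function $J\mapsto E^J_U$ lies in $\mathcal L_r(X\setminus U;\mathbb F)$ by definition; we use that $\lvert X\setminus U\rvert\geqslant 2r$.

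For condition~(\ref{item:gl-array-balance}) we construct the arc $(w_z)$. Let
\[
 K:=\Bigl\{\lambda\in\mathbb F^X:\sum_z\lambda_zv_z=0\Bigr\}.
\]
Since $H$ spans the $(r+1)$-dimensional space, $\dim K=N-r-1=:D$. Let $w_z\in K^*$ be the coordinate functional $\lambda\mapsto\lambda_z$. For any $\alpha\in V^*$ and $\lambda\in K$,
\[
 \sum_z\alpha(v_z)\,w_z(\lambda)=\alpha\Bigl(\sum_z\lambda_zv_z\Bigr)=0,
 \qquad\text{so}\qquad
 \sum_z\alpha(v_z)\,w_z=0.
\]
Consequently, for every $A\in X^{e-1}$,
\[
 \sum_zE^J_{A,z}w_z=x_J\Bigl(\prod_{a\in A}\alpha_J(v_a)\Bigr)\sum_z\alpha_J(v_z)\,w_z=0.
\]

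The step needing real care is that $(w_z)$ is an arc in the $D$-dimensional space $K^*$. This is the classical duality of MDS codes. Suppose some $D$ of the $w_z$ were dependent. Then some nonzero $\lambda\in K$ would vanish on those $D$ coordinates. Its support would then have size at most $N-D=r+1$. That gives a nontrivial linear relation among at most $r+1$ of the $v_z$, contradicting the arc property of $H$. We also need $N\geqslant D$, which is clear, and $D\geqslant e\geqslant1$, which follows from the size hypothesis.

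Finally we check the numerical hypotheses of Lemma~\ref{gl:array}. The first is $N\geqslant 2r+e$. The second, $N-r-1\geqslant r+e-1+\max\{0,r-(e-1)(p-2)\}$, rearranges to exactly the assumed bound
\[
 N\geqslant 2r+e+\max\{0,r-(e-1)(p-2)\},
\]
which also implies the first. The lemma then gives $E^J\equiv0$ for every $J$. Given $J$, choose $U\in\binom{X\setminus J}{e}$, possible since $N-r\geqslant e$. For $u\notin J$ the vectors $v_J\cup\{v_u\}$ are $r+1$ vectors of the arc, hence independent, so $\alpha_J(v_u)\ne0$. Evaluating $E^J_U=x_J\prod_{u\in U}\alpha_J(v_u)=0$ therefore forces $x_J=0$, so $\ker D_r^{(e)}(H)=\{0\}$.
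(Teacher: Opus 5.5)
Your proposal is correct and follows essentially the same route as the paper: the same arrays $E^J$, the same arc of coordinate functionals in $K^*$ with $K=\ker\nu$, the same checks of the three hypotheses of Lemma~\ref{gl:array} with $T=r$, and the same rearrangement of the size bound. The only difference is cosmetic: at the end you read off $x_J=0$ directly from $E^J_U$ for some $U$ disjoint from $J$, whereas the paper first deduces $\Phi_J=0$ by multilinearity and then evaluates at a vector $v$ with $\alpha_J(v)\neq0$.
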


\begin{proof}
Put $N=\lvert X\rvert$, and let $x\in\ker D_r^{(e)}(H)$.  We apply
Lemma~\ref{gl:array} with $T=r$ in four steps.

\noindent\emph{Step 1: multilinear forms.}
Define symmetric $e$-linear forms
\[
 \Phi_J(\xi_1,\ldots,\xi_e)
 =x_J\prod_{j=1}^e\alpha_J(\xi_j),
 \qquad J\in\binom Xr.
\]
No division by $e!$ is involved.

\noindent\emph{Step 2: an arc in $K^*$.}
For the space $\mathbb F^X$ of coordinate families $c=(c_y)_{y\in X}$,
define
\[
 \nu:\mathbb F^X\longrightarrow V,
 \qquad \nu(c)=\sum_{y\in X}c_yv_y,
 \qquad K=\ker\nu.
\]
Because the arc spans $V$, the sequence
\[
\begin{tikzcd}[column sep=small]
 0 \arrow[r] & K \arrow[r,hook] & \mathbb F^X
 \arrow[r,"\nu"] & V \arrow[r] & 0
\end{tikzcd}
\]
is exact.  Thus $\dim K^*=N-r-1$.  For $y\in X$, define
$w_y\in K^*$ by $w_y(c)=c_y$.
The functionals $(w_y)_{y\in X}$ form an arc in $K^*$.  To see this,
fix $B\in\binom X{N-r-1}$.  The coordinate restriction
$K\to\mathbb F^B$ is injective: a vector in its kernel is supported on the
$r+1$ indices in $X\setminus B$, whose arc vectors are linearly
independent.  The dimensions agree, so the restriction is an isomorphism;
hence $(w_y)_{y\in B}$ is a basis of $K^*$.
Also, for every $\ell\in V^*$,
\[
 \sum_{y\in X}\ell(v_y)w_y=0,
\]
because evaluation on $c\in K$ gives $\ell(\sum_y c_yv_y)=0$.

\noindent\emph{Step 3: the three hypotheses of Lemma~\ref{gl:array}.}
For $J\in\binom Xr$, define
\[
 E^J(i_1,\ldots,i_e)=\Phi_J(v_{i_1},\ldots,v_{i_e}),
 \qquad\text{for }(i_1,\ldots,i_e)\in X^e.
\]
Condition~\textup{(\ref{item:gl-array-support})} holds because
$\alpha_J(v_j)=0$ for $j\in J$.  For
condition~\textup{(\ref{item:gl-array-balance})}, apply the preceding
identity to
$\ell(v)=\Phi_J(v_{i_1},\ldots,v_{i_{e-1}},v)$; when $e=1$, take
$\ell=\Phi_J$.
For condition~\textup{(\ref{item:gl-array-low})}, fix $U=\{u_1,\ldots,u_e\}\subseteq X$ in inherited
order.  For any $r$-matching $\mathcal M$ disjoint from $U$,
\[
 (\Delta_r(J\mapsto E^J_U))(\mathcal M)
 =\sum_{J\in\mathcal T(\mathcal M)}\epsilon_{\mathcal M}(J)x_J
      \prod_{u\in U}\alpha_J(v_u)=0
\]
by~\eqref{gl:normal-equations}.  Hence $J\mapsto E^J_U$ belongs to
$\mathcal L_r(X\setminus U;\mathbb F)$.

\noindent\emph{Step 4: conclusion.}
The hypothesis of Theorem~\ref{gl:zero-kernel} gives $N\geqslant2r+e$ and
\[
 \dim K^*=N-r-1
 \geqslant r+e-1+\max\{0,r-(e-1)(p-2)\}.
\]
Lemma~\ref{gl:array} gives $E^J=0$ for every $J$.  Since the $v_y$
span $V$, multilinearity then gives $\Phi_J=0$.  Choose $v$ with
$\alpha_J(v)\ne0$; from
$0=\Phi_J(v,\ldots,v)=x_J\alpha_J(v)^e$ we obtain $x_J=0$.
\end{proof}

\subsection{Projection of determinant relations}

We now convert determinant relations into the scalar system of the
preceding subsection.  Complementing subsets first expresses the
coefficients using smaller subsets.  Projection from a suitable set of
arc points then gives a solution of $D_\rho^{(e)}(H_D)$, which
Theorem~\ref{gl:zero-kernel} forces to vanish.

For an ordered $k$-dimensional arc $(v_x)_{x\in X}$, an index $u\in X$,
and $B\in\binom X{k-1}$, write $\det(v_u,v_B)$ for the determinant whose
first row is $v_u$ and whose remaining rows are indexed by $B$ in
inherited order.

\begin{theorem}\label{gl:extension-obstruction}
Let $\mathbb F$ be a field of characteristic $p>0$, let $e,r\geqslant1$,
and set $k:=p+r$.  Let $G=(v_x)_{x\in X}$ be a $k$-dimensional arc with
\[
 k+e-1\leqslant \lvert X\rvert
 \leqslant k+p+e-3+\min\{r,e(p-2)\}.
\]
If scalars $\lambda_B\in\mathbb F$, indexed by $B\in\binom X{k-1}$,
satisfy, for every $e$-subset $U\subseteq X$ and every
$(k-2)$-subset $S\subseteq X\setminus U$,
\begin{equation}\label{gl:extension-star}
 \sum_{\substack{B\in\binom{X\setminus U}{k-1}\\S\subseteq B}}
 \lambda_B\prod_{u\in U}\det(v_u,v_B)=0,
\end{equation}
then $\lambda_B=0$ for every $B\in\binom X{k-1}$.
\end{theorem}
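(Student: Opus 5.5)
The plan is to reduce the statement to Theorem~\ref{gl:zero-kernel} by projecting from a suitable set of arc points and complementing subsets. Write $m:=\lvert X\rvert$. For fixed $B\in\binom X{k-1}$, the map $u\mapsto\det(v_u,v_B)$ is a nonzero linear functional $\alpha_B$ vanishing exactly on $W_B$. Relation~\eqref{gl:extension-star} therefore says the following: for fixed $U$ and fixed $S$, the function $z\mapsto\lambda_{S\cup\{z\}}\prod_{u\in U}\alpha_{S\cup\{z\}}(v_u)$ sums to zero over $z\in X\setminus(S\cup U)$. Equivalently, for each $e$-set $U$, the function
\[
 g_U:\tbinom{X\setminus U}{k-1}\to\mathbb F,\qquad g_U(B):=\lambda_B\prod_{u\in U}\alpha_B(v_u),
\]
lies in the kernel of the ``downward'' inclusion map $\binom{X\setminus U}{k-1}\to\binom{X\setminus U}{k-2}$. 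Passing to complements in $X\setminus U$ turns this into a kernel condition for an upward map $d$ on subsets of size $\rho:=m-e-k+1$. Proposition~\ref{sc:structure}\,(\ref{item:sc-kernel-d}) then identifies the kernel with $\operatorname{im}d^{p-1}$, and hence places the complemented function in $\sum_{b<\rho}\operatorname{im}E_{\rho,b}=\mathcal L_\rho$. I expect to use the upper bound $m\leqslant k+p+e-3+\cdots$ precisely here: it keeps the size of the complement in the range where the required cardinality hypotheses hold. The conclusion is that all alternating matching sums $\Delta$ of the complemented function vanish.

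Second, I would turn these matching relations into the system $D_{\rho'}^{(e)}(H)$ by projection. Fix a set $Q$ of $k-1-\rho'$ arc points and project from $W_Q$. The remaining vectors form an arc $H_Q$ in a space of dimension $\rho'+1$. For $B\supseteq Q$, the functional $\alpha_B$ descends to the functional $\alpha_{B\setminus Q}$ on the quotient (up to a nonzero scalar, which Theorem~\ref{gl:zero-kernel} allows us to ignore). A matching equation for $g_U$ whose matching avoids $Q$ and whose fixed part contains $Q$ then becomes exactly equation~\eqref{gl:normal-equations} for $x_C:=\lambda_{Q\cup C}$. The choice of $\rho'$ and $Q$ must make the matchings line up: matchings in the complemented picture become matchings on $B$ itself, because choosing one element of each pair to lie outside $B$ means choosing the other to lie inside. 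I would choose $Q$ to be the common part of the complemented frame, so that $\rho'$ equals the matching size.

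Third, I would check the numerology. The arc $H_Q$ has $m-\lvert Q\rvert$ points. Theorem~\ref{gl:zero-kernel} needs
\[
 m-\lvert Q\rvert\geqslant 2\rho'+e+\max\{0,\rho'-(e-1)(p-2)\}.
\]
Substituting $k=p+r$ together with the two-sided bound on $m$ should turn this into exactly $\min\{r,e(p-2)\}$ on the relevant side. The lower bound $m\geqslant k+e-1$ guarantees that $\rho'\geqslant1$, or that the statement is trivial. Once this holds, the theorem gives $\lambda_{Q\cup C}=0$. Every $B$ contains some admissible $Q$, so all $\lambda_B$ vanish.

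I expect the main obstacle to be the first step. Matching the single-$S$ relations~\eqref{gl:extension-star} with the matching relations of size $\rho'$ seems to require using $d^{p-1}$ and the congruence $\binom{mp-1}{p-1}\equiv1$, as in the Claim inside Lemma~\ref{loc:auxiliary-claim}, rather than just $\ker d$. In particular, it is delicate to get the exact exponent $p-2$ in the bound and to line up the sizes of $Q$ and the complement. If the direct kernel identification does not give full $\mathcal L$-membership, I would fall back on the row-sum/alternating-sum detection argument from that Claim applied to $g_U$.
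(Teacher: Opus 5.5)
There is a genuine gap. It sits where you suspected, but closing it does not need the detection argument from Lemma~\ref{loc:auxiliary-claim}. What you need is information that Proposition~\ref{sc:structure}~(\ref{item:sc-kernel-d}) already gives you and that you then discard.

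\textbf{Where the count fails.} Write $h:=\lvert X\rvert-e-k+1$ for your complement size (your $\rho$), and put $\rho:=h-p+2$. The proposition gives more than $\mathcal L_h$-membership. It says $g_U^\vee=E_{h,\rho-1}c_U$ for some $c_U\in F_{\rho-1}(X\setminus U;\mathbb F)$, so the complemented function is generated from degree $\rho-1=h-(p-1)$. Suppose you weaken this to $\mathcal L_h$ and take $Q$ to be the complement of the support of an $h$-matching. Then $\rho'=h$ and $\lvert Q\rvert=k-1-h$, and the projected arc $H_Q$ has exactly $2h+e$ points, with no slack. Theorem~\ref{gl:zero-kernel} then requires $\max\{0,h-(e-1)(p-2)\}=0$, that is, $\rho\leqslant(e-2)(p-2)$. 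The statement allows $\rho\leqslant\min\{r,e(p-2)\}$. So the hoped-for identity in your third step is false. For $e=1$, for instance, your route fails as soon as $h\geqslant1$.

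\textbf{Project from a larger center.} Fix $D\in\binom X{k-1-\rho}$ first, independently of $U$; this is $p-2$ more points than your $Q$. Then take $U\in\binom{X\setminus D}e$ and put $Z:=X\setminus(U\cup D)$, so $\lvert Z\rvert=h+\rho$. For $C\in\binom Z\rho$ you get
\[
g_U(D\cup C)=\sum_{I\in\binom Z{\rho-1},\ I\cap C=\varnothing}c_U(I).
\]
Inclusion--exclusion over $J\subseteq I\cap C$ rewrites this as a sum of images of $E_{\rho,j}$ with $j<\rho$. Hence it lies in $\mathcal L_\rho(Z;\mathbb F)$, since $\lvert Z\rvert\geqslant2\rho$. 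This gives a solution of $D_\rho^{(e)}(H_D)$ with $x_C=\pm\lambda_{D\cup C}$. Here $H_D$ has dimension $\rho+1$ and $2\rho+e+(p-2)$ points. The surplus $p-2$ points are exactly what Theorem~\ref{gl:zero-kernel} uses: its hypothesis becomes $\max\{0,\rho-(e-1)(p-2)\}\leqslant p-2$, which is $\rho\leqslant e(p-2)$. Choosing $D$ inside some $B_0$ with $\lambda_{B_0}\neq0$ then gives the contradiction.

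\textbf{The case $\rho\leqslant0$.} Theorem~\ref{gl:zero-kernel} needs $\rho\geqslant1$, so this case must be handled separately. There $0\leqslant h\leqslant p-2$, and $\lvert X\setminus U\rvert\geqslant 2h+1$ because $\rho\leqslant r$. So $d$ is injective on $F_h$ by Proposition~\ref{sc:structure}, which forces $g_U^\vee=0$. Taking $U$ disjoint from $B_0$ contradicts $\lambda_{B_0}\neq0$ directly, by the arc property.
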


\begin{proof}
Suppose that $\lambda$ satisfies~\eqref{gl:extension-star} and that
$\lambda_{B_0}\ne0$ for some $B_0\in\binom X{k-1}$.  Put
\[
 \rho=\lvert X\rvert-(k+p+e-3),\qquad h=p+\rho-2.
\]
The size assumptions give
$2-p\leqslant\rho\leqslant\min\{r,e(p-2)\}$.
For $U\in\binom Xe$, set $Y=X\setminus U$ and define
\[
 a_U(B)=\lambda_B\prod_{u\in U}\det(v_u,v_B),
 \qquad B\in\binom Y{k-1}.
\]
The complement of a $(k-1)$-subset of $Y$ has size $h$.
Thus the function
\[
 a_U^\vee:\binom Yh\longrightarrow\mathbb F,
 \qquad a_U^\vee(Q)=a_U(Y\setminus Q)
\]
satisfies $d a_U^\vee=0$ by~\eqref{gl:extension-star}:
complementing a $(k-2)$-subset converts its supersets into the
$h$-subsets of an $(h+1)$-set.  Since
\[
 \lvert Y\rvert-(2h+1)=r-\rho\geqslant0,
\]
Proposition~\ref{sc:structure}~(\ref{item:sc-kernel-d}) applies.  Since
$h=p+\rho-2$, the map is
injective for $\rho\leqslant0$, whereas for $\rho\geqslant1$ its kernel
comes from degree $\rho-1$.

\smallskip\noindent\emph{Case 1: injectivity ($\rho\leqslant0$).}
Then $0\leqslant h\leqslant p-2$, so part~(\ref{item:sc-kernel-d}) gives
$a_U^\vee=0$ for every $U$.  Choose
$U\in\binom{X\setminus B_0}e$, possible because
$\lvert X\rvert\geqslant k+e-1$.  The arc property gives
\[
 0=a_U(B_0)=\lambda_{B_0}\prod_{u\in U}\det(v_u,v_{B_0})\ne0,
\]
a contradiction.

\smallskip\noindent\emph{Case 2: projection to Theorem~\ref{gl:zero-kernel} ($\rho\geqslant1$).}
Because $h+1-p=\rho-1$, there exists
$g_U\in F_{\rho-1}(Y;\mathbb F)$ such that
\[
 a_U^\vee=E_{h,\rho-1}g_U.
\]
Fix any $D\in\binom X{k-1-\rho}$ and project from
$W_D=\operatorname{span}\{v_d:d\in D\}$.  In
$\overline V_D=\mathbb F^k/W_D$, the family
$H_D=(\overline v_x)_{x\in X\setminus D}$ is an arc of dimension
$\rho+1$ and size $p+2\rho+e-2$.
For $C\in\binom{X\setminus D}\rho$, define the sign
\[
 \varepsilon_D(C)=(-1)^{\lvert \{(c,d)\in C\times D:c<d\}\rvert},
\]
which changes the inherited order on $D\cup C$ to the concatenation
of the inherited orders on $D$ and $C$.  Set
\[
 \alpha_C(\overline v)=\varepsilon_D(C)\det(v,v_{D\cup C}),
 \qquad x_C=\varepsilon_D(C)^e\lambda_{D\cup C}.
\]
Adding an element of $W_D$ to $v$ leaves the determinant unchanged,
so $\alpha_C$ is well defined.  It is nonzero by the arc property,
and its kernel is spanned by the $\rho$ independent projected vectors
indexed by $C$.  These functionals and $x$ are fixed independently of $U$.

For $U\in\binom{X\setminus D}e$, put $Z=Y\setminus D$.
Then $\lvert Z\rvert=p+2\rho-2\geqslant2\rho$.  For $C\in\binom Z\rho$,
\[
 a_U(D\cup C)
 =\sum_{\substack{I\in\binom Z{\rho-1}\\I\cap C=\varnothing}}g_U(I)
 =\sum_{I\in\binom Z{\rho-1}}g_U(I)\sum_{J\subseteq I\cap C}(-1)^{\lvert J\rvert}
 =\sum_{j=0}^{\rho-1}\sum_{J\in\binom Cj}(-1)^j
  \sum_{\substack{I\in\binom Z{\rho-1}\\J\subseteq I}}g_U(I).
\]
For fixed $j<\rho$, the $j$-th summand is $(E_{\rho,j}h_j)(C)$, where
$h_j(J)=(-1)^j\sum_{I\in\binom Z{\rho-1},\,J\subseteq I}g_U(I)$.
Therefore
\[
 C\longmapsto a_U(D\cup C)
 \in\sum_{j<\rho}\operatorname{im}E_{\rho,j}
 =\mathcal L_\rho(Z;\mathbb F)
\]
by~\eqref{sc:all-low}.  Since
\[
 a_U(D\cup C)=x_C\prod_{u\in U}\alpha_C(\overline v_u),
\]
applying $\Delta_\rho$ to this function gives precisely the equations of
$D_\rho^{(e)}(H_D)$ for this $U$.  Therefore
$x\in\ker D_\rho^{(e)}(H_D)$.
This calculation uses only $1\leqslant\rho\leqslant r$; it applies to
arbitrary coefficients and any such $D$, without the assumption
$\rho\leqslant e(p-2)$.

The remaining inequality $\rho\leqslant e(p-2)$ now gives
\[
 \max\{0,\rho-(e-1)(p-2)\}\leqslant p-2.
\]
The displayed inequality shows that
$\lvert H_D\rvert=p+2\rho+e-2$ satisfies the cardinality hypothesis of
Theorem~\ref{gl:zero-kernel}; hence $x=0$.
Choosing $D\subseteq B_0$ of size $k-1-\rho$ yields
$C_0=B_0\setminus D$ and
$x_{C_0}=\varepsilon_D(C_0)^e\lambda_{B_0}\ne0$, a contradiction.
\end{proof}

\subsection{Chowdhury's matrices and full row rank}

The determinant relations of Theorem~\ref{gl:extension-obstruction} are
exactly the left-kernel equations for Chowdhury's matrices.  We now record
the matrix used in Section~\ref{sec:mds-deduction}.

Let $G=(v_x)_{x\in X}$ be an ordered $k$-dimensional arc over a field
$\mathbb F$.  For $n\geqslant0$, and using the determinant convention of
the preceding subsection, Chowdhury's matrix $M_G^{\uparrow n}$ has rows
indexed by $C\in\binom X{k-1}$ and columns indexed by pairs $(U,A)$ with
$U\in\binom Xn$ and $A\in\binom{X\setminus U}{k-2}$.  Its entries are
\[
 M_G^{\uparrow n}(C;(U,A)):=
 \begin{cases}
 \displaystyle\prod_{u\in U}\det(v_u,v_C),
 &A\subseteq C\subseteq X\setminus U,\\[1ex]
 0,&\text{otherwise}.
 \end{cases}
\]
The empty product is $1$.  A change of coordinates multiplies every
nonzero entry in a fixed row by the same nonzero scalar.  Reordering the arc
only permutes rows and columns and changes signs.  The rank is therefore independent of both choices.  For $n=0$, this is the ordinary inclusion
matrix $E_{k-1,k-2}$ on the index set $X$.

\begin{theorem}[Full-row-rank theorem]\label{gl:full-rank}
Let $\mathbb F$ have characteristic $p>0$, let $n\geqslant0$, and let
$G$ be a $k$-dimensional arc over $\mathbb F$, $k\geqslant2$, of size
$m$.  If
\begin{equation}\label{gl:full-rank-range}
 k+n-1\leqslant m\leqslant
 k+n-1+\min\{k-2,(n+1)(p-2)\},
\end{equation}
then $M_G^{\uparrow n}$ has full row rank over $\mathbb F$.
\end{theorem}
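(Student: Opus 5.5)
The plan is to read a left-kernel vector of $M_G^{\uparrow n}$ as a family of determinant relations, and then apply Theorem~\ref{gl:extension-obstruction} or a direct injectivity argument based on Proposition~\ref{sc:structure}~(\ref{item:sc-kernel-d}).

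\textbf{Setting up the left kernel.} Let $(\lambda_C)_{C\in\binom X{k-1}}$ satisfy $\lambda^{\mathsf T}M_G^{\uparrow n}=0$. By the definition of $M_G^{\uparrow n}$, the column $(U,A)$ gives
\[
 \sum_{\substack{C\in\binom{X\setminus U}{k-1}\\ A\subseteq C}}\lambda_C\prod_{u\in U}\det(v_u,v_C)=0
\]
for every $U\in\binom Xn$ and every $A\in\binom{X\setminus U}{k-2}$. This is exactly \eqref{gl:extension-star} with $e=n$ and $S=A$. It remains to show $\lambda=0$, and I split into cases.

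\textbf{Case $n\geqslant1$ and $k\geqslant p+1$.} Put $r=k-p\geqslant1$ and $e=n$. Since $p-2+r=k-2$ and $(p-2)+e(p-2)=(n+1)(p-2)$, we get
\[
 k+p+e-3+\min\{r,e(p-2)\}=k+n-1+\min\{k-2,(n+1)(p-2)\}.
\]
So the range \eqref{gl:full-rank-range} coincides exactly with the size hypothesis of Theorem~\ref{gl:extension-obstruction}, which then gives $\lambda=0$.

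\textbf{Remaining cases: $n=0$, or $n\geqslant1$ with $k\leqslant p$.} Here I repeat Case~1 of the proof of Theorem~\ref{gl:extension-obstruction} directly. Fix $U\in\binom Xn$, put $Y=X\setminus U$, and let $a_U(B)=\lambda_B\prod_{u\in U}\det(v_u,v_B)$ for $B\in\binom Y{k-1}$. For $n=0$ the empty product is $1$, so $a_U=\lambda$.
\begin{enumerate}[label=(\alph*)]
\item The complemented function $a_U^\vee$ lives on $h$-subsets of $Y$, where $h=m-n-k+1\geqslant0$.
\item The relations above say $d\,a_U^\vee=0$.
\item The cardinality condition $\lvert Y\rvert\geqslant 2h+1$ is equivalent to $m-n\leqslant 2k-3$, which holds because the upper bound in \eqref{gl:full-rank-range} is at most $k+n-1+(k-2)$.
\item Injectivity of $d$ needs $h\leqslant p-2$. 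For $n=0$ this comes from the term $(n+1)(p-2)=p-2$ in the minimum. For $k\leqslant p$ it comes from $h\leqslant k-2\leqslant p-2$.
\end{enumerate}
By the injectivity statement of Proposition~\ref{sc:structure}, $a_U^\vee=0$. Now, for a given $B$, choose $U\in\binom{X\setminus B}n$; this is possible because $m\geqslant k-1+n$. By the arc property every $\det(v_u,v_B)$ with $u\in U$ is nonzero, so $\lambda_B=0$.

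\textbf{Main obstacle.} Essentially all the difficulty lies in Theorem~\ref{gl:extension-obstruction}, which is assumed here. The only care needed is the bookkeeping that identifies the two size ranges, and making sure the small-$k$ and $n=0$ regimes, which that theorem does not cover, are handled by the injective case.
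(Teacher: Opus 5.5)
Your proof is correct. For $n=0$ and for $n\geqslant1$, $k>p$ it matches the paper. The $n=0$ case is the same complemented injectivity of $E_{a+1,a}$ with $a=m-k+1$, and the $k>p$ case is the same parameter match with Theorem~\ref{gl:extension-obstruction}.

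You take a different route in the case $n\geqslant1$, $k\leqslant p$. The paper splits this case further:
\begin{itemize}
\item for $m=k+n-1$, each row $C$ has complement $X\setminus C$ of size $n$, so the columns $(X\setminus C,A)$ with $A\subseteq C$ are supported on row $C$ alone;
\item for $m\geqslant k+n$, it applies the $n=0$ case to every $(m-n)$-point subarc and iterates the rank-inheritance Lemma~\ref{gl:inheritance} $n$ times.
\end{itemize}
You instead fix $U$ once and rerun Case~1 of the proof of Theorem~\ref{gl:extension-obstruction}. You correctly observe that this argument never uses $r\geqslant1$. It needs only $0\leqslant h\leqslant p-2$ and $\lvert Y\rvert\geqslant2h+1$, which you obtain from $h\leqslant k-2\leqslant p-2$ (or from the $(n+1)(p-2)$ term when $n=0$) and from $m\leqslant 2k+n-3$.

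The two routes are dual forms of one mechanism. The paper restricts to columns whose $U$ contains a point outside a given row; you restrict a left-kernel vector to a fixed $U$ disjoint from $B$.

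Your version has three advantages. It handles all $n$ points in one step. It covers $m=k+n-1$ uniformly as the case $h=0$. The paper needs a separate argument there, because Lemma~\ref{gl:inheritance} requires the smaller family to be a $k$-dimensional arc, which means size at least $k$. Finally, you need no auxiliary lemma.

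The paper's route instead isolates rank inheritance as a standalone statement. That lemma transfers any full-row-rank input from $M^{\uparrow j}$ at size $s$ to $M^{\uparrow(j+1)}$ at size $s+1$, however that input was obtained.
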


At $m=2k-3+n$, this gives Chowdhury's conjectured range without
its finite-field size restriction.

\begin{corollary}[Chowdhury's full-row-rank conjecture]
\label{gl:chowdhury-full-rank}
Let $\mathbb F$ have characteristic $p>0$, and let $n\geqslant0$ and
$k\geqslant2$ satisfy $k\leqslant p+n(p-2)$.  For every
$k$-dimensional arc $G$ over $\mathbb F$ of size $2k-3+n$, the matrix
$M_G^{\uparrow n}$ has full row rank.
\end{corollary}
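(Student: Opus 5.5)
The plan is to obtain the corollary by specializing Theorem~\ref{gl:full-rank} to the arc size $m=2k-3+n$. No new argument is needed beyond checking that this size lies in the range~\eqref{gl:full-rank-range}. The field, $n\geqslant0$ and $k\geqslant2$ are exactly as in Theorem~\ref{gl:full-rank}. So it suffices to verify
\[
 k+n-1\leqslant 2k-3+n\leqslant k+n-1+\min\{k-2,(n+1)(p-2)\}
\]
under the hypothesis $k\leqslant p+n(p-2)$.

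The lower inequality is equivalent to $k\geqslant2$, which is assumed. For the upper inequality, subtract $k+n-1$ from both sides; it becomes
\[
 k-2\leqslant\min\{k-2,(n+1)(p-2)\},
\]
which is the same as $k-2\leqslant(n+1)(p-2)$. Rearranged, this reads $k\leqslant p+n(p-2)$, which is precisely the hypothesis of the corollary.

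This computation also handles the degenerate prime $p=2$. There the hypothesis forces $k\leqslant2$, hence $k=2$, and the required inequality $0\leqslant0$ holds. Theorem~\ref{gl:full-rank} then gives full row rank of $M_G^{\uparrow n}$ for every $k$-dimensional arc $G$ of size $2k-3+n$ over $\mathbb F$.

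There is no genuine obstacle at this stage: all the substantive work is contained in Theorem~\ref{gl:full-rank}. The only point needing care is that the minimum in~\eqref{gl:full-rank-range} is attained by the $k-2$ term exactly when $k\leqslant p+n(p-2)$. At the endpoint $m=2k-3+n$ this identifies Chowdhury's condition with ours, and no finite-field condition of the form $k\leqslant(q+4-n)/2$ enters.
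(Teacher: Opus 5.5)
Your proposal is correct and matches the paper's proof exactly: both specialize Theorem~\ref{gl:full-rank} to $m=2k-3+n$ and observe that the upper bound in~\eqref{gl:full-rank-range} reduces to $k-2\leqslant(n+1)(p-2)$, i.e.\ $k\leqslant p+n(p-2)$, while the lower bound is just $k\geqslant2$. Your extra remark on $p=2$ is harmless. Likewise, when $k=2$ and $n=0$ the required size $2k-3+n=1<k$ admits no arc, so that case is vacuous.
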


For $\mathbb F=\mathbb F_q$, this proves
\cite[Conjecture~1.9]{Chowdhury2016}: its additional condition
$k\leqslant(q+4-n)/2$ is equivalent to $2k-3+n\leqslant q+1$ and is
not needed for the rank statement itself.

\begin{lemma}[Rank inheritance]\label{gl:inheritance}
Fix $k\geqslant2$, $j\geqslant0$, and $s\geqslant k$.  Suppose
$M_G^{\uparrow j}$ has full row rank for every $k$-dimensional arc $G$
over $\mathbb F$ of size $s$.  Then $M_H^{\uparrow(j+1)}$ has full row
rank for every $k$-dimensional arc $H$ over $\mathbb F$ of size $s+1$.
\end{lemma}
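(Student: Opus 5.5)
The plan is to argue directly with left-kernel vectors: a left-kernel vector of $M_H^{\uparrow(j+1)}$ restricts, after rescaling, to a left-kernel vector of $M_G^{\uparrow j}$ for a sub-arc $G$ of size $s$. Let $H=(v_x)_{x\in X}$ with $\lvert X\rvert=s+1$, and let $\lambda=(\lambda_C)_{C\in\binom X{k-1}}$ satisfy $\lambda^{\mathsf T}M_H^{\uparrow(j+1)}=0$. Written out, for every $U\in\binom X{j+1}$ and every $A\in\binom{X\setminus U}{k-2}$ this says
\[
 \sum_{\substack{C\in\binom{X\setminus U}{k-1}\\ A\subseteq C}}
 \lambda_C\prod_{u\in U}\det(v_u,v_C)=0 .
\]
We must show $\lambda=0$.

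\emph{Step 1 (restriction).} Fix $x\in X$ and let $G=(v_y)_{y\in X\setminus\{x\}}$, with the inherited order. This is a $k$-dimensional arc of size $s\geqslant k$, so the hypothesis applies to it. The determinants $\det(v_u,v_C)$ for $C\subseteq X\setminus\{x\}$ are the same whether computed in $G$ or in $H$, since the inherited order on $C$ does not change.

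\emph{Step 2 (rescaling).} Consider only the columns $(U,A)$ of $M_H^{\uparrow(j+1)}$ with $x\in U$, and write $U=U'\cup\{x\}$ with $U'\in\binom{X\setminus\{x\}}j$. Then $A\in\binom{(X\setminus\{x\})\setminus U'}{k-2}$, and every $C$ in the sum avoids $x$. Define
\[
 \mu_C:=\lambda_C\det(v_x,v_C)\qquad\text{for }C\in\tbinom{X\setminus\{x\}}{k-1}.
\]
The displayed equation for $(U'\cup\{x\},A)$ becomes exactly the left-kernel equation of $M_G^{\uparrow j}$ at the column $(U',A)$. Since $(U',A)$ ranges over all columns of $M_G^{\uparrow j}$, we get $\mu^{\mathsf T}M_G^{\uparrow j}=0$. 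The hypothesis that $M_G^{\uparrow j}$ has full row rank then gives $\mu=0$.

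\emph{Step 3 (conclusion).} For $C\subseteq X\setminus\{x\}$ of size $k-1$, the $k$ vectors $v_x$ and $v_c$ $(c\in C)$ are linearly independent by the arc property, so $\det(v_x,v_C)\neq0$. Hence $\lambda_C=0$ for every $C$ not containing $x$. Since $\lvert X\rvert=s+1\geqslant k+1>k-1$, every $C\in\binom X{k-1}$ misses some $x\in X$. Applying Steps 1--2 to that $x$ gives $\lambda_C=0$ for all $C$, so $M_H^{\uparrow(j+1)}$ has full row rank.

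There is no substantial obstacle. The only points to check are bookkeeping ones: the inherited orders and determinant signs agree between $G$ and $H$, as noted in Step 1, and the column index sets match exactly, so that no column of $M_G^{\uparrow j}$ is missed. In any case, sign or row-rescaling discrepancies would not affect rank, as remarked after the definition of $M_G^{\uparrow n}$.
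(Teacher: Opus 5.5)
Your proof is correct and is essentially the paper's argument in dual form. The paper also restricts to the columns with $x\in U$, observes that they vanish on rows containing $x$ and equal $\operatorname{diag}(\det(v_x,v_{C'}))\,M_{H\setminus\{x\}}^{\uparrow j}$ on the remaining rows, and concludes that the column span contains the unit vector of every row $C$ with $x\notin C$; you use the same factorization to force the corresponding left-kernel coordinates to vanish, which is equivalent, and your matching of column index sets and inherited orders is accurate.
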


\begin{proof}
Write $H=(v_y)_{y\in Y}$ and fix a row $C$.  Choose
$x\in Y\setminus C$ and restrict to columns with
$x\in U$, say $U=\{x\}\cup U'$ with $\lvert U'\rvert=j$.  These columns vanish
on rows containing $x$, while on the rows avoiding $x$ they form
\[
 \operatorname{diag}(\det(v_x,v_{C'}))\,
 M_{H\setminus\{x\}}^{\uparrow j}.
\]
The diagonal factor is invertible because $H$ is an arc, and the second
factor has full row rank by hypothesis.  Their span therefore contains the unit vector indexed by $C$.  Since $C$ was arbitrary, the matrix has full
row rank.
\end{proof}

\begin{proof}[Proof of Theorem~\ref{gl:full-rank}]
We treat three cases, using the $n=0$ inclusion rank, rank inheritance,
and Theorem~\ref{gl:extension-obstruction}, respectively.

First let $n=0$.  Complementing rows and columns identifies the transpose
of $M_G^{\uparrow0}=E_{k-1,k-2}$ with
$E_{m-k+2,m-k+1}$.  Put $a=m-k+1$.  The upper bound in
\eqref{gl:full-rank-range} gives
$a\leqslant\min\{k-2,p-2\}$, hence $a<p-1$, and also
$m\leqslant2k-3$, equivalently $m\geqslant2a+1$.
Proposition~\ref{sc:structure}~(\ref{item:sc-kernel-d}) therefore makes
$E_{a+1,a}$ injective, so $M_G^{\uparrow0}$ has full row rank.

Now assume $n\geqslant1$ and $k\leqslant p$.  In this case
\eqref{gl:full-rank-range} becomes $k+n-1\leqslant m\leqslant2k+n-3$.
If $m=k+n-1$, then for each row $C$ the complement
$U=X\setminus C$ has size $n$, and any column $(U,A)$ with
$A\in\binom C{k-2}$ is supported on that row alone and has nonzero
entry there.  Thus the matrix has full row rank.  If $m\geqslant k+n$,
then $k\leqslant m-n\leqslant2k-3$, so the $n=0$ case applies to every
$(m-n)$-point subarc; iterating Lemma~\ref{gl:inheritance} $n$ times
gives the result.

Finally suppose $n\geqslant1$ and $k>p$.  Put $r=k-p$ and $e=n$.
The upper endpoint in \eqref{gl:full-rank-range} is
\[
 k+p+n-3+\min\{k-p,n(p-2)\},
\]
which is exactly the upper endpoint in
Theorem~\ref{gl:extension-obstruction}.  Any left-kernel vector of $M_G^{\uparrow n}$ satisfies the determinant
relations~\eqref{gl:extension-star}.  Theorem~\ref{gl:extension-obstruction}
therefore makes the left kernel zero, so $M_G^{\uparrow n}$ has full row
rank.
\end{proof}

\begin{proof}[Proof of Corollary~\ref{gl:chowdhury-full-rank}]
At $m=2k-3+n$, condition \eqref{gl:full-rank-range} is equivalent to
$k-2\leqslant(n+1)(p-2)$, that is,
$k\leqslant p+n(p-2)$.  Theorem~\ref{gl:full-rank} applies.
\end{proof}

\section{Finite-field deduction of the MDS bound}
\label{sec:mds-deduction}

The results of Sections~\ref{sc:section}--\ref{gl:section} hold over an
arbitrary field of positive characteristic.  We now specialize to
$\mathbb F_q$.  To prove the first interval in Theorem~\ref{thm:mds-range},
we use Lemma~\ref{mds:ball-lavrauw-kernel}: a hypothetical $(q+2)$-point
arc gives a nonzero left-kernel vector for a matrix that
Theorem~\ref{gl:full-rank} forces to have full row rank.

\begin{lemma}\label{mds:ball-lavrauw-kernel}
Let $q$ be a prime power, let $k\geqslant3$, and let
$A=(v_x)_{x\in X}$ be a $k$-dimensional arc over $\mathbb F_q$ of size
$q+2$.  Put $e:=q-2k+5$ and suppose $e\geqslant0$.  Then
$M_A^{\uparrow e}$ has a nonzero left-kernel vector; indeed, one may
choose a vector all of whose coordinates are nonzero.
\end{lemma}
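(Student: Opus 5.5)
The plan is to write down the left-kernel vector explicitly, in the spirit of the construction in the proof of \cite[Theorem~12.1]{BallLavrauw2019}, and to check the left-kernel equations by reducing them to a Lagrange-interpolation identity on the projective line. A vector $(\lambda_C)_{C\in\binom X{k-1}}$ lies in the left kernel of $M_A^{\uparrow e}$ exactly when, for every $U\in\binom Xe$ and every $S\in\binom{X\setminus U}{k-2}$,
\[
 \sum_{z\in Z}\lambda_{S\cup\{z\}}\prod_{u\in U}\det(v_u,v_{S\cup\{z\}})=0,
 \qquad Z:=X\setminus(U\cup S).
\]
The first step is the count $\lvert Z\rvert=(q+2)-e-(k-2)=k-1$. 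This is exactly where $\lvert A\rvert=q+2$ and $e=q-2k+5$ enter. The hypotheses $k\geqslant3$ and $e\geqslant0$ guarantee that such $U$ and $S$ exist and that $\lvert Z\rvert\geqslant2$.

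\emph{Candidate vector.} First try
\[
 \lambda_C:=\eta(C)\prod_{y\in X\setminus C}\det(v_y,v_C)^{-1}.
\]
Here $\eta(C)$ is a sign or normalizing scalar still to be fixed. Every coordinate is nonzero by the arc property. Substituting $C=S\cup\{z\}$, the factors indexed by $y\in U$ cancel against $\prod_{u\in U}\det(v_u,v_{S\cup\{z\}})$. Each column equation then becomes
\[
 \sum_{z\in Z}\eta(S\cup\{z\})\prod_{y\in Z\setminus\{z\}}\det(v_y,v_{S\cup\{z\}})^{-1}=0.
\]
Next, project from $W_S$. The images of $(v_z)_{z\in Z}$ form a $2$-dimensional arc, and
\[
 \det(v_y,v_{S\cup\{z\}})=\pm\,\delta\cdot[\overline v_y,\overline v_z],
\]
where $\delta$ is a constant depending on $S$ and a chosen volume form. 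The sign is the parity of moving $z$ to the front of the inherited order on $S\cup\{z\}$.

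\emph{Interpolation identity.} After dehomogenizing, $[\overline v_y,\overline v_z]$ becomes $(t_z-t_y)$ up to per-point scalars. The per-point scalars can be chosen so that the scalar attached to $y$ appears uniformly, giving a factor $\prod_{y\in Z}c_y$ times a term depending only on $z$, which I expect to handle. The equation then reduces to
\[
 \sum_{z\in Z}\prod_{y\ne z}(t_z-t_y)^{-1}=0.
\]
This holds because it is the leading coefficient of the Lagrange interpolant of the constant $1$ on $k-1\geqslant2$ nodes. One point of $Z$ may lie at infinity; that case is handled by choosing the affine chart to avoid all of $Z$, which is possible since $\lvert Z\rvert=k-1\leqslant q$.

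\emph{Main obstacle: the signs.} The positional sign of $z$ inside $S\cup\{z\}$ enters raised to the power $\lvert Z\setminus\{z\}\rvert=k-2$, and it depends on the pair $(S,z)$, not on $C$ alone. For $k$ even it disappears; for $k$ odd I must choose $\eta(C)$ to absorb it. My first attempt is $\eta(C)=\prod_{y\notin C}\operatorname{sgn}$-corrections obtained by always writing $\det(v_y,v_C)$ with the rows of $C$ in a fixed \emph{cyclic} normalization. If that fails, I would fall back on the Ball--Lavrauw normalization. That normalization replaces the naive product by the value at $C$ of the tangent-product functions, and uses Segre's lemma of tangents for the $(q+2)$-arc: each $(k-2)$-subset has $q+k-1-(q+2)=k-3$ tangent hyperplanes. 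Its role is to supply a scalar $\eta$ depending on $C$ alone that makes all column sums consistent, including the Wilson-type $-1$ factors that arise in odd characteristic. Verifying that this normalization produces a well-defined $\lambda_C$ independent of $S$ is the step I expect to carry the real content. The nonvanishing of all coordinates will then be automatic from the arc property.
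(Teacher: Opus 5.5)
Your main line cannot work, and it breaks at the step you defer (``a term depending only on $z$, which I expect to handle''). Write $\overline v_x=c_x(1,t_x)$ in a chart of $V/W_S$. Then $[\overline v_y,\overline v_z]=c_yc_z(t_z-t_y)$, so
\[
 \prod_{y\in Z\setminus\{z\}}[\overline v_y,\overline v_z]^{-1}
 =c_z^{-(k-3)}\Bigl(\prod_{y\in Z}c_y\Bigr)^{-1}
 \prod_{y\in Z\setminus\{z\}}(t_z-t_y)^{-1}.
\]
The weight $c_z^{-(k-3)}$ is the whole difficulty, not a normalization nuisance.

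By Lagrange interpolation on the $k-1$ nodes of $Z$, the column sum vanishes exactly when the numbers $\pm\eta(S\cup\{z\})$, with your positional sign, are, up to a common factor, the values at $\overline v_z$ of a binary form of degree $k-3$ on $V/W_S$. Since $Z$ runs over all $(k-1)$-subsets of $X\setminus S$, a single form $F_S$ must then serve for every $z\notin S$. No sign or cyclic normalization does this for $k\geqslant4$.

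For $k=3$ the weight disappears, but the signs still cannot be absorbed in odd characteristic. The condition is that $\eta(\{s,z\})\tau(s,z)=g(s)$ is independent of $z$, where $\tau(s,z)=\pm1$ records whether $s<z$. Since $\tau(z,s)=-\tau(s,z)$, this gives $g(s)=-g(z)$ for all $s\ne z$, and any three points force $2g=0$.

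The structural reason is that your argument uses $\lvert A\rvert=q+2$ and the field $\mathbb F_q$ only through the count $\lvert Z\rvert=k-1$. If it worked, it would give a nonzero left-kernel vector of $M_G^{\uparrow e}$ for every arc $G$ of size $2k-3+e$ over every field of characteristic $p$. That contradicts Corollary~\ref{gl:chowdhury-full-rank} whenever $k\leqslant p+e(p-2)$, which is exactly the range where this lemma is applied.

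The missing idea is the one you list as a fallback, and it carries all the content. Through $W_S$ pass exactly $(q+1)-(q-k+4)=k-3$ tangent hyperplanes, and $F_S$ must be built from the product of these tangent functionals. Segre's lemma of tangents, in Ball's coordinate-free form, is what allows the $F_S$ to be scaled so that $\pm F_S(v_z)$ depends only on $C=S\cup\{z\}$. This is where finiteness of the field enters, via $\prod_{a\in\mathbb F_q^*}a=-1$. With that choice, each column sum is the top coefficient of the degree-$(k-2)$ interpolant of a polynomial of degree at most $k-3$, hence zero. All coordinates are nonzero because tangents miss the arc.

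The paper does not redo this. It invokes the construction in the proof of Ball--Lavrauw's Theorem~12.1 with $t=k-3$ and $n=e$, noting that $\lvert A\rvert=k+t+e=q+2=q+k-1-t$, so $A$ is its own containing arc. It then only does bookkeeping:
\begin{itemize}
\item an all-nonzero left-kernel vector survives $v_x\mapsto c_xv_x$, since row $B$ scales by $(\prod_{b\in B}c_b)^e$ and column $(U,S)$ by $\prod_{u\in U}c_u$;
\item the $P_e$-equation indexed by $(S,U)$ is the pairing with column $(U,S)$ of $M_A^{\uparrow e}$, and terms with $B\cap U\ne\varnothing$ vanish.
\end{itemize}
As written, your proposal leaves this decisive step unverified, and its primary construction reduces the column equations to the wrong identity.
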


\begin{proof}
Set $t=k-3$ and apply the construction in the proof of
\cite[Theorem~12.1]{BallLavrauw2019} with $G=A$ and $n=e$.
The required numerical identities are
$\lvert G\rvert=q+2=k+t+e$ and $q+k-1-t=q+2$, so the containing arc can
be taken to be $A$.  The construction gives a solution
$(\lambda_B)_{B\in\binom X{k-1}}$ of the system $P_e\lambda=0$ defined
there, with every coordinate nonzero.  Changing representatives
$v_x\mapsto c_xv_x$ only diagonally rescales $M_A^{\uparrow e}$: row $B$
by $(\prod_{b\in B}c_b)^e$ and column $(U,S)$ by $\prod_{u\in U}c_u$.
The existence of an all-nonzero left-kernel vector is therefore invariant
under the choice of representatives.  After the inverse row rescaling, we
again denote the vector by $\lambda$.  The $P_e$-row indexed by $(S,U)$ is precisely the pairing of $\lambda$
with the column $(U,S)$ of $M_A^{\uparrow e}$; terms with
$B\cap U\ne\varnothing$ vanish because the corresponding determinant
product is zero.  Thus $\lambda$ is a nonzero left-kernel vector.
\end{proof}

\smallskip\noindent The choice $e=q-2k+5$ gives
\begin{equation}\label{mds:parameter-bridge}
 q+2=2k-3+e,
 \qquad
 k\leqslant B(p,q)
 \quad\Longleftrightarrow\quad
 k\leqslant p+e(p-2).
\end{equation}
For $p\geqslant3$ and $k\geqslant3$, the inequality on the right also
forces $e\geqslant0$.  Thus, for $k\geqslant3$, the condition defining the
first interval in Theorem~\ref{thm:mds-range} is precisely the condition in
Corollary~\ref{gl:chowdhury-full-rank} at $\lvert A\rvert=2k-3+e$.

\begin{proof}[Proof of Theorem~\ref{thm:mds-range}]
The standard $q+1$ vectors
$(1,t,t^2,\ldots,t^{k-1})$, $t\in\mathbb F_q$, together with
$(0,\ldots,0,1)$ form a $k$-dimensional arc by the Vandermonde
determinant.  It remains to prove the upper bound.

First suppose $2\leqslant k\leqslant B(p,q)$.  For $k=2$, an arc
contains at most one vector from each one-dimensional subspace of
$\mathbb F_q^2$, hence has size at most $q+1$.  Let $k\geqslant3$ and
put $e=q-2k+5$.  If a $(q+2)$-point arc $A$ existed, then
Lemma~\ref{mds:ball-lavrauw-kernel} would give a nonzero left-kernel
vector of $M_A^{\uparrow e}$.  By \eqref{mds:parameter-bridge}, however,
$\lvert A\rvert=2k-3+e$ and $k\leqslant p+e(p-2)$, so
Theorem~\ref{gl:full-rank} gives full row rank, a contradiction.  This
proves the first interval in~\eqref{intro:main-range}.

Finally suppose $q+2-B(p,q)\leqslant k\leqslant q$ and that a
$(q+2)$-point arc exists in dimension $k$.  Viewing the arc as the columns
of a generator matrix, the classical duality of linear MDS codes
\cite[p.~117]{Singleton1964} gives an arc of the same size in dimension
$q+2-k$.  Here $2\leqslant q+2-k\leqslant B(p,q)$, contradicting the
first interval just proved.  This proves the second interval in~\eqref{intro:main-range}.
\end{proof}

\section{Applications to normal rational curves and projective Reed--Solomon codes}
\label{sec:nrc-prs-applications}

We begin by recalling the conventions used in the two corollaries.  Write
$\mathrm{PG}(N,q)$ for the $N$-dimensional projective space over
$\mathbb F_q$, whose points are the one-dimensional subspaces of
$\mathbb F_q^{N+1}$.  For $2\leqslant N\leqslant q-2$, a
\emph{normal rational curve} in $\mathrm{PG}(N,q)$ is a $(q+1)$-arc
projectively equivalent to the points represented by
$(1,t,\ldots,t^N)$, $t\in\mathbb F_q$, together with $(0,\ldots,0,1)$
\cite[Sections~2--3]{BallLavrauw2019}.  It is \emph{complete} if no
further point can be adjoined while preserving the arc property.

For $2\leqslant k\leqslant q-2$, define the
\emph{projective Reed--Solomon code} by
\[
 \operatorname{PRS}_q(k):=
 \bigl\{(f(a))_{a\in\mathbb F_q\cup\{\infty\}}:
 f\in\mathbb F_q[X],\ \deg f<k\bigr\}
 \subseteq \mathbb F_q^{q+1},
\]
where $f(\infty)$ is the coefficient of $X^{k-1}$.  This is a linear code
of length $q+1$ and dimension $k$.  For a code
$C\subseteq\mathbb F_q^n$, write
\[
 \rho(C):=\max_{u\in\mathbb F_q^n}\min_{c\in C}d_H(u,c)
\]
for its covering radius, where $d_H(u,c)$ is the number of coordinates in
which $u$ and $c$ differ.  D{\"u}r \cite{Dur1994} proved that
$\rho(\operatorname{PRS}_q(k))=q-k$ if and only if the degree-$(q-k)$
normal rational curve in $\mathrm{PG}(q-k,q)$ is complete; see also
\cite[Lemma~II.4]{ZhangWanKaipa2020}.

For odd $q$, following
\cite[Section~1]{BartoliDavydovMarcuginiPambianco2018}, call a proper
subset $S$ of a nondegenerate conic
$\mathcal C\subseteq\mathrm{PG}(2,q)$ \emph{almost complete} if it can be
extended to a larger arc only by points of $\mathcal C\setminus S$; let
$t(q)$ be the minimum size of such a subset.

The proof of Corollary~\ref{cor:nrc-complete} joins a low-dimensional
completeness range to the high-dimensional range supplied by
Theorem~\ref{thm:mds-range}.  For odd nonprime $q\ne27$, Storme's theorem
controls the lower projective dimensions in terms of $t(q)$, while
Theorem~\ref{thm:mds-range} controls the upper dimensions;
Lemma~\ref{lem:tq-splice} shows that the two ranges meet without a gap.
Prime fields are already covered by the prime-field MDS theorem, and the case
$q=27$ is handled separately by the known bounds cited below.

By Storme's theorem \cite[Theorem~3.1]{Storme1992}, if an almost complete
subset has size $s$, then every normal rational curve in
$\mathrm{PG}(N,q)$ is complete whenever $2<N<q-s+2$.  We also use the upper
bound and the computer-search data of Bartoli, Davydov, Marcugini, and
Pambianco
\cite[Theorem~4.12 and Tables~2--3]{BartoliDavydovMarcuginiPambianco2018}.

\begin{lemma}\label{lem:tq-splice}
Let $q=p^h$ be an odd nonprime prime power with $q\ne27$.  Then
\[
 t(q)\leqslant B(p,q)+1.
\]
\end{lemma}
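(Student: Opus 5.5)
The plan is to compare a uniform lower bound for $B(p,q)$ with the known upper bounds for $t(q)$. An asymptotic bound from \cite[Theorem~4.12]{BartoliDavydovMarcuginiPambianco2018} will handle all large $q$, and the computer-search data in \cite[Tables~2--3]{BartoliDavydovMarcuginiPambianco2018} will handle the finitely many small odd nonprime $q$ left over.

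\emph{Step 1: reduce to $p=3$.} The function $p\mapsto\frac{(p-2)q+6p-10}{2p-3}$ is nondecreasing in $p\geqslant3$ for every $q\geqslant1$. Its derivative in $p$ has numerator $(2p-3)(q+6)-2\bigl((p-2)q+6p-10\bigr)=q+2>0$. Hence
\[
 B(p,q)+1\geqslant\frac{(p-2)q+6p-10}{2p-3}
 \geqslant\frac{q+8}{3}
\]
for every odd $p$. It therefore suffices to show $t(q)\leqslant(q+8)/3$, or the sharper $t(q)\leqslant B(p,q)+1$ with the true $p$ when a table value is close.

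\emph{Step 2: large $q$.} I expect \cite[Theorem~4.12]{BartoliDavydovMarcuginiPambianco2018} to give an upper bound on $t(q)$ of order $\sqrt{q\log q}$. I recall it as roughly $t(q)<\sqrt{q(3\ln q+\ln\ln q+\ln3)}+\sqrt q/(3\ln q)+3$, though the exact form and constants should be checked against the source. Call this bound $f(q)$. Then $(q+8)/3-f(q)$ is eventually increasing, and I will find an explicit $Q_0$ beyond which $f(q)\leqslant(q+8)/3$. For instance, one can square after isolating the main radical, and use monotonicity of $\ln q/q$ for $q\geqslant3$. I expect $Q_0$ to be a few hundred or a few thousand.

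\emph{Step 3: small $q$.} For the odd nonprime prime powers $q<Q_0$ other than $27$, namely $9,25,49,81,121,125,169,243,289,343,\dots$, read $t(q)$ or an upper bound for it from \cite[Tables~2--3]{BartoliDavydovMarcuginiPambianco2018}. Compare each entry directly with $B(p,q)+1$ computed using the actual characteristic; this helps for $p\geqslant5$, where $B$ is close to $3q/7$ or larger. The exclusion of $q=27$ suggests that this is the one small case where the tabulated value exceeds $B(3,27)+1=12$. That is consistent with the lemma's hypothesis.

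The main obstacle is the bookkeeping at the junction between Steps~2 and~3. One must make sure the tables really extend up to the explicit $Q_0$ produced by the analytic bound. If they do not, I would first try replacing $(q+8)/3$ by the exact $B(p,q)+1$ for $p\geqslant5$, and use the slightly weaker but cleaner estimate $\ln\ln q+\ln 3\leqslant\ln q$ to lower $Q_0$. The remaining powers of $3$, such as $3^5$ and $3^6$, would then be checked individually.
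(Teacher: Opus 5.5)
Your proposal is correct and follows the paper's route. Your monotonicity-in-$p$ computation is equivalent to the paper's identity $\frac{(p-2)q+6p-10}{2p-3}-\frac{q+8}{3}=\frac{(p-3)(q+2)}{3(2p-3)}\geqslant0$, and, as in the paper, large $q$ is handled by Theorem~4.12 of Bartoli--Davydov--Marcugini--Pambianco and the small cases by their Tables~2--3. On the junction you worried about: the paper uses that theorem in the form $t(q)<1.835\sqrt{q\ln q}$, together with $\ln q<q/32$ for $q\geqslant243$, so $Q_0=243$ and only $q\in\{9,25,49,81,121,125,169\}$ need the tables.
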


\begin{proof}
The comparison with $B(p,q)$ starts from
\begin{equation}\label{nrc:B-third}
 \frac{(p-2)q+6p-10}{2p-3}-\frac{q+8}{3}
 =\frac{(p-3)(q+2)}{3(2p-3)}\geqslant0.
\end{equation}
First suppose $q\geqslant243$.  The function $x/32-\ln x$ is increasing
for $x\geqslant243$ and positive at $x=243$, so $\ln q<q/32$.
By \cite[Theorem~4.12]{BartoliDavydovMarcuginiPambianco2018},
\[
 t(q)<1.835\sqrt{q\ln q}
 <\frac{1.835}{\sqrt{32}}q
 <\frac q3
 <\frac{q+8}{3}
 \leqslant\frac{(p-2)q+6p-10}{2p-3}.
\]
Since $t(q)$ is an integer, $t(q)\leqslant B(p,q)$.

It remains to consider the odd nonprime prime powers below $243$, with
$q\ne27$:
\[
 q\in\{9,25,49,81,121,125,169\}.
\]
Tables~2--3 of \cite{BartoliDavydovMarcuginiPambianco2018} provide
almost complete subsets of the following sizes $s(q)$.
\[
\begin{array}{c|rrrrrrr}
q        &9&25&49&81&121&125&169\\ \hline
s(q)     &6&12&18&25&33&35&41\\
B(p,q)+1 &6&14&26&30&61&57&84
\end{array}
\]
Therefore $t(q)\leqslant s(q)\leqslant B(p,q)+1$ in every remaining
case.
\end{proof}

\begin{proof}[Proof of Corollary~\ref{cor:nrc-complete}]
If $q$ is prime, the result follows from the prime-field case of the MDS
conjecture \cite[Theorem~1.7]{Ball2012}.  For $q=27$, every arc in
$\mathrm{PG}(N,27)$ has at most $28$ points for
$2\leqslant N\leqslant25$; see \cite{ChaoKaneta2001} and the summary
\cite[Table~3.7]{HirschfeldStorme2001}.  A normal rational curve already
has $28$ points, so it is complete in each of these dimensions.

Now assume that $q=p^h$ is nonprime and $q\ne27$.
Equation~\eqref{nrc:B-third} and $q\geqslant9$ give $B(p,q)\geqslant5$.
Theorem~\ref{thm:mds-range}, applied in vector dimension $3$, therefore
proves completeness for $N=2$.

Let $3\leqslant N\leqslant q-2$.  If $N<q+2-t(q)$, Storme's theorem
\cite[Theorem~3.1]{Storme1992} gives completeness.  Otherwise,
\[
 N\geqslant q+2-t(q)
 \geqslant q+1-B(p,q)
\]
by Lemma~\ref{lem:tq-splice}.  The corresponding vector dimension
$k=N+1$ therefore satisfies $k\geqslant q+2-B(p,q)$.  Theorem~\ref{thm:mds-range}
then excludes a $(q+2)$-arc in this dimension, so the normal rational curve
is complete.
\end{proof}

\begin{proof}[Proof of Corollary~\ref{cor:prs-covering-radius}]
By D{\"u}r's result \cite{Dur1994} (see also
\cite[Lemma~II.4]{ZhangWanKaipa2020}), for $2\leqslant k\leqslant q-2$
the covering radius of $\operatorname{PRS}_q(k)$ is $q-k$ exactly when the
degree-$(q-k)$ normal rational curve in $\mathrm{PG}(q-k,q)$ is complete.
Since $2\leqslant q-k\leqslant q-2$,
Corollary~\ref{cor:nrc-complete} applies.
\end{proof}

\noindent\textbf{AI disclosure.}
OpenAI's ChatGPT (GPT-5.6 Sol and GPT-6 Astra) was used to assist with
exploratory proof development and computations, literature searches, and
English-language polishing.  The author takes full responsibility for the
correctness and content of this paper.

\providecommand{\bysame}{\leavevmode\hbox to3em{\hrulefill}\thinspace}
\providecommand{\MR}{\relax\ifhmode\unskip\space\fi MR }
% \MRhref is called by the amsart/book/proc definition of \MR.
\providecommand{\MRhref}[2]{%
  \href{http://www.ams.org/mathscinet-getitem?mr=#1}{#2}
}
\providecommand{\href}[2]{#2}

\end{document}